\title{}
	\author{\normalsize\textsc{Denis Kojevnikov}\textsuperscript{\textasteriskcentered}}
	\address{\textsuperscript{\textasteriskcentered}\normalfont{Corresponding author. Department of Econometrics and Operations Research, Tilburg University, The Netherlands. Email: \href{mailto:D.Kojevnikov@tilburguniversity.edu}{D.Kojevnikov@tilburguniversity.edu}.}}
	\author{\normalsize\textsc{Vadim Marmer}\textsuperscript{\S}}
	\author{\normalsize\textsc{Kyungchul Song}\textsuperscript{\S}}
	\address{\noindent\textsuperscript{\S}\normalfont{Vancouver School of Economics, University of British Columbia, Canada.}}
\begin{document}

\date{\today }

\begin{center}
	\LARGE \textsc{Limit Theorems for Network Dependent Random Variables}
\end{center}


\thanks{We thank Bulat Gafarov, our co-editor Elie Tamer, the associate editor, and three anonymous referees for helpful comments. Marmer and Song gratefully acknowledge the financial support of the Social Sciences and Humanities Research Council of Canada. This research was enabled in part by support provided by Compute Canada (\href{http://www.computecanada.ca}{www.computecanada.ca}).}

%
%

\begin{bibunit}[elsart-harv]
\begin{abstract}{\footnotesize
		This paper is concerned with cross-sectional dependence arising because observations are interconnected through an observed network. Following \cite{Doukhan/Louhichi:99}, we measure the strength of dependence by covariances of nonlinearly transformed variables. We provide a law of large numbers and central limit theorem for network dependent variables. We also provide a method of calculating standard errors robust to general forms of network dependence. For that purpose, we rely on a network heteroskedasticity and autocorrelation  consistent (HAC) variance estimator, and show its consistency. The results rely on conditions characterized by tradeoffs between the rate of decay of dependence across a network and network's denseness. Our approach can accommodate data generated by network formation models, random fields on graphs, conditional dependency graphs, and large functional-causal systems of equations.
	}

	\bigskip

	{\footnotesize \noindent \textsc{Key words.} Network Dependence; Random Fields; Central Limit Theorem; Networks; Law of Large Numbers; Cross-Sectional Dependence; Spatial Processes}

	{\footnotesize \noindent \textsc{JEL Classification: C12, C21, C31}}
\end{abstract}

\begingroup
\let\MakeUppercase\relax 
\maketitle
\endgroup

\section{Introduction}

In this paper, we consider cross-sectional dependence arising because of observations' interdependence in a network. Datasets exhibiting such forms of dependence are common in economics and other disciplines, and the results derived in this paper will allow the researcher to formally argue the consistency and asymptotic normality of estimators with network dependent data. Moreover, to facilitate inference with network dependent data, we derive conditions for the consistency of the network heteroskedasticity and  autocorrelation consistent (HAC) robust variance estimator. The estimator can be used for construction of standard errors robust to general forms of network dependence.

The main results of this paper are three-fold: the Law of Large Numbers (LLN), the Central Limit Theorem (CLT), and the consistency of HAC estimators.\footnote{
	\cite{Conley:99:JOE} proposed a HAC estimator in a spatial random field model. See \cite{Kelejian/Prucha:07} and \cite{Kim/Sun:11:JOE} for spatial HAC estimators. \cite{Leung:19:WP} considers spatial and network HAC estimators in models of discrete choice with social interactions. \cite{Kojevnikov:19:WP} develops bootstrap-based alternatives to network HAC estimation.
}
We provide a unified condition for the LLN and CLT when the network is formed in a generic way such that the links are formed independently conditional on observed or unobserved variables. This includes various network formation models proposed and used in the literature. Our condition reveals an explicit tradeoff between the extensiveness of the cross-sectional dependence and the denseness of the network permitted. The condition is also simple, as it involves only the average of conditional link formation probabilities. The paper also provides generic high level conditions that can accommodate random fields on graphs, conditional dependency graphs, and large functional-causal systems of equations.

To model network dependence, we adopt the approach of $\psi$-dependence proposed by \cite{Doukhan/Louhichi:99}, and extend the notion to accommodate common shocks. The notion of $\psi$-dependence is simple and intuitive. Roughly speaking, $\psi$-dependence measures the strength of dependence between two sets of random variables in terms of the covariance between nonlinear functions of random variables.

A primary benefit of modeling through $\psi$-dependence comes when dependence among the variables is produced through a system of causal equations in which sharing of exogenous shocks creates cross-sectional dependence among the variables of interest. We give four broad classes of such examples, including those where the random variables are generated from primitive random variables through a nonlinear transform. These classes cover many sub-examples that are used in statistics and econometrics. In such examples, a traditional approach of modeling through various mixing properties is cumbersome, because it is hard to find primitive conditions that guarantee the mixing properties for the variables of interest. On the other hand, one often can write the covariance bounds of those variables in terms of the primitive exogenous shocks using the causal equations. This flexibility of the $\psi$-dependence notion, however, carries a cost. The $\psi$-dependence of a nonlinearly transformed $\psi$-dependent random variables is not necessarily ensured, if the nonlinear transform does not belong to the class in the original definition. This paper provides several auxiliary results for such situations.

Network models have been used to capture a complex form of interdependence among cross-sectional observations. These observations may represent actions by people or firms, or outcomes from industry sectors, assets or products. Random fields indexed by points in a lattice in a Euclidean space have often been adopted as a model of spatial dependence in econometrics and statistics. \cite{Conley:99:JOE} proposed using random field modeling to specify the cross-sectional dependence of observations in the context of GMM estimation. More recent contributions include \cite{Jenish/Prucha:09:JOE} and \cite{Jenish/Prucha:12:JOE}. See \cite{Jia:08:Eca} for an application in entry decisions in retail markets, and \cite{Boucher/Mourifie:17:EJ} for an inference problem for a network formation model. For limit theorems for such random fields in statistics, see \cite{Comets/Janzura:98:JAP} and the references therein. 

When the dependence ordering arises from geographic distances or their analogues, using such random fields appears natural. However, the dependence ordering often stems from pairwise relations among the sample units, which can be viewed as a form of a network. To apply the random field modeling, one would first need to transform these relations into a random field on a lattice in a Euclidean space using methods such as multidimensional scaling.\footnote{
	See, e.g., \cite{Borg/Groenen:05:MDS}. See also footnote 16 of \cite{Conley:99:JOE} on page 15.
}

However, embedding of a network into a lattice can distort the dependence ordering. In fact, we show in Section \ref{subsec:Network_Topology} that network dependence is not necessarily embedded as a random field indexed by a lattice in the Euclidean space with a fixed dimension, when the network has a maximum clique whose size increases as the network grows. Networks with a growing maximum clique size often arise from those with a power-law degree distribution and high clustering coefficients. These features are typically shared by social networks that are observed in practice. In this paper, we directly use a network as a model of dependence ordering, so that such an embedding is not required when dependence ordering comes from pairwise relations.

Associating dependence patterns with networks has been previously used in the literature. \cite{Stein:72:BerkeleySymp} introduced a notion of dependency graphs in studying the normal approximation of a sum of random variables which are allowed to be dependent only when they are adjacent in a given network. See also \cite{Janson:88:AP}, \cite{Baldi/Rinott:89:AP}, \cite{Chen/Shao:04:AP}, and \cite{Rinott/Rotar:96:JMA} for various results for normal approximation for variables with related local dependence structures, and \cite{Aronow&Samii:17}, \cite{Leung:20:ReStat}, and \cite{Song:17:ReStat} for recent applications of dependency graphs to network data. Modeling based on dependency graphs has drawbacks. In particular, it requires independence between variables that are not adjacent in the network, and hence is not adequate to model more extensive forms of dependence.

A closely related strand of the literature studies various models of Markov random fields and spatial autoregressive models. Markov random fields constitute an alternative class of models of dependence which imposes conditional independence restrictions based on the network structure.\footnote{See, e.g., \cite{Lauritzen:96:GraphicalModels} and \cite{Pearl:09:Causality}. Recently, \cite{Lee/Song:18:Bernoulli} established a central limit theorem using a more general local dependence notion that encompasses both dependency graphs and a class of Markov random fields. See also Chapter 19 of \cite{Murphy:12:ML} for applications in the literature of machine learning.} Spatial autoregressive models specify cross-sectional dependence through the weight matrix in linear simultaneous equations, and have been extensively studied in econometrics. See, among others, \cite{Lee:04:Eca} and \cite{Lee/Liu/Lin:10:EJ} and references therein. Also see \cite{Gaetan/Guyon:10:SpatialStatistics} for an extensive review of spatial modeling and limit theorems.

There is a line of recent research that pursues a general form of limit theorems in a situation where the dependence structure itself is generated through a stochastic mechanism. \cite{Kuersteiner/Pucha:15:WP} embed dependence along a network as a martingale model. Similarly, \cite{Kuersteiner:19:WP} adopted a conditional spatial mixingale modeling of cross-sectional dependence, and established limit theorems which accommodate various network formation models. \cite{Leung/Moon:19:WPb} focus on the normal approximation of network statistics when the network is formed according to a generalized version of a random geometric graph.

In contrast to the dependency graph modeling, and similarly to the recent strand of literature mentioned above, our approach permits dependence between random variables that are only indirectly linked through intermediary variables. In fact a dependency graph model can be viewed as a special case of our network dependence modeling. The approach in this paper is also distinct from Markov random fields modeling. Markov random fields are based on conditional independence restrictions among the variables. While limit theorems on Markov random fields rely on independence restrictions that come from conditioning on certain random variables, our modeling expresses the degree of stochastic dependence in terms of the distance in the network.

The rest of the paper is organized as follows. In Section \ref{sec:NetDep} of the paper, we define network dependence of stochastic processes and provide examples. In particular, Section \ref{sec:NF} describes a class of network formation models that our approach can accommodate. Condition NF in that section describes the restrictions one needs to impose in order to apply our results to data generated by network formation models. The condition ties link formation probabilities with network dependence patterns, and requires that dependence between nodes decays with network distance at a rate depending on the link formation probabilities.

In Section \ref{sec:LimThms}, we present the main results of the paper: the LLN and CLT. Condition ND in that section provides a unifying high level assumption for the asymptotic results. It also demonstrates the tradeoffs between how fast dependence decays with the network distance and network's denseness. Lemma \ref{lemma:network_form} establishes the connection between Conditions ND and NF.

Section \ref{sec:HAC} is devoted to deriving conditions for the consistency of HAC estimators. As with time series, the consistency of HAC estimators requires truncation of network autocovariances corresponding to large distances. The amount of truncation is determined by a bandwidth parameter. Our proposed bandwidth selection rule is given in equation \eqref{eq:hac_const}. While in the time series case the bandwidth parameter is typically proportional to a fractional power of the sample size, it is logarithmic in our case. More aggressive truncation (than in the time series case) is due to the fact that the time series dependence structure can be viewed as a sparse network with a fixed number of neighbors at any distance. However, in our case the number of neighbors at any distance can grow with the sample size, which can result in fast accumulation of errors in HAC estimation.

Using Monte Carlo simulations, we evaluate the finite sample performance of our HAC estimator in Section \ref{sec:MonteCarlo}. We find that HAC-based inference is accurate even in relatively dense networks. At the same, the performance of HAC-based confidence intervals can deteriorate with networks' denseness and the amount of network dependence.

The Supplemental Note to this paper contains additional proofs and simulation results.

\section{Network Dependence and Examples}\label{sec:NetDep}

\subsection{Network Topology and a Lattice in a Euclidean Space}
\label{subsec:Network_Topology}
Let $N_n=\{1,2,\ldots,n\}$ be the set of cross-sectional unit indices. Modeling cross-sectional dependence usually assumes a certain metric on $N_n$. 
In some examples, this distance can be motivated by geographic distances or economic distances measured in terms of economic outcomes. This paper focuses on the pattern of cross-sectional dependence that is shaped along a given network.

Suppose that we observe an \textit{undirected} network $G_n$ on $N_n$, where $G_n=(N_n,E_n)$, and $E_n\subseteq \{\{i,j\}: i,j \in N_n, i \ne j\}$ denotes the set of links. For $i,j\in N_n $, we define $d_n(i,j)$ to be the distance between $i$ and $j$ in $G_n$, i.e., the length of the shortest path between nodes $i$ and $j$ given $G_n$. The distance $d_n$ defines a metric on the set $N_n$. We refer to \textit{network dependence} as a stochastic dependence pattern of random variables governed by the distance $d_n$ in $G_n$.

Let $N_n(i;s)$ denote the set of the nodes that are within the distance $s$ from node $i$, and let $N_n^\partial(i;s)$ denote the set of the nodes that are exactly at the distance $s$ from
node $i$. That is,
\begin{align}
\label{eq:nodes_sets}
N_n(i;s)=\left\{ j\in N_n:d_n(i,j) \le s\right\} \qtextq{and} N_n^\partial (i;s)=\left\{ j\in N_n:d_n(i,j)=s\right\}.
\end{align}

Our first focus is on the relation between modeling dependence through network topology and that through random fields indexed by the elements of a finite subset of a metric space $(\mathcal{X},d_{\mathcal{X}})$. We denote the equilateral dimension of $\mathcal{X}$, i.e., the maximum number of equidistant points in $\mathcal{X}$ with respect to the distance $d_{\mathcal{X}}$, as $e(\mathcal{X})$. The main question here is whether any given connected network is embeddable in $\mathcal{X}$.\footnote{
	A network/graph is connected if there is a path between every pair of nodes.
}
The following definition makes the notion of embedding precise.
\begin{definition}
An isometric embedding of a network $G_n = (N_n, E_n)$ into a metric space $(\mathcal{X},d_{\mathcal{X}})$ is an injective map $b:N_n \to \mathcal{X}$ such that for all $i,j\in N_n$
\begin{equation}
\label{eq:equiv}
	d_{\mathcal{X}}(b(i),b(j))=d_n(i,j).
\end{equation}
\end{definition}

When such an isometry exists, it means that modeling cross-sectional dependence using a network topology can be viewed as a special case of modeling a random field on a finite subset of $\mathcal{X}$. The following result shows that this is not always possible when the clique number $\omega(G_n)$ of $G_n$, i.e., the number of nodes in a maximum clique in $G_n$, is large enough.\footnote{
	A clique of a graph $G$ is a subset of nodes such that every two distinct nodes are adjacent.
}

\begin{prop}
\label{prop:net_emb}
	A connected network $G_n$ is isometrically embeddable into a metric space $(\mathcal{X},d_{\mathcal{X}})$ only if $\omega(G_n)\le e(\mathcal{X})$.
\end{prop}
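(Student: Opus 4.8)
The plan is to extract a maximum clique of $G_n$, transport it through the embedding, and observe that its image must be a set of pairwise-equidistant points of $\mathcal{X}$, whose size is by definition bounded by $e(\mathcal{X})$.

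First I would fix an isometric embedding $b:N_n\to\mathcal{X}$, which exists by hypothesis, and choose a clique $C\subseteq N_n$ attaining the clique number, so that $|C|=\omega(G_n)$. The key elementary observation is that any two distinct nodes $i,j\in C$ are adjacent in $G_n$, and hence $d_n(i,j)=1$: the link $\{i,j\}$ is a path of length one, and in a simple graph there is no shorter path between distinct nodes. Applying the defining identity \eqref{eq:equiv} of the isometric embedding then yields $d_{\mathcal{X}}(b(i),b(j))=1$ for every pair of distinct $i,j\in C$.

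Next, since $b$ is injective, the image $b(C)=\{b(i):i\in C\}$ consists of exactly $\omega(G_n)$ distinct points of $\mathcal{X}$, and by the previous step any two of them lie at distance $1$. Thus $b(C)$ is a collection of $\omega(G_n)$ equidistant points in $(\mathcal{X},d_{\mathcal{X}})$. By the definition of the equilateral dimension $e(\mathcal{X})$ as the maximum number of equidistant points of $\mathcal{X}$, we conclude $\omega(G_n)=|b(C)|\le e(\mathcal{X})$, which is exactly the asserted implication (embeddability implies $\omega(G_n)\le e(\mathcal{X})$).

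There is essentially no obstacle here: the statement is a one-line consequence of the definitions once one notices that the images of a clique are forced to be equidistant. The only point requiring a moment's care is the reduction of adjacency to unit distance, i.e.\ that $d_n(i,j)=1$ precisely when the distinct nodes $i,j$ are linked, which is immediate from the definition of $d_n$ as shortest-path length. Connectedness of $G_n$ is used only to guarantee that $d_n$ is finite and hence a genuine metric on all of $N_n$; it plays no role in the clique argument itself.
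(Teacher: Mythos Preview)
Your proof is correct and follows exactly the same idea as the paper's: pick a maximum clique, note that its vertices are pairwise at graph distance $1$, and observe that any isometric image would then be an equidistant set of size $\omega(G_n)$ in $\mathcal{X}$, forcing $\omega(G_n)\le e(\mathcal{X})$. The paper states this in a single sentence; you have simply spelled out the details.
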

\begin{proof}
Suppose that $C$ is a maximum clique of $G_n$. It is obvious that there is no isometry between $C$ and $\mathcal{X}$ when $|C|>e(\mathcal{X})$.
\end{proof}

Proposition \ref{prop:net_emb} gives only a necessary condition for isometric embedding. Consider, for example, $\R^k$ equipped with the Euclidean distance, which has the equilateral dimension of $k+1$. Figure \ref{fig:net_emb} provides an example of a network with the maximum clique size of two that cannot be embedded into the Euclidean $\R^2$ space, which has the equilateral distance of three. Figure \ref{fig:net_emb_L_inf} provides an example with a non-Euclidean space. It shows a network with the maximal clique size of four that cannot be imbedded into $\R^2$ equipped with the $L_\infty$ distance, which has the equilateral dimension of four.
\begin{figure}[t]
	\hfill
	\begin{subfigure}{0.4\textwidth}
		\centering
		\vspace{.51125cm}
		\includegraphics{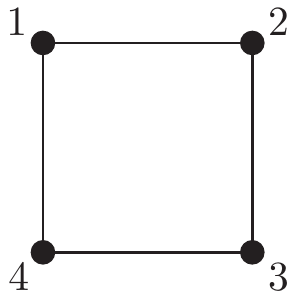}
		\vspace{.51125cm}
		\subcaption{}
	\end{subfigure}
	\hfill
	\begin{subfigure}{0.4\textwidth}
		\centering
		\includegraphics[scale=0.9]{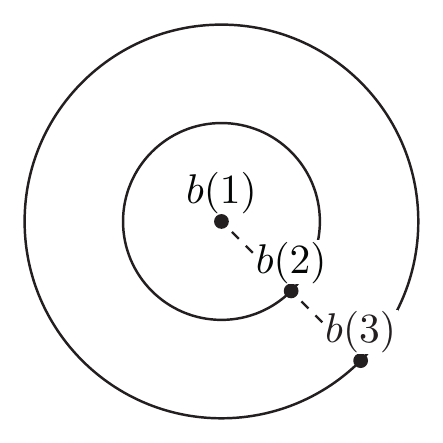}
		\subcaption{}
	\end{subfigure}
	\hfill\null
	\caption{\footnotesize An example of a network with the maximum clique size of two in panel (A) that cannot be embedded into $\R^2$ equipped with the Euclidean distance (with the equilateral dimension of three), as shown in panel (B). Node~2 has distance one from nodes~1 and~3, and node~3 has distance two from node~1. Their maps $b(1)$, $b(2)$, and $b(3)$ must be on the same line. If one maps node~4 to preserve its distance of one from nodes ~1 and~3, $b(4)$ would have zero distance from $b(2)$.}
\label{fig:net_emb}
\end{figure}

\begin{figure}[t]
 	\setbox1=\hbox{\includegraphics{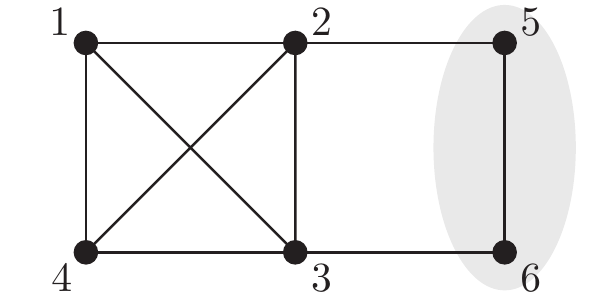}}
	\setbox2=\hbox{\includegraphics{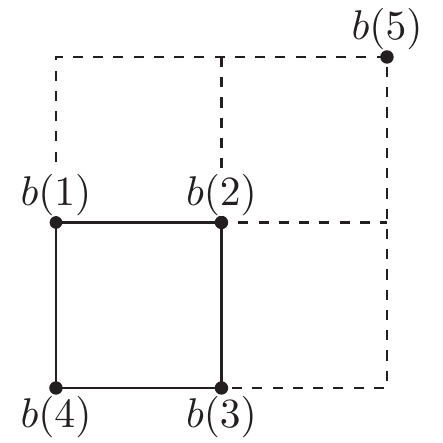}}
 	\hfill
	\begin{subfigure}{0.4\textwidth}
		\centering
		\vspace{0.6125cm}
		\includegraphics{Images/net_r2_Linf_a.pdf}
		\vspace{0.6125cm}
		\subcaption{}
	\end{subfigure}
	\hfill
	\begin{subfigure}{0.4\textwidth}
		\centering
		\includegraphics[scale=0.9]{Images/net_r2_Linf_b.pdf}
		\subcaption{}
	\end{subfigure}
	\hfill\null
	\caption{\footnotesize An example of a network with the maximum clique size of four in panel (A) that cannot be embedded into $\R^2$ equipped with the $L_\infty$ distance (with the equilateral dimension of 4), as shown in panel (B). Node~5 has distance one from node~2, and distance two from nodes~1,~3, and~4, which uniquely determine its map $b(5)$. Similarly, the distances between node~6 and nodes~1,~2,~3, and~4 uniquely determine $b(6)$; however, it would be inconsistent with distance one between nodes~5 and~6.}
\label{fig:net_emb_L_inf}
\end{figure}

An important consequence of Proposition \ref{prop:net_emb} is that when the size of the maximum cliques in the network $G_n$ grows to infinity as $n \rightarrow \infty$, the sequence of networks cannot be embedded into a metric space having a finite equilateral dimension. Examples of such spaces include a $k$-dimensional normed space $M^k$ and a sphere $\mathbb{S}^{k}$ equipped with the usual distance because $e(M^k)\le 2^{k}$ and $e(\mathbb{S}^k)= k+2$ \citep[see, e.g.,][]{Petty:71}. As a consequence, the random field models used in \cite{Conley:99:JOE} with the Euclidean distance and in \cite{Jenish/Prucha:09:JOE} with the Chebychev distance cannot include a network dependence model when the maximum clique size of the networks increases with the sample size. Indeed, there are random graphs whose degree distribution takes the form of a power law and the size of the maximum cliques grows to infinity as $n \to \infty$ \citep[see][]{Blasius/Friedrich/Krohmer:17:Algorithmica}. Such models accommodate both dense and sparse graphs, and are often motivated as a model of many real networks that we observe in practice.

The asymptotic results developed in this paper can accommodate network generating processes with the maximum clique size increasing with the sample size. However, our results impose certain restrictions on the rate of growth of the maximum clique size.

One may consider ``approximating" the network dependence ordering by a lattice in a finite dimensional Euclidean space. Multidimensional scaling (MDS) provides various ways to achieve such an approximation \citep[see][]{Borg/Groenen:05:MDS}. The dependence ordering obtained through MDS is itself dependent on the data, and is stochastic. Hence, it is generally different from the true dependence ordering of the data. Proposition \ref{prop:net_emb} tells us that there is no guarantee that the approximation error of the MDS-based dependence ordering will be small with a large sample size.

\subsection{Network Dependent Processes}
Suppose that we are given a triangular array of $\R^v$-valued random vectors, $Y_{n,i}, i \in N_n$, which are laid on a network $G_n$ whose agency matrix we denote by $A_n$. That is, the $(i,j)$-th entry $A_{n,ij}$ of matrix $A_n$ is one if $i$ and $j$ are adjacent in $G_n$ and zero otherwise. The $(i,i)$-th entry of $A_n$ is zero for $i\in N_n$. We adapt the $\psi$-dependence notion of \cite{Doukhan/Louhichi:99} to our setup. We define $\N = \{1,2,3,\ldots\}$, and for any $v,a \in \N$, we endow $\R^{v \times a}$ with the distance
\begin{equation}
\label{eq:distance}
	\mathss{d}_a(\vec{x},\vec{y})\eqdef\sum_{l=1}^a\norm{x_l-y_l},
\end{equation}
where $\vec{x}=(x_1,\ldots,x_a)$ and $\vec{y}=(y_1,\ldots,y_a)$ are points in $\R^{v\times a}$, and $\| \cdot \|$ denotes the Euclidean norm, i.e., $\|w\| = \sqrt{w^{\top}w}$, for $w \in \R^a$. Let
\begin{equation}
	\label{eq:Lv}
	\L_v \eqdef \left\{\L_{v,a}: a \in \N \right\},
\end{equation}
where $\L_{v,a}$ denotes the collection of bounded Lipschitz real functions on $\R^{v \times a}$, i.e.,
\begin{equation}
	\label{eq:Lva}
	\L_{v,a}\eqdef \left\{f:\R^{v\times a}\to \R : \norm{f}_{\infty}<\infty,\Lip(f)<\infty \right\},
\end{equation}
with $\Lip(f)$ denoting the Lipschitz constant of $f$,\footnote{\label{footnote:Lipschitz_constant}
	The Lipschitz constant for a function $f:\R^{v \times a} \rightarrow \R$ is the smallest constant $C$ such that $|f(\vec{x}) - f(\vec{y})| \le C \mathss{d}_a(\vec{x},\vec{y})$, for all $\vec{x},\vec{y} \in \R^{v \times a}$.
}
and $\|\cdot\|_\infty$ the sup-norm of $f$, i.e., $\|f\|_\infty = \sup_x|f(x)|$. For any positive integers $a,b,s$, consider two sets of nodes (of size $a$ and $b$) with distance between each other of at least $s$. Let $\P_n(a,b;s)$ denote the collection of all such pairs:
\begin{equation}
	\label{eq:P(a,b;s)}
	\P_n(a,b;s) = \{(A,B): A,B \subset N_n, |A| = a, |B| = b, \textnormal{ and } d_n(A,B) \ge s \},
\end{equation}
where
\begin{equation}
	\label{eq:d_n}
	d_n(A,B) = \min_{i \in A} \min_{i' \in B} d_n(i,i'),
\end{equation}
and $d_n(i,i')$ denotes the distance between nodes $i$ and $i'$ in $G_n$, i.e., the length of the shortest path between $i$ and $i'$ in $G_n$. For each set $A$ of positive integers, we write
\begin{equation}
	\label{eq:Y_A}
	Y_{n,A}=(Y_{n,i})_{i \in A}.
\end{equation}
We take $\{\C_n\}_{n \ge 1}$ to be a given sequence of $\sigma$-fields such that for each $n \ge 1$, the adjacency matrix $A_n$ of graph $G_n$ is $\C_n$-measurable. Below we introduce a notion of conditional $\psi$-dependence for a triangular array $\{Y_{n,i}\}_{i \in N_n}, n \ge 1, Y_{n,i} \in \R^v$. From here on, we write triangular arrays simply as $\{Y_{n,i}\}$, and sequences $\{\C_n\}_{n \ge 1}$ as $\{\C_n\}$.

\begin{definition}
\label{def:psi_dep}
A triangular array $\{Y_{n,i}\}, n \ge 1, Y_{n,i} \in \R^v$, is called \textit{conditionally} $\psi$-\textit{dependent} given $\{\C_n\}$, if for each $n \in \N$, there exist a $\C_n$-measurable sequence $\theta_n=\{\theta_{n,s}\}_{s\ge 0}$, $\theta_{n,0}=1$, and a collection of nonrandom functions $(\psi_{a,b})_{a,b \in \N}$, $\psi_{a,b}: \L_{v,a} \times \L_{v,b} \rightarrow [0,\infty)$, such that for all $(A,B) \in \P_n(a,b;s)$ with $s>0$ and all $f\in \L_{v,a}$ and $g\in \L_{v,b}$,
\begin{equation}
\label{eq:psi_dep}
	\abs{\Cov\left(f(Y_{n,A}),g(Y_{n,B})\mid \C_n\right)}\le \psi_{a,b} (f,g) \theta_{n,s} \qtext{a.s.}
\end{equation}
In this case, we call the sequence $\theta_n$ \textit{the dependence coefficients} of $\{Y_{n,i}\}$.
\end{definition}

In a typical set-up that we consider in this paper, $\{\theta_{n,s}\}$ approaches zero as $s$ grows. The $\sigma$-field $\C_n$ can be thought of as a ``common shock" such that when we condition on it, the cross-sectional dependence of triangular array $\{Y_{n,i}\}$ becomes substantially weaker. However, we do not have to think of $\C_n$ as being originated from a variable that affects every node in the network. In many network set-ups, $\C_n$ can be thought of as having been generated by some characteristics or actions of multiple central nodes which affect many other nodes through their many links. For example, consider a star network, where node $1$ is adjacent to the other $n-1$ nodes. Suppose that $Y_{n,1}=U_1$ corresponds to the central node, and for the remaining nodes ($i\geq 2$), 
\[
	Y_{n,i}=U_1+U_i,
\]
where $\{U_i:i=1,\ldots,n\}$ are independent. In that case, we can take $\C_n=\sigma(U_1)$. Then, conditionally on $\C_n$, $Y_{n,2},\ldots, Y_{n,n}$ are i.i.d., and $\PM\{\theta_{n,2}=0\mid \C_n\}=1$.

Unlike the unconditional version of $\psi$-dependence of \cite{Doukhan/Louhichi:99}, in our definition the dependence coefficients $\{\theta_n\}$ are random, due to our accommodation of the common shocks, $\C_n$. We make the following assumption.

\begin{assumption} The triangular array $\{Y_{n,i}\}$ is conditionally $\psi$-dependent given $\{\C_n\}$ with the dependence coefficients $\{\theta_n\}$ satisfying the following conditions.
	\label{assu:psi_dep}
	\begin{enumerate}[leftmargin=*]
	\item[(a)] For some constant $C>0$,
	\begin{align*}
		\psi_{a,b}(f,g)\leq C \times a b \left(\norm{f}_{\infty}+\Lip(f)\right) \left(\norm{g}_{\infty}+\Lip(g)\right).
	\end{align*}
	\item[(b)] $\sup_{n \ge 1} \max_{s \ge 1} \theta_{n,s} < \infty$ a.s.
	\end{enumerate}
\end{assumption}

Assumption \ref{assu:psi_dep} will be maintained throughout the paper. It is shown to be satisfied by all the examples we present in the next subsection. The following lemma shows that $\psi$-dependence of random vectors carries over to linear combinations of their elements.

\begin{lemma}
	\label{lemma:linear_comb}
	 Suppose that a triangular array $\{Y_{n,i}\}$, $Y_{n,i}\in\R^v$, satisfies Assumption \ref{assu:psi_dep}(a) with the dependence coefficients $\{\theta_n\}$. For each $n\ge 1$, let $\{c_{n,i}\}_{i \in N_n}$ be a sequence of $\C_n$-measurable vectors in $\R^v$ such that $\max_{i \in N_n} \norm{c_{n,i}} \le 1$ a.s. Then the array $\{Z_{n,i}\}$ defined by $Z_{n,i}\eqdef c_{n,i}^{\top}Y_{n,i}$ is conditionally $\psi$-dependent given $\{\C_n\}$ with the dependence coefficients $\{\theta_n\}$.
\end{lemma}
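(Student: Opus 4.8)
The plan is to reduce the statement to the case of \emph{deterministic} coefficient vectors, where $Z_{n,A}=(c_{n,i}^{\top}Y_{n,i})_{i\in A}$ is literally a Lipschitz image of $Y_{n,A}$ and Definition \ref{def:psi_dep} applies directly, and then to lift the resulting bound back to the general $\C_n$-measurable (hence random) coefficients by a simple-function approximation. Throughout, fix $(A,B)\in\P_n(a,b;s)$ with $s>0$ and, since $Z_{n,i}\in\R$, functions $f\in\L_{1,a}$ and $g\in\L_{1,b}$.

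For a deterministic tuple $\mathbf{c}=(\mathbf{c}_i)_{i\in A\cup B}$ with $\norm{\mathbf{c}_i}\le 1$, set $f_{\mathbf{c}}(y_A)\eqdef f\big((\mathbf{c}_i^{\top}y_i)_{i\in A}\big)$ and $g_{\mathbf{c}}(y_B)\eqdef g\big((\mathbf{c}_i^{\top}y_i)_{i\in B}\big)$. The Cauchy--Schwarz inequality together with $\norm{\mathbf{c}_i}\le 1$ gives $\norm{f_{\mathbf{c}}}_{\infty}\le\norm{f}_{\infty}$ and $\Lip(f_{\mathbf{c}})\le\Lip(f)$ with respect to $\mathss{d}_a$, so $f_{\mathbf{c}}\in\L_{v,a}$ and likewise $g_{\mathbf{c}}\in\L_{v,b}$. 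Conditional $\psi$-dependence of $\{Y_{n,i}\}$ and Assumption \ref{assu:psi_dep}(a) then yield, almost surely,
\[
	\abs{\Cov\big(f_{\mathbf{c}}(Y_{n,A}),g_{\mathbf{c}}(Y_{n,B})\mid\C_n\big)}\le\psi_{a,b}(f_{\mathbf{c}},g_{\mathbf{c}})\,\theta_{n,s}\le C\,ab\big(\norm{f}_{\infty}+\Lip(f)\big)\big(\norm{g}_{\infty}+\Lip(g)\big)\,\theta_{n,s}.
\]

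To handle the genuinely random coefficients, I would pick simple $\C_n$-measurable functions $c^{(k)}_{n,i}$, $i\in A\cup B$, with $\norm{c^{(k)}_{n,i}}\le 1$ and $c^{(k)}_{n,i}\to c_{n,i}$ a.s., and (passing to a common refinement) assume they are constant, with value $\mathbf{c}^{(k)}_\ell$, on the cells $E^{(k)}_1,\ldots,E^{(k)}_{m_k}\in\C_n$ of a finite partition. Writing $Z^{(k)}_{n,i}\eqdef (c^{(k)}_{n,i})^{\top}Y_{n,i}$, disjointness of the cells and $\C_n$-measurability of $\mathbf{1}_{E^{(k)}_\ell}$ give $\Cov\big(f(Z^{(k)}_{n,A}),g(Z^{(k)}_{n,B})\mid\C_n\big)=\sum_{\ell}\mathbf{1}_{E^{(k)}_\ell}\Cov\big(f_{\mathbf{c}^{(k)}_\ell}(Y_{n,A}),g_{\mathbf{c}^{(k)}_\ell}(Y_{n,B})\mid\C_n\big)$, and applying the display above to each of the finitely many $\mathbf{c}^{(k)}_\ell$ bounds the left-hand side by $C\,ab(\norm{f}_{\infty}+\Lip(f))(\norm{g}_{\infty}+\Lip(g))\,\theta_{n,s}$ a.s. Finally $Z^{(k)}_{n,i}\to Z_{n,i}$ a.s., so by continuity and boundedness of $f$ and $g$ the conditional dominated convergence theorem passes the limit in $k$ through each of the three conditional expectations defining the covariance, delivering the same bound for $\Cov(f(Z_{n,A}),g(Z_{n,B})\mid\C_n)$. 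Hence $\{Z_{n,i}\}$ is conditionally $\psi$-dependent given $\{\C_n\}$ with the same coefficients $\{\theta_n\}$ and with $\psi'_{a,b}(f,g)\eqdef C\,ab(\norm{f}_{\infty}+\Lip(f))(\norm{g}_{\infty}+\Lip(g))$, which is nonrandom and again obeys Assumption \ref{assu:psi_dep}(a).

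The one genuinely delicate point, as opposed to routine bookkeeping, is that Definition \ref{def:psi_dep} supplies a covariance bound only for \emph{nonrandom} Lipschitz functions, whereas substituting $c_{n,i}$ turns $y_A\mapsto f((c_{n,i}^{\top}y_i)_i)$ into a $\C_n$-measurable random function; the simple-function step is exactly what converts the random-coefficient statement into finitely many deterministic-coefficient statements to which the definition applies, with the conditional dominated convergence theorem (legitimate here because the integrands are uniformly bounded and converge a.s. along the chosen sequence) removing the approximation. The remaining verifications — the sup-norm and Lipschitz estimates for $f_{\mathbf{c}}$ via Cauchy--Schwarz, and the splitting of the conditional covariance over a $\C_n$-partition — are straightforward.
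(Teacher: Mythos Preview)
Your argument is correct and mirrors the paper's own machinery: the paper does not give a separate proof of this lemma but provides the general random-function covariance bound (Lemma~\ref{lemma:psi_dep} in Appendix~\ref{subsec:covariance_inequalities}), whose proof is exactly the simple-function approximation plus conditional dominated convergence that you carry out here, specialized to $Z_1=(c_{n,i})_{i\in A}$, $Z_2=(c_{n,i})_{i\in B}$ and $f(y,z)=f((z_i^{\top}y_i)_{i\in A})$. Your Lipschitz/sup-norm estimates for $f_{\mathbf c}$ via Cauchy--Schwarz and the splitting of the conditional covariance over the $\C_n$-partition are precisely the ingredients the paper uses, so the two approaches coincide.
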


A result similar to Lemma \ref{lemma:linear_comb} holds for nonlinear transforms of random variables, under certain conditions for the nonlinear transforms. See Appendix \ref{subsec:covariance_inequalities} for details.

\subsection{An Overview of the Limit Theorems}
\subsubsection{A Motivating Example}
Limit theorems in this paper focus on the asymptotic behavior of the following sum
\begin{equation}
\label{eq:sum}
	\sum_{i \in N_n} (Y_{n,i} - \E[Y_{n,i}\mid \C_n]),
\end{equation}
where $\C_n$ is a certain $\sigma$-field. The asymptotic behavior of such a sum often arises in the network-based interactions models. For example, let us consider the following linear interaction model:
\[
	y_i = \beta \overline{y}_i + \gamma' X_i + v_i,
\]
where
\begin{align}
	\label{ybar}
	\overline{y}_i = \frac{1}{\abs{N_n^\partial(i;1)}}\sum_{j \in N_n^\partial(i;1)} y_j,
\end{align}
and $N_n^\partial(i;1)$ is defined in \eqref{eq:nodes_sets}. Such a model has been widely studied and used in the literature on social interactions \citep[see, e.g.,][and the references therein]{Blume/Brock/Durlauf/Jayaraman:15:JPE}. This literature usually assumes that the error term $v_i$ in the outcome equation is uncorrelated with the network $G_n$. In other words, the network $G_n$ is exogenously formed. A recent paper by \cite{Johnsson/Moon:19:ReStatForthComing} extends the framework to accommodate a situation where $G_n$ is endogenously formed, by introducing an explicit yet generic network formation model and proposing a control function approach. Their network formation model is given as follows:
\[
	j \in N_n^\partial(i;1) \qtext{if and only if}\quad f_{ij}(X,T_i,T_j) \ge u_{ij},
\]
where $f_{ij}$ is a nonstochastic map, $X = (X_i)_{i \in N_n}$, $T_i$'s are unobserved individual heterogeneity affecting the network formation process, and $u_{ij}$'s are link-specific error terms. \cite{Johnsson/Moon:19:ReStatForthComing} introduce a set of assumptions which imply the following conditions:
\begin{itemize}[wide=0pt,leftmargin=*]
	\item[(Condition A)] $v_i$'s are conditionally i.i.d. given $\C_n$, and
	\item[(Condition B)] $\E[v_i \mid \C_n] = \E[v_i\mid T_i]$, for all $i \in N_n$,
\end{itemize}
where $\C_n$ is the $\sigma$-field generated by $(X,T,u)$, $X = (X_i)_{i \in N_n}$, $T = (T_i)_{i \in N_n}$, and $u = (u_{ij})_{i,j \in N_n}$. Then the normal approximation of the distribution of estimators for $\beta$ stems from the limit distribution of the sum of the form:
\[
	\sum_{i \in N_n} (v_i - \E[v_i\mid T_i])\varphi_i,
\]
where $\varphi_i$ is a random variable that is constructed as a function of $X$ and $G_n$ (such as instrumental variables). This sum becomes \eqref{eq:sum} if we take $Y_{n,i} = (v_i - \E[v_i\mid T_i])\varphi_i$, and by Condition B, we have $\E[Y_{n,i}\mid \C_n] = 0$. Our limit theorems in this paper can be used to relax the conditional i.i.d. assumption in Condition A to accommodate the case where the terms $v_i - \E[v_i\mid T_i]$ exhibit network dependence along $G_n$, for instance, through a data generating process as in one of the examples in Section 2.4.

\subsubsection{Network Dependence with Network Formation}
\label{sec:NF}

In contrast to time series dependence, limit theorems for network dependent processes depend not only on the strength of the dependence but also on the shape of the network itself. In this section, we consider a class of network formation models and give a sufficient condition for the network formation process. First, consider a generic network $G_n=(N_n,E_n)$ for which the link between each pair of nodes $\{i,j\}$, $i\ne j$, is realized randomly as follows: $i$ and $j$ are linked if and only if $A_{n,ij} = 1$, where
\begin{equation}
\label{eq:graph}
	A_{n,ij} = 1\{\varphi_{n,ij} \ge \varepsilon_{ij}\},
\end{equation}
$\varphi_{n,ij}$'s and $\varepsilon_{ij}$'s are random variables such that $\varphi_{n,ij}=\varphi_{n,ji}$, $\varepsilon_{ij}=\varepsilon_{ji}$, and $\{\varepsilon_{ij}:i<j\}$ are i.i.d. and independent of $\varphi_n=(\varphi_{n,ij})_{i<j}$. This random graph model can be viewed as a generalization of the Erd\"{o}s--R\'{e}nyi graph model in the sense that conditional on $\varphi_n$, the link formation probabilities can be heterogeneous across all pairs of nodes. Many network formation models used in the literature take this form, where
\[
	\varphi_{n,ij} = f_n(W_{ij},T_i,T_j),
\]
for some function $f_n$, $W_{ij}$ is observable, and $T_i$ is an unobservable node-specific component. For example, \cite{Graham:17} specified $f_n$ as follows:
\[
	f_n(W_{ij},T_i,T_j) = W_{ij}'\beta_0 + T_i + T_j.
\]
\cite{Ridder/Sheng:19:WP} considered an endogenous network formation model where the payoff depends not only on the neighbors' characteristics and the characteristics of their 2-neighbors. They find that the model yields the following ``reduced-form" for the formation of the network with
\[
	f_n(W_{ij},T_i,T_j) = V_{n,ij}(W_{ij}),
\]
where we take $W_{ij}$ to be the covariates $(X_i)_{i \in N_n}$ and $V_{n,ij}$ is a nonstochastic map. \cite{Leung:19:JOE} studied a network model which is contained in a graph generated through
\[
	f_n(W_{ij},T_i,T_j) = \sup_s V(r_n^{-1}\|X_i - X_j\|,s,W_{ij}),
\]
where $(X_i,X_j)$ is a part of the vector $W_{ij}$, $V$ is a map, and $r_n = (\kappa/n)^{1/d}$, with $d$ representing the dimension of $X_i$ and $\kappa$ is a positive constant. This latter graph can be viewed as a generalized version of a random geometric graph.

Suppose that $\{Y_{n,i}\}$ is conditionally $\psi$-dependent given $\{\C_n\}$ with the dependence coefficients $\{\theta_{n}\}$, such that
\begin{equation}
\label{eq:moment_cond}
	\sup_{n \ge 1} \max_{i \in N_n}\E\left[|Y_{n,i}|^p\mid \C_n\right] < \infty \qtext{a.s.}
\end{equation}
for some $p > 4$, and the functional $\psi_{a,b}$ satisfies Assumption \ref{assu:psi_dep}(a). Our main interest is in the LLN and CLT of the following form:
\begin{align}
\label{eq:LLN_CLT}
	\begin{aligned}
		\frac{1}{n}\sum_{i \in N_n} (Y_{n,i} - \E[Y_{n,i}\mid \C_n]) &\rightarrow_p 0, \text{ and } \\
		\frac{1}{\sigma_n}\sum_{i \in N_n} (Y_{n,i} - \E[Y_{n,i}\mid \C_n]) &\rightarrow_d \mathcal{N}(0,1),
	\end{aligned}
\end{align}
where $\sigma_n^2 = \Var(\sum_{i \in N_n} Y_{n,i}\mid \C_n)$. The sparsity of the graph $G_n$ necessary for the limit theorems in this paper is summarized by the asymptotic behavior of the maximal expected degree $\pi_n$, where
\begin{equation}
	\label{eq:pi_n}
	\pi_n = \max_{ i \in N_n} \sum_{j \in N_n\setminus \{i\}} \PM\{\varphi_{n,ij} \ge \varepsilon_{ij} \mid \varphi_n\}.
\end{equation}
In this paper we show that these limit theorems hold if the following condition holds.\medskip

\noindent \textbf{Condition NF.} There exist $\varepsilon>0$ and $q > \max\{p/(p-4),3p/(p-1)\}$ for $p >4$ in \eqref{eq:moment_cond} and a positive random variable $M$ such that for all $n \ge 1$,
\[
	\theta_{n,s} \le M((\pi_n \vee 1) + \varepsilon)^{-qs}, \quad \quad 1 \le s \le n,
\]
holds eventually with probability one.\footnote{
	A sequence of events $\{E_n\}_{n\ge 1}$ holds eventually with probability one if $\PM(\bigcup_{n\ge 1}\bigcap_{m\ge n} E_m)=1$.
}
\medskip

For example, Condition NF is satisfied if there exists $\gamma \in (0,1)$ such that $\theta_{n,s} \le \gamma^s, 1 \le s \le n$, and $\gamma < ((\pi_n \vee 1) + \varepsilon)^{-q}$ for some $q > \max\{p/(p-4),3p/(p-1)\}$.

\subsection{Examples}
In this section, we consider four broad classes of examples of conditionally $\psi$-dependent random vectors.

\subsubsection{Strong-Mixing Processes}
\label{sec:example_mixing}
Let $(\Omega,\mathcal{F},\PM)$ be an underlying probability space. For sub $\sigma$-fields $\mathcal{G}$, $\mathcal{H}$, $\C$ of $\mathcal{F}$, let
\[
	\alpha(\mathcal{G},\mathcal{H}\mid \C)\eqdef\sup_{G\in\mathcal{G},H\in \mathcal{H}}\abs{\Cov(\ind_G,\ind_H\mid \C)}.
\]
For a triangular array $\{Y_{n,i}\}$ and a sequence of $\sigma$-fields $\{\C_n\}$ we define the strong mixing coefficients by\footnote{
	These coefficients are different from those given in \cite{Jenish/Prucha:09:JOE} because our $\theta_n$ coefficients do not depend on $\abs{A}$ and $\abs{B}$.
}
\begin{align}
\label{alpha_ns}
	\alpha_{n,s}\eqdef \sup\left\{\alpha\left(\sigma(Y_{n,A}),\sigma(Y_{n,B})\mid \C_n\right):A,B\subset N_n, d_n(A,B)\ge s\right\}.
\end{align}
The proposition below provides a conditional covariance inequality that is due to Theorem 9 of \cite{PrakasaRao:13}.
\begin{prop}
\label{prop:example_mixing}
For $f\in \L_{v,a}$, $g\in \L_{v,b}$, and $(A,B)\in\P_n(a,b;s)$,
\[
	\abs{\Cov(f(Y_{n,A}),g(Y_{n,B})\mid \C_n)}\le 4\norm{f}_{\infty}\norm{g}_{\infty}\alpha_{n,s} \qtext{a.s.}
\]
\end{prop}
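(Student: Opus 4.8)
The plan is to reduce the conditional covariance bound for bounded Lipschitz functions of $Y_{n,A}$ and $Y_{n,B}$ to the strong mixing coefficient $\alpha_{n,s}$ by a standard approximation-by-simple-functions argument, carried out conditionally on $\C_n$. First I would note that since $d_n(A,B) \ge s$, the $\sigma$-fields $\sigma(Y_{n,A})$ and $\sigma(Y_{n,B})$ are exactly the ones appearing in the definition \eqref{alpha_ns} of $\alpha_{n,s}$, so the conditional mixing inequality of \cite{PrakasaRao:13} (Theorem 9 there) applies: for any pair of random variables $U$ and $V$ that are $\sigma(Y_{n,A})$- and $\sigma(Y_{n,B})$-measurable respectively and bounded by $1$ in sup-norm, one has $|\Cov(U,V\mid \C_n)| \le 4\alpha_{n,s}$ a.s. The task is then to pass from simple measurable functions to the bounded Lipschitz $f$ and $g$ while keeping the constant $4\|f\|_\infty\|g\|_\infty$.

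The key steps, in order, would be: (i) By homogeneity it suffices to treat $\|f\|_\infty \le 1$ and $\|g\|_\infty \le 1$, since $\Cov$ is bilinear and both sides scale linearly in $\|f\|_\infty$ and $\|g\|_\infty$. (ii) Write $f(Y_{n,A})$ and $g(Y_{n,B})$ as $\C_n$-a.s. limits of simple functions of $Y_{n,A}$ and $Y_{n,B}$ taking finitely many values in $[-1,1]$ — concretely, dyadic truncation $f_m = 2^{-m}\lfloor 2^m f\rfloor$ — and observe that each simple function is a finite linear combination $\sum_k c_k \ind_{G_k}$ with $G_k \in \sigma(Y_{n,A})$ and $\sum_k |c_k|$ controlled by $\|f\|_\infty$; applying the PrakasaRao inequality term by term and using $|\Cov(\ind_{G},\ind_{H}\mid\C_n)| \le \alpha_{n,s}$ (rather than $4\alpha_{n,s}$, since indicators of single events give the sup in the definition) gives $|\Cov(f_m(Y_{n,A}), g_{m'}(Y_{n,B})\mid \C_n)| \le 4\|f\|_\infty\|g\|_\infty \alpha_{n,s}$ a.s., where the factor $4$ absorbs the positive/negative decomposition of each simple function. (iii) Pass to the limit $m, m' \to \infty$: since $f_m \to f$ and $g_{m'}\to g$ uniformly and all functions are uniformly bounded, conditional dominated convergence gives $\Cov(f_m(Y_{n,A}), g_{m'}(Y_{n,B})\mid \C_n) \to \Cov(f(Y_{n,A}), g(Y_{n,B})\mid \C_n)$ a.s., and the bound is preserved in the limit.

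The main obstacle I anticipate is purely bookkeeping rather than conceptual: tracking the numerical constant through the simple-function decomposition so that it comes out as exactly $4$ and not something larger, and being careful that all the "a.s." qualifications compose correctly across a countable family of approximating functions (so that one null set works for the whole limiting argument). Since the paper only needs the inequality up to the constant $4\|f\|_\infty\|g\|_\infty$, the cleanest route is actually to invoke the version of \cite{PrakasaRao:13}'s Theorem 9 that is already stated for bounded measurable (not just indicator) functions — if that is available in the cited form, steps (i)–(iii) collapse to a one-line application, which is presumably why the authors attribute the proposition directly to that reference.
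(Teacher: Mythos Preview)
Your proposal is correct and is essentially the same argument the paper points to: the paper's proof is nothing more than a reference to adapting Theorem~A.5 of Hall and Heyde (the classical $4\|f\|_\infty\|g\|_\infty\alpha$ bound, proved exactly via simple-function/layer-cake approximation and the positive--negative split that produces the factor~$4$) to the conditional setting, together with the observation that $\alpha(\mathcal{G},\mathcal{H}\mid\C)=\alpha(\mathcal{G}\vee\C,\mathcal{H}\vee\C\mid\C)$. Your steps (i)--(iii) are precisely that adaptation spelled out, and your closing remark that the whole thing collapses to a one-line citation once the bounded-function version is available is exactly how the paper treats it.
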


Hence, the array $\{Y_{n,i}\}$ is conditionally $\psi$-dependent given $\{\C_n\}$ with $\psi_{a,b}(f,g)=4\norm{f}_{\infty}\norm{g}_{\infty}$, and the dependence coefficients $\{\theta_{n,s}\}_{s \ge 1}$ are given by the strong mixing coefficients $\{\alpha_{n,s}\}_{s\ge 1}$.

The proof of Proposition \ref{prop:example_mixing} follows by adapting the proof of Theorem A.5. of \cite{HallHeyde:80:MLT} to the conditional settings and noticing that the strong mixing coefficients can be equivalently defined by replacing $\alpha(\mathcal{G},\mathcal{H}\mid \C)$ with $\alpha(\mathcal{G}\vee \C,\mathcal{H}\vee \C\mid \C)$.

\subsubsection{Conditional Dependency Graphs}
Suppose that $\{Y_{n,i}\}$ is a given collection of random vectors and $G_n = (N_n,E_n)$ is a graph on the index set $N_n$. Let $\C_n$ be a given $\sigma$-field. We say that $\{Y_{n,i}\}$ has $G_n$ as a \textit{conditional dependency graph given} $\C_n$, if for any set $A \subset N_n$, $Y_{n,A}$ and $\{Y_{n,i}: i \in N_n\setminus N_n(A)\}$ are conditionally independent given $\C_n$, where $N_n(A) = \bigcup_{i\in A}N_n(i;1)$. The notion of a conditional dependency graph is a conditional variant of a dependency graph introduced by \cite{Stein:72:BerkeleySymp}. It is not hard to see that when $\{Y_{n,i}\}_{i\in N_n}$ has $G_n$ as a conditional dependency graph given $\C_n$ for each $n\ge 1$, the array $\{Y_{n,i}\}$ is conditionally $\psi$-dependent given $\{\C_n\}$ with
\[
	\psi_{a,b}(f,g) = 4 \|f\|_\infty \|g\|_\infty,
\]
and $\theta_n$ is such that $\theta_{n,s} = 0$ for all $s \ge 1$.

\subsubsection{Functional Dependence on Independent Variables}
\label{sec:example_eta}

Consider a triangular array of $\R^k$-valued random vectors $\{\varepsilon_{n,i}\}_{i \in N_n}$ which is row-wise independent given $\C_n$. For $\R^v$-valued measurable functions $\{\bm\phi_{n,i}\}_{i \in N_n}$, let
\[
	Y_{n,i}\eqdef\bm\phi_{n,i}(\varepsilon_n), \quad i\in N_n,
\]
where $\varepsilon_n = (\varepsilon_{n,j} : j\in N_n)$. Further, define a modified version of $Y_{n,i}$, which replaces too distant shocks $\varepsilon_{n,j}$ with zeros:
\[
	Y_{n,i}^{(s)}\eqdef\bm\phi_{n,i}\big(\varepsilon_n^{(s)}\big),
\]
where $\varepsilon_n^{(s,i)} = (\varepsilon_{n,j}\ind\{j\in N_n(i;s)\} : j\in N_n)$ and $N_n(i;s)$ is defined in \eqref{eq:nodes_sets}.\footnote{
	Zero can be replaced with another constant if the functions $\bm\phi_{n,i}$ are undefined at zero.
} Now, for any $A,B\subset N_n$ with $d_n(A,B)>2s$, $Y_{n,A}^{(s)}$ and $Y_{n,B}^{(s)}$ are conditionally independent given $\C_n$.

\begin{prop}
\label{prop:example_eta}
Let $\{Y_{n,i}\}$ be as described above. Then for any $(A,B)\in \P_n(a,b;2s+1)$ and $f\in\L_{v,a},g\in\L_{v,b}$,
\[
	\absin{\Cov(f(Y_{n,A}),g(Y_{n,B})\mid \C_n)} \le \big(a\norm{g}_{\infty}\Lip(f)+b\norm{f}_{\infty}\Lip(g)\big)\theta_{n,s} \qtext{a.s.},
\]
where $\theta_{n,s}=2\max_{i\in N_n}\E[\normin{Y_{n,i}-Y_{n,i}^{(s)}}\mid \C_n]$.
\end{prop}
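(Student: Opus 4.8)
The plan is to replace $Y_{n,A}$ and $Y_{n,B}$ by the truncated vectors $Y_{n,A}^{(s)} = (Y_{n,i}^{(s)})_{i \in A}$ and $Y_{n,B}^{(s)} = (Y_{n,i}^{(s)})_{i \in B}$, and to control the truncation error through the Lipschitz property of $f$ and $g$. Since $(A,B) \in \P_n(a,b;2s+1)$ implies $d_n(A,B) > 2s$, the discussion preceding the statement already gives that $Y_{n,A}^{(s)}$ and $Y_{n,B}^{(s)}$ are conditionally independent given $\C_n$: if $j \in N_n(i;s) \cap N_n(i';s)$ for some $i \in A$ and $i' \in B$, the triangle inequality forces $d_n(i,i') \le 2s$, a contradiction, so the two truncated vectors are functions of disjoint sub-collections of the row-wise conditionally independent shocks $\varepsilon_n$. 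I would then start from the telescoping identity
\begin{align*}
	\Cov(f(Y_{n,A}), g(Y_{n,B}) \mid \C_n)
		&= \Cov\bigl(f(Y_{n,A}) - f(Y_{n,A}^{(s)}),\ g(Y_{n,B}) \mid \C_n\bigr) \\
		&\quad + \Cov\bigl(f(Y_{n,A}^{(s)}),\ g(Y_{n,B}) - g(Y_{n,B}^{(s)}) \mid \C_n\bigr) \\
		&\quad + \Cov\bigl(f(Y_{n,A}^{(s)}),\ g(Y_{n,B}^{(s)}) \mid \C_n\bigr),
\end{align*}
whose last term vanishes a.s. by the conditional independence just noted.

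For the two remaining covariances I would invoke the elementary a.s.\ bound $|\Cov(U, V \mid \C_n)| \le 2 \|V\|_\infty \E[|U| \mid \C_n]$ (and its symmetric version), which is immediate from $\Cov(U, V \mid \C_n) = \E[UV \mid \C_n] - \E[U \mid \C_n]\E[V \mid \C_n]$ together with $|V| \le \|V\|_\infty$ a.s. For the first term take $U = f(Y_{n,A}) - f(Y_{n,A}^{(s)})$ and $V = g(Y_{n,B})$, and use the Lipschitz estimate
\[
	\bigl|f(Y_{n,A}) - f(Y_{n,A}^{(s)})\bigr| \le \Lip(f)\, \mathss{d}_a\bigl(Y_{n,A}, Y_{n,A}^{(s)}\bigr) = \Lip(f) \sum_{i \in A} \norm{Y_{n,i} - Y_{n,i}^{(s)}},
\]
to bound that covariance by $2 \|g\|_\infty \Lip(f) \sum_{i \in A} \E[\norm{Y_{n,i} - Y_{n,i}^{(s)}} \mid \C_n] \le a \|g\|_\infty \Lip(f)\, \theta_{n,s}$, using $|A| = a$ and the definition of $\theta_{n,s}$. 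Applying the symmetric bound to the second term, with $f(Y_{n,A}^{(s)})$ as the bounded factor ($|f(Y_{n,A}^{(s)})| \le \|f\|_\infty$) and the Lipschitz estimate now applied to $g(Y_{n,B}) - g(Y_{n,B}^{(s)})$, gives at most $b \|f\|_\infty \Lip(g)\, \theta_{n,s}$. Summing the three contributions yields the stated inequality.

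The computation is essentially routine; the only points needing care are the conditional-independence claim for the truncated vectors (disjointness of the $s$-neighborhoods together with row-wise conditional independence of $\varepsilon_n$) and checking that the elementary covariance bound and the tower property carry over to the conditional setting, both of which are standard. No integrability issue arises, since the asserted inequality is trivially true whenever $\theta_{n,s} = \infty$.
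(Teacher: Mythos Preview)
Your proof is correct and essentially identical to the paper's: the same telescoping decomposition into two covariance terms (the third vanishing by conditional independence of the truncated vectors), the same elementary bound $|\Cov(U,V\mid\C_n)|\le 2\|V\|_\infty\,\E[|U|\mid\C_n]$, and the same Lipschitz estimate summed over the index set. The paper's proof simply omits the vanishing third term from the display rather than writing it out.
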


It follows from Proposition \ref{prop:example_eta} that $\{Y_{n,i}\}$ is conditionally $\psi$-dependent given $\{\C_n\}$, where the $\psi$ function is given by
\[
	\psi_{a,b}(f,g) = a\norm{g}_{\infty}\Lip(f)+b\norm{f}_{\infty}\Lip(g).
\]
This functional $\psi_{a,b}(f,g)$ satisfies Assumption \ref{assu:psi_dep}(a).

Proposition \ref{prop:example_eta} can be extended to the case where $\varepsilon_n = (\varepsilon_{n,j}: j \in N_n)$ is $\psi$-dependent, as shown below. Let $\phi_{n,ir}$ denote the $r$-th component of $\bm\phi_{n,i}$, and we endow the domain $\R^{k \times n}$ of $\phi_{n,ir}$ with the norm $\mathss{d}_n$ defined in \eqref{eq:distance} with $a = n$, so that $\Lip(\phi_{n,ir})$ represents the Lipschitz constant of $\phi_{n,ir}$ with respect to $\mathss{d}_n$.

\begin{prop}
\label{prop:example_eta2}
Let $(A,B) \in \P_n(a,b;3s)$, and let $\{Y_{n,i}\}$, $f$, and $g$ be as in Proposition \ref{prop:example_eta}, except that $\varepsilon_n = (\varepsilon_{n,j}: j \in N_n)$ is $\psi$-dependent with coefficient $\theta_{n,s}^\varepsilon$ and $\psi$ equal to a functional $\psi_{a,b}^\varepsilon$ which satisfies Assumption \ref{assu:psi_dep}(a). Then
\[
	\absin{\Cov(f(Y_{n,A}),g(Y_{n,B})\mid \C_n)}\le \psi_{a,b}(f,g) \theta_{n,s} \qtext{a.s.},
\]
where
\begin{align}
\label{eq:psi2}
	\begin{aligned}
		\theta_{n,s} &= 2\max_{i\in N_n}\E[\normin{Y_{n,i}-Y_{n,i}^{(s)}}\mid \C_n] + D^2_n(s) \theta_{n,s}^\varepsilon, \text{ and } \\
		\psi_{a,b}(f,g) &= a\norm{g}_{\infty}\Lip(f)+b\norm{f}_{\infty}\Lip(g) \\
		&\quad+ C \times ab\left( \|f\|_\infty +\sqrt{k} \Lip(f) \bar \phi \right)\left( \|g\|_\infty + \sqrt{k} \Lip(g) \bar \phi \right)
	\end{aligned}
\end{align}
with $D_n(s)=\max_{i\in N_n}\vert N_n(i;s)\vert$,
\[
	\bar \phi = \sup_{n \ge 1} \max_{i \in N_n} \sum_{r=1}^k \Lip(\phi_{n,ir}),
\]
and $C>0$ is the constant in Assumption \ref{assu:psi_dep}.
\end{prop}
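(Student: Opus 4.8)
The plan is to read the statement as a perturbation of Proposition~\ref{prop:example_eta}: the summand $2\max_{i}\E[\|Y_{n,i}-Y_{n,i}^{(s)}\|\mid\C_n]$ of $\theta_{n,s}$ and the functional $a\|g\|_\infty\Lip(f)+b\|f\|_\infty\Lip(g)$ will reappear verbatim from the truncation argument used there, whereas the extra pieces $D_n^2(s)\theta_{n,s}^\varepsilon$ and $C\,ab(\cdots)(\cdots)$ in \eqref{eq:psi2} come from feeding the \emph{truncated} statistics into the conditional $\psi$-dependence of $\varepsilon_n$. Fix $(A,B)\in\P_n(a,b;3s)$, $f\in\L_{v,a}$, $g\in\L_{v,b}$, write $N_n(A;s)=\bigcup_{i\in A}N_n(i;s)$ and $N_n(B;s)=\bigcup_{i\in B}N_n(i;s)$, and argue conditionally on $\C_n$ throughout, so that all (in)equalities below hold a.s. I would first insert the truncated vectors and decompose
\[
\Cov\big(f(Y_{n,A}),g(Y_{n,B})\mid\C_n\big)=\Cov\big(f(Y_{n,A})-f(Y_{n,A}^{(s)}),g(Y_{n,B})\mid\C_n\big)+\Cov\big(f(Y_{n,A}^{(s)}),g(Y_{n,B})-g(Y_{n,B}^{(s)})\mid\C_n\big)+\Cov\big(f(Y_{n,A}^{(s)}),g(Y_{n,B}^{(s)})\mid\C_n\big).
\]

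The first two ``truncation-error'' terms are treated exactly as in the proof of Proposition~\ref{prop:example_eta}: each conditional covariance is bounded by the sup-norm of the undisturbed factor times twice the conditional mean of the perturbation, and, using $|f(Y_{n,A})-f(Y_{n,A}^{(s)})|\le\Lip(f)\sum_{i\in A}\|Y_{n,i}-Y_{n,i}^{(s)}\|$ (and symmetrically for $g$), the two terms together are at most $\big(a\|g\|_\infty\Lip(f)+b\|f\|_\infty\Lip(g)\big)\cdot 2\max_{i\in N_n}\E[\|Y_{n,i}-Y_{n,i}^{(s)}\|\mid\C_n]$; this accounts for the first line of $\psi_{a,b}(f,g)$ and the first summand of $\theta_{n,s}$.

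For the third term the new ingredient is used. Since each $Y_{n,i}^{(s)}=\bm\phi_{n,i}(\varepsilon_n^{(s,i)})$ depends only on $(\varepsilon_{n,j})_{j\in N_n(i;s)}$, we have $f(Y_{n,A}^{(s)})=\tilde f(\varepsilon_{N_n(A;s)})$ for some ($\C_n$-measurable) $\tilde f$, and likewise $g(Y_{n,B}^{(s)})=\tilde g(\varepsilon_{N_n(B;s)})$. Moreover, by the triangle inequality for $d_n$, any $j\in N_n(A;s)$ and $j'\in N_n(B;s)$ satisfy $d_n(j,j')\ge d_n(A,B)-2s\ge 3s-2s=s>0$, which is exactly why the hypothesis requires separation $3s$; hence $(N_n(A;s),N_n(B;s))\in\P_n(a',b';s)$ with $a'=|N_n(A;s)|\le aD_n(s)$ and $b'=|N_n(B;s)|\le bD_n(s)$. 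Applying the conditional $\psi$-dependence of $\varepsilon_n$ (together with Assumption~\ref{assu:psi_dep}(a) for $\psi^\varepsilon$, or, equivalently, the covariance inequality for nonlinear Lipschitz transforms from Appendix~\ref{subsec:covariance_inequalities} applied to the maps $\varepsilon\mapsto Y_{n,i}^{(s)}$), and using $\|\tilde f\|_\infty\le\|f\|_\infty$, $\|\tilde g\|_\infty\le\|g\|_\infty$, bounds the third term by
\[
C\,a'b'\big(\|f\|_\infty+\Lip(\tilde f)\big)\big(\|g\|_\infty+\Lip(\tilde g)\big)\theta_{n,s}^\varepsilon,
\]
the Lipschitz constants being taken with respect to $\mathss{d}_{a'}$ and $\mathss{d}_{b'}$.

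The remaining and main obstacle is to bound $\Lip(\tilde f)$ and $\Lip(\tilde g)$ in the clean form that appears in \eqref{eq:psi2}. Chaining Lipschitz estimates, $\Lip(\tilde f)$ is at most $\Lip(f)$ times $\max_{i}\big(\sum_r\Lip(\phi_{n,ir})^2\big)^{1/2}$ times a factor that counts how many of the neighbourhoods $\{N_n(i;s):i\in A\}$ a single shock index can lie in (no more than $D_n(s)$); bounding $\big(\sum_r\Lip(\phi_{n,ir})^2\big)^{1/2}$ by $\sqrt{k}\,\bar\phi$ and combining these counting factors with $a'b'\le abD_n^2(s)$ then yields the third summand $C\,ab\big(\|f\|_\infty+\sqrt{k}\Lip(f)\bar\phi\big)\big(\|g\|_\infty+\sqrt{k}\Lip(g)\bar\phi\big)$ of $\psi_{a,b}(f,g)$ and the summand $D_n^2(s)\theta_{n,s}^\varepsilon$ of $\theta_{n,s}$. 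I expect the careful bookkeeping of the various powers of $D_n(s)$ and the $a,b$ multiplicities—and the routine check that every step is valid conditionally a.s., including the $\C_n$-measurability of $\bm\phi_{n,i}$, $\tilde f$, and $\tilde g$—to be the only genuinely delicate part; the rest is a direct adaptation of the proof of Proposition~\ref{prop:example_eta}.
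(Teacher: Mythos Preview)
Your approach is essentially the paper's: the same three-term decomposition, with the two truncation-error terms handled exactly as in Proposition~\ref{prop:example_eta} and the third term bounded by viewing $f(Y_{n,A}^{(s)})$ and $g(Y_{n,B}^{(s)})$ as bounded Lipschitz functions $f_1,g_1$ of $\varepsilon_{N_n(A;s)}$ and $\varepsilon_{N_n(B;s)}$ (which are at distance $\ge s$ since $d_n(A,B)\ge 3s$) and invoking Assumption~\ref{assu:psi_dep}(a) for $\psi^\varepsilon$. In the paper's bookkeeping the factor $D_n^2(s)$ comes entirely from $|N_n(A;s)|\,|N_n(B;s)|\le ab\,D_n^2(s)$, while the Lipschitz estimate is simply $\Lip(f_1)\le\sqrt{k}\,\Lip(f)\,\bar\phi$ (and likewise for $g_1$), so the extra neighbourhood-overlap factor you anticipate in $\Lip(\tilde f)$ does not enter.
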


As a concrete example, consider a simple linear case in which
\[
	Y_{n,i}=\sum_{m\ge 0}\gamma_{m,n}\sum_{j\in N_n^{\partial}(i;m)}\varepsilon_{n,j},
\]
where $\varepsilon_{n,j}$'s are $\psi$-dependent as in Proposition \ref{prop:example_eta2} such that $D_{n}^2(s) \theta_{n,s}^\varepsilon \rightarrow_{a.s.} 0$, and $N_n^\partial(i;m)$ is as defined in \eqref{eq:nodes_sets}. Let us take $\|\cdot \|$ to be the Euclidean norm. Since
\[
	\normin{Y_{n,i}-Y_{n,i}^{(s)}}\le \sum_{m>s}\abs{\gamma_{m,n}}\sum_{j\in N_n^{\partial}(i;m)}\normin{\varepsilon_{n,j}},
\]
setting $\alpha_n\eqdef \max_{i\in N_n}\E[\norm{\varepsilon_{n,i}}\mid \C_n]$, we find that
\begin{equation}
\label{eq:theta_bound}
	\theta_{n,s}\le 2\alpha_n\sum_{m>s}\abs{\gamma_{m,n}}\max_{ i \in N_n} \abs{N_n^\partial(i;m)} + D_n^{2}(s) \theta_{n,s}^\varepsilon \qtext{a.s.}
\end{equation}

As for the functional $\psi_{a,b}$ in the $\psi$-dependence of $\{Y_{n,i}\}_{i \in N_n}$, observe that for any $x, \tilde x \in \R^{n \times k}$ (with their $(j,k)$-th entries denoted by $x_{j,r}$ and $\tilde x_{j,r}$),
\begin{align*}
	\sum_{r=1}^k \left| \sum_{m \ge 0} \gamma_{m,n} \sum_{j \in N_n^\partial(i;m)}(x_{j,r} - \tilde x_{j,r}) \right| &\le \sum_{m \ge 0}|\gamma_{m,n}|\sum_{j \in N_n} \sum_{r=1}^k |x_{j,r} - \tilde x_{j,r}|\\
	&\le \sqrt{k} \sum_{m \ge 0}|\gamma_{m,n}| \sum_{j \in N_n}\|x_j - \tilde x_j\|\\
	&= \sqrt{k} \sum_{m \ge 0}|\gamma_{m,n}| \mathss{d}_n(x, \tilde x),
\end{align*}
where $x_j = (x_{j,r})_{r=1}^k$ and $\tilde x_j = (\tilde x_{j,r})_{r=1}^k$. Hence,
\[
	\sum_{r=1}^k \Lip(\phi_{ni,r}) =\sqrt{k} \sum_{m \ge 0}|\gamma_{m,n}|.
\]
Thus, we can find the functional $\psi$ in the $\psi$-dependence of $\{Y_{n,i}\}$ as $\psi_{a,b}$ in \eqref{eq:psi2} with
\[
	\bar \phi = \sqrt{k} \sup_{n \ge 1} \sum_{m \ge 0}|\gamma_{m,n}|.
\]
The functional $\psi_{a,b}$ in \eqref{eq:psi2} satisfies Assumption \ref{assu:psi_dep}(a) if $\bar \phi < \infty$ in this model.

\subsubsection{Functional Dependence on Associated or Gaussian Variables}\label{subsec:associated_procs}
Let us consider the following process:
\[
	Y_{n,i}= \varphi_{n,i}(\varepsilon_n),
\]
where $\varepsilon_n = (\varepsilon_{n,i} : i\in N_n)$ is a positively associated process, $\varepsilon_{n,i} \in \R$, conditional on certain $\sigma$-field $\C_n$, i.e., for all coordinatewise non-decreasing real-valued measurable functions $f$ and $g$ and all finite subsets $A$ and $B$ of $N_n$,
\[
	\Cov(f(\varepsilon_{n,A}), g(\varepsilon_{n,B})\mid \C_n) \ge 0 \qtext{a.s.}
\]
When the above inequality is reversed for all finite subsets $A$ and $B$ of $N_n$, we say that $\{\varepsilon_{n,i}\}_{i \in N_n}$ is negatively associated. When a set of random variables is positively or negatively associated, independence between two random variables in the set is equivalent to their being uncorrelated. The following result follows as a consequence of a covariance inequality due to Theorem 3.1 of \cite{Birkel:88:AP} and Lemma 19 of \cite{Doukhan/Louhichi:99}.

\begin{prop}
	\label{prop: associated proc}
	Suppose that for each $i \in N_n$, $\varphi_{n,i}\in \mathscr{C}_b^1$ (i.e., $\varphi_{n,i}$ is continuously differentiable with bounded derivatives). Let $A,B\in\P_n(a,b;s)$ and let $f:\R^a \to \R$ and $g:\R^b \to \R$ be differentiable with bounded derivatives. Suppose further that either (i) $\{\varepsilon_{n,i}\}_{i \in N_n}$ is conditionally positively or negatively associated given $\C_n$ or (ii) $\{\varepsilon_{n,i}\}_{i \in N_n}$ is conditionally Gaussian given $\C_n$ and $f$ and $g$ are bounded.
	Then
	\[
		\abs{\Cov(f(Y_{n,A}),g(Y_{n,B})\mid \C_n)} \le ab \Lip(f)\Lip(g) \theta_{n,s} \qtext{a.s.},
	\]
	where
	\begin{equation}
	\label{eq:theta_assoc_proc}
		\theta_{n,s} = \max_{(A',B') \in \mathcal{P}_n(a,b;s)}\max_{k_1 \in A',k_2 \in B'} \sum_{i,j\in N_n} \left\|\frac{\partial \varphi_{n,k_1}}{\partial \varepsilon_{n,i}}\right\|_{\infty} \left\|\frac{\partial \varphi_{n,k_2}}{\partial \varepsilon_{n,j}}\right\|_{\infty} \abs{\Cov(\varepsilon_{n,i},\varepsilon_{n,j}\mid \C_n)}.
	\end{equation}
\end{prop}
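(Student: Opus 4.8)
The plan is to rewrite $f(Y_{n,A})$ and $g(Y_{n,B})$ as functions of the underlying array $\varepsilon_n=(\varepsilon_{n,i})_{i\in N_n}$ and then apply an off-the-shelf covariance inequality for associated (resp.\ Gaussian) sequences, while tracking the composition with the maps $\varphi_{n,i}$ via the chain rule. Define $F(\vec{x})\eqdef f\big((\varphi_{n,k}(\vec{x}))_{k\in A}\big)$ and $G(\vec{x})\eqdef g\big((\varphi_{n,k}(\vec{x}))_{k\in B}\big)$ for $\vec{x}\in\R^n$, so that $f(Y_{n,A})=F(\varepsilon_n)$ and $g(Y_{n,B})=G(\varepsilon_n)$ a.s. Since every $\varphi_{n,i}\in\mathscr{C}_b^1$ and $f,g$ are differentiable with bounded derivatives, the chain rule shows that $F$ and $G$ are differentiable with bounded partial derivatives (a finite sum of products of bounded functions); in case (ii) the boundedness of $f$ and $g$ additionally makes $F$ and $G$ bounded, which is what the Gaussian inequality requires, even though $\varphi_{n,i}$ itself need not be bounded.

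Next I would establish the conditional covariance inequality
\[
	\abs{\Cov(F(\varepsilon_n),G(\varepsilon_n)\mid\C_n)}\le\sum_{i,j\in N_n}\left\|\frac{\partial F}{\partial\varepsilon_{n,i}}\right\|_{\infty}\left\|\frac{\partial G}{\partial\varepsilon_{n,j}}\right\|_{\infty}\abs{\Cov(\varepsilon_{n,i},\varepsilon_{n,j}\mid\C_n)}\qtext{a.s.},
\]
which in the associated case (i) I would obtain from Theorem 3.1 of \cite{Birkel:88:AP} (used for positive association, with the negatively associated case covered by its analogue) and in the Gaussian case (ii) from Lemma 19 of \cite{Doukhan/Louhichi:99}. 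Both references are stated unconditionally, so I would pass to the conditional version by applying them to a regular conditional distribution of $\varepsilon_n$ given $\C_n$ --- the conditional association (resp.\ Gaussianity) hypothesis makes the required structure hold for a.e.\ realization of $\C_n$ --- in the same spirit as the conditional strong-mixing bound invoked for Proposition \ref{prop:example_mixing}.

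The remaining step is bookkeeping. By the chain rule, $\partial F/\partial\varepsilon_{n,i}=\sum_{k\in A}(\partial_k f)\big((\varphi_{n,k'}(\varepsilon_n))_{k'\in A}\big)\,\partial\varphi_{n,k}/\partial\varepsilon_{n,i}$, and since $\mathss{d}_a$ is here the $\ell_1$ distance on $\R^a$ (as $Y_{n,i}$ is scalar-valued), Lipschitz continuity of $f$ gives $\|\partial_k f\|_{\infty}\le\Lip(f)$ for every $k$; hence
\[
	\left\|\frac{\partial F}{\partial\varepsilon_{n,i}}\right\|_{\infty}\le\Lip(f)\sum_{k\in A}\left\|\frac{\partial\varphi_{n,k}}{\partial\varepsilon_{n,i}}\right\|_{\infty},\qquad\left\|\frac{\partial G}{\partial\varepsilon_{n,j}}\right\|_{\infty}\le\Lip(g)\sum_{k\in B}\left\|\frac{\partial\varphi_{n,k}}{\partial\varepsilon_{n,j}}\right\|_{\infty}.
\]
Substituting these into the covariance inequality and interchanging the order of summation bounds $\abs{\Cov(f(Y_{n,A}),g(Y_{n,B})\mid\C_n)}$ by $\Lip(f)\Lip(g)$ times a sum over the $ab$ pairs $(k,k')\in A\times B$ of terms $\sum_{i,j\in N_n}\|\partial\varphi_{n,k}/\partial\varepsilon_{n,i}\|_{\infty}\|\partial\varphi_{n,k'}/\partial\varepsilon_{n,j}\|_{\infty}\abs{\Cov(\varepsilon_{n,i},\varepsilon_{n,j}\mid\C_n)}$. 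As $(A,B)\in\P_n(a,b;s)$, each such term is at most $\theta_{n,s}$ from \eqref{eq:theta_assoc_proc}, so the whole bound is $ab\,\Lip(f)\Lip(g)\,\theta_{n,s}$, as claimed. Finally, $\theta_{n,s}$ is $\C_n$-measurable because it is a deterministic function of the conditional covariances $\Cov(\varepsilon_{n,i},\varepsilon_{n,j}\mid\C_n)$ and the nonrandom sup-norms $\|\partial\varphi_{n,k}/\partial\varepsilon_{n,i}\|_{\infty}$.

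The step I expect to require the most care is the first one: checking that the composed functions $F$ and $G$ genuinely satisfy the hypotheses of Birkel's and Doukhan--Louhichi's inequalities (differentiable with bounded partials in case (i); additionally bounded in case (ii)), and justifying that these unconditional inequalities transfer to the setting conditional on $\C_n$. The chain-rule expansion and the reduction to $\theta_{n,s}$ are then routine.
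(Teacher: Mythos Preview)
Your proposal is correct and follows essentially the same approach as the paper, which does not give a detailed proof but simply states that the result ``follows as a consequence of a covariance inequality due to Theorem 3.1 of \cite{Birkel:88:AP} and Lemma 19 of \cite{Doukhan/Louhichi:99}.'' You have filled in precisely the details the paper leaves implicit: compose through the $\varphi_{n,i}$, invoke those two inequalities conditionally, and use the chain rule together with $\|\partial_k f\|_\infty\le\Lip(f)$ to pass from the bound on $F,G$ to the stated $\theta_{n,s}$.
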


The above proposition clearly shows that the dependence structure of $\{Y_{n,i}\}$ is determined by the (conditional) local dependence structure of $\varepsilon_{n,i}$'s and $\varphi_{n,i}$'s. In the special case where $\varepsilon_{n,i}$'s are all conditionally independent given $\C_n$, the sequence $\theta_{n,s}$ is reduced to the following:
\[
	\theta_{n,s} = \max_{(A',B') \in \mathcal{P}_n(a,b;s)} \max_{k_1 \in A',k_2 \in B'} \left\|\frac{\partial \varphi_{n,k_1}}{\partial \varepsilon_{n,i}}\right\|_{\infty} \left\|\frac{\partial \varphi_{n,k_2}}{\partial \varepsilon_{n,i}}\right\|_{\infty}.
\]
Suppose further that $N_n$ is endowed with a graph $G_n$ such that $\partial \varphi_{n,k}/\partial \varepsilon_{n,i} = 0$, whenever $i$ is at least $m$-edges away from $k$ in $G_n$. Then $Y_{n,i}$'s have a graph $G_n'$ as a conditional dependency graph given $\C_n$, where $i$ and $j$ are adjacent in $G_n'$ if and only if $i$ and $j$ are within $2m$ edges away. Hence, it follows that $\theta_{n,s} = 0$, for all $s \ge 2m$.

The following corollary shows that the array $\{Y_{n,i}\}$ is conditionally $\psi$-dependent given $\{\C_n\}$.

\begin{corollary}
Suppose that $\varphi_{n,i}\in \mathscr{C}_b^1$ for all $i\in N_n$ and $n\ge 1$. Then the triangular array $\{Y_{n,i}\}$ is conditionally $\psi$-dependent given $\{\C_n\}$ with the coefficients given by \eqref{eq:theta_assoc_proc} and
\[
	\psi_{a,b}(f,g)=ab\Lip(f)\Lip(g).
\]
\end{corollary}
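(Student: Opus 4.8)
The plan is to deduce conditional $\psi$-dependence in the sense of Definition \ref{def:psi_dep} — which requires the conditional covariance bound for \emph{every} bounded Lipschitz $f \in \L_{v,a}$ and $g \in \L_{v,b}$ — from Proposition \ref{prop: associated proc}, which only delivers it when $f$ and $g$ are differentiable with bounded derivatives. The gap is closed by mollification. Fix $(A,B) \in \P_n(a,b;s)$ with $s \ge 1$ and $f \in \L_{v,a}$, $g \in \L_{v,b}$, and let $f_k$ (resp.\ $g_k$) be the convolution of $f$ (resp.\ $g$) with a $\mathscr{C}^\infty$ probability density on $\R^{v \times a}$ (resp.\ $\R^{v \times b}$) rescaled to concentrate at the origin as $k \to \infty$. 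Then each $f_k$ is $\mathscr{C}^\infty$; since convolution against a probability density increases neither the sup-norm nor the Lipschitz constant, $\norm{f_k}_\infty \le \norm{f}_\infty$ and $\Lip(f_k) \le \Lip(f)$, so in particular the gradient of $f_k$ is bounded (by $\Lip(f)$) and $f_k$ lies in the class of test functions covered by Proposition \ref{prop: associated proc}; and $\norm{f_k - f}_\infty \to 0$ because $f$ is uniformly continuous. Likewise for $g_k$.

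I would then apply Proposition \ref{prop: associated proc} to the pair $(f_k, g_k)$ to obtain, for each $k$ and outside a $\PM$-null set,
\[
	\abs{\Cov(f_k(Y_{n,A}), g_k(Y_{n,B}) \mid \C_n)} \le ab\, \Lip(f_k)\Lip(g_k)\, \theta_{n,s} \le ab\, \Lip(f)\Lip(g)\, \theta_{n,s},
\]
with $\theta_{n,s}$ as in \eqref{eq:theta_assoc_proc}. Taking the union over $k \in \N$ of these exceptional null sets, the inequality holds for all $k$ on the complement, which still has probability one. On that set I would let $k \to \infty$: since $\norm{f_k - f}_\infty \to 0$ and $\norm{g_k - g}_\infty \to 0$ and the random variables $f_k(Y_{n,A})$, $g_k(Y_{n,B})$, $f_k(Y_{n,A})g_k(Y_{n,B})$ are bounded uniformly in $k$ (by $\norm{f}_\infty$, $\norm{g}_\infty$, and $\norm{f}_\infty \norm{g}_\infty$, respectively), the conditional dominated convergence theorem gives $\E[f_k(Y_{n,A})g_k(Y_{n,B}) \mid \C_n] \to \E[f(Y_{n,A})g(Y_{n,B}) \mid \C_n]$ and $\E[f_k(Y_{n,A}) \mid \C_n]\E[g_k(Y_{n,B}) \mid \C_n] \to \E[f(Y_{n,A}) \mid \C_n]\E[g(Y_{n,B}) \mid \C_n]$, all almost surely; hence the conditional covariances converge a.s., and the displayed bound passes to the limit to yield $\abs{\Cov(f(Y_{n,A}), g(Y_{n,B}) \mid \C_n)} \le ab\, \Lip(f)\Lip(g)\, \theta_{n,s}$ a.s.

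Finally I would verify the remaining requirements of Definition \ref{def:psi_dep}: the functional $\psi_{a,b}(f,g) = ab\,\Lip(f)\Lip(g)$ is nonrandom with values in $[0,\infty)$; and the coefficient sequence $\theta_n = \{\theta_{n,s}\}_{s \ge 0}$ — set $\theta_{n,0} \eqdef 1$ and $\theta_{n,s}$ as in \eqref{eq:theta_assoc_proc} for $s \ge 1$, with the maximum over $\mathcal{P}_n(a,b;s)$ enlarged to a maximum over all $k_1, k_2 \in N_n$ so that it no longer depends on $(a,b)$ — is $\C_n$-measurable, being assembled from the nonrandom sup-norms $\norm{\partial \varphi_{n,k}/\partial \varepsilon_{n,i}}_\infty$ and the $\C_n$-measurable conditional covariances $\Cov(\varepsilon_{n,i},\varepsilon_{n,j} \mid \C_n)$. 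The main obstacle I anticipate is the approximation step: one must check that the mollified functions genuinely satisfy the hypotheses of Proposition \ref{prop: associated proc} (bounded derivatives, not merely differentiability) and that the single $\PM$-null set obtained by pooling over the approximating sequence is handled correctly; once that is in place, the passage to the limit via conditional dominated convergence is routine thanks to the uniform boundedness of $f$ and $g$.
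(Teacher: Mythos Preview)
Your proposal is correct and follows essentially the same route as the paper: approximate an arbitrary bounded Lipschitz test function by a $\mathscr{C}^1$ function without increasing its Lipschitz constant, apply Proposition~\ref{prop: associated proc} to the approximants, and pass to the limit. The paper compresses this into a one-line appeal to the existence of such approximations, while you spell out the mollification construction and the conditional dominated convergence step in detail.
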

\begin{proof}
The result follows from the fact that for any $\epsilon>0$, a Lipschitz function $f$ admits an approximation by a continuously differentiable function $f_{\epsilon}$ s.t.\ $\norm{f-f_{\epsilon}}_{\infty}\le \epsilon$ and $\Lip(f_{\epsilon})\le \Lip(f)$ \citep[see, e.g.,][p.~174]{JimenezSevilla:11}.
\end{proof}
\section{Limit Theorems for Network Dependent Processes}
\label{sec:LimThms}

\subsection{Network Dependence Condition}

In this section, we provide a sufficient condition for the shape of the network that ensures our limit theorems (i.e., the LLN and CLT) hold. The crucial aspect of the network which matters for the limit theorem is the properties of the neighborhood shells. For the limit theorems to hold, the number of the neighbors at distance $s$ should not grow too fast as $s$ increases. The precise condition for such neighborhood shells depends on the dependence coefficients $\theta_{n,s}$, so that if $\theta_{n,s}$ decreases fast as $s$ increases, the requirement for the neighborhood shells can be weakened.

To introduce sufficient conditions, let
\begin{equation}
\label{eq:delta_n}
	\delta_n^{\partial}(s;k)= \frac{1}{n}\sum_{i\in N_n}\absin{N_n^{\partial}(i;s)}^k,
\end{equation}
where $N_n^\partial(i;s)$ is defined in \eqref{eq:nodes_sets}. When $k=1$, we simply write $\delta_n^\partial (s;1) = \delta_n^\partial(s)$. This quantity measures the denseness of a network. Let us introduce further notation. Define
\begin{equation}
\label{eq:Delta_n}
	\Delta_n(s,m;k) = \frac{1}{n}\sum_{i \in N_n} \max_{j \in N_n^\partial(i;s)}\abs{N_n(i;m)\setminus N_n(j;s-1)}^{k},
\end{equation}
where $N_n(i;s)$ is defined in \eqref{eq:nodes_sets}, and we take $N_n(j;s-1) = \varnothing$ if $s = 0$. We also define
\begin{equation}
\label{eq:c_n}
	c_n(s,m;k)= \inf_{\alpha>1}\left[\Delta_n(s,m;k\alpha)\right]^{\frac{1}{\alpha}} \left[\delta_n^\partial\left(s;\frac{ \alpha}{\alpha-1}\right)\right]^{1 - \frac{1}{\alpha}}.
\end{equation}
The quantity $c_n(s,m;k)$ is easy to compute when a network is given, and it captures the network properties that are relevant for the limit theorems. It consists of two components: $\Delta_n(s,m;k\alpha)$ and $\delta_n^\partial(s;\alpha/(\alpha-1))$. They capture the denseness of the network through the average neighborhood sizes and the average neighborhood shell size. We summarize a sufficient condition for the network and the weak dependence coefficient as follows.\medskip

\noindent \textbf{Condition ND.} There exist $p>4$ and a sequence $m_n \to \infty$ such that
\begin{enumerate}[leftmargin=*]
	\item[(a)] $\theta_{n,m_n}^{1-1/p} = o_{a.s}(n^{-3/2})$,
	\item[(b)] for each $k\in\{1,2\}$,
		\[
		\frac{1}{n^{k/2}}\sum_{s\ge 0} c_n(s,m_n;k) \theta_{n,s}^{1-\frac{k+2}{p}} =o_{a.s.}(1), \text{ and}
		\]
	\item[(c)] $\sup_{n \ge 1} \max_{i \in N_n} \E[ \vert Y_{n,i} \vert^p \mid \C_n] <\infty$ a.s.
\end{enumerate}
\medskip

Later we show that Condition ND is sufficient for the LLN and CLT in \eqref{eq:LLN_CLT}. Note that $\Delta_n(s,m_n;k)$ tends to decrease fast to zero as $s$ goes beyond a certain level, because the set $N_n(j;s-1)$ quickly becomes large.

The following lemma shows that in the case of the network formation model in \eqref{eq:graph}, Condition NF implies Conditions ND(a) and (b).

\begin{lemma}
\label{lemma:network_form}
Suppose that network $G_n$ is generated as in \eqref{eq:graph}, and let $\mathcal{C}_n$ be a $\sigma$-field such that the adjacency matrix of network $G_n$ is $\mathcal{C}_n$-measurable. Suppose further that $\{Y_{n,i}\}$ is conditionally $\psi$-dependent given $\{\C_n\}$, with dependence coefficients $\{\theta_n\}$ satisfying Condition NF. Then Conditions ND(a) and (b) hold.
\end{lemma}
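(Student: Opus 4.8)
The plan is to reduce Conditions ND(a) and (b) to Condition NF by controlling the combinatorial quantities $\delta_n^\partial(s;k)$, $\Delta_n(s,m;k)$ and hence $c_n(s,m_n;k)$ in terms of the maximal expected degree $\pi_n$, and then checking that the geometric decay rate $((\pi_n\vee 1)+\varepsilon)^{-qs}$ of $\theta_{n,s}$ guaranteed by Condition NF beats the growth of those quantities. The first step is to observe that for a graph generated as in \eqref{eq:graph}, the expected size of the $s$-neighborhood shell $N_n^\partial(i;s)$, and of the ball $N_n(i;s)$, can be bounded (conditionally on $\varphi_n$) by a branching-type argument: the expected number of nodes reachable in exactly $s$ steps is at most $\pi_n^s$, and the expected size of the ball $N_n(i;s)$ is at most $\sum_{t\le s}\pi_n^t \le (s+1)(\pi_n\vee 1)^s$. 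Since the $\theta_{n,s}$ in Condition NF are themselves random (they depend on $M$ and $\pi_n$), the natural route is to work conditionally on $\C_n$ — which contains $A_n$ and hence $\varphi_n$ — and to bound the \emph{realized} quantities $\delta_n^\partial(s;k)$ and $\Delta_n(s,m;k)$ using, say, a high-probability event on which the neighborhoods do not exceed a constant multiple of their expectations; alternatively one bounds the relevant averages directly in expectation and invokes a Markov/Borel–Cantelli argument to pass to ``eventually with probability one.''

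Next I would plug these bounds into the definition \eqref{eq:c_n} of $c_n(s,m_n;k)$. Using $\Delta_n(s,m_n;k\alpha)\le \delta_n^\partial(m_n;k\alpha)$ up to polynomial-in-$m_n$ factors (since $N_n(i;m_n)\setminus N_n(j;s-1)$ is contained in a ball of radius $m_n$), and optimizing over $\alpha>1$, one gets a bound of the form $c_n(s,m_n;k) \lesssim \mathrm{poly}(m_n)\,(\pi_n\vee 1)^{\kappa s}$ for a fixed exponent $\kappa$ (I expect $\kappa$ on the order of $k$ or $2k$, coming from the two factors $\Delta_n$ and $\delta_n^\partial$ each contributing a power of $(\pi_n\vee 1)^s$). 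Then Condition ND(b) becomes a statement that
\[
	\frac{\mathrm{poly}(m_n)}{n^{k/2}}\sum_{s\ge 0} (\pi_n\vee 1)^{\kappa s}\, \theta_{n,s}^{1-\frac{k+2}{p}} = o_{a.s.}(1).
\]
By Condition NF, $\theta_{n,s}^{1-(k+2)/p} \le M^{1-(k+2)/p}((\pi_n\vee 1)+\varepsilon)^{-q(1-(k+2)/p)s}$ eventually a.s., so the summand is bounded by a constant times $\big[(\pi_n\vee 1)^{\kappa}((\pi_n\vee 1)+\varepsilon)^{-q(1-(k+2)/p)}\big]^s$. The condition $q > \max\{p/(p-4),\,3p/(p-1)\}$ is exactly what makes $q(1-(k+2)/p) > \kappa$ for $k\in\{1,2\}$ (with $p/(p-4)$ handling $k=2$ and $3p/(p-1)$ handling a $k=1$-type term — here I would need to match the exponent $\kappa$ produced by the neighborhood bounds against these two thresholds), so the geometric series converges and is in fact bounded by a summable ratio uniformly once $\pi_n$ is controlled; the $\mathrm{poly}(m_n)/n^{k/2}$ prefactor then drives the whole expression to zero for any sub-polynomially growing $m_n$, and one is free to choose $m_n\to\infty$ slowly (e.g. $m_n = \lfloor\log n\rfloor$). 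Condition ND(a), $\theta_{n,m_n}^{1-1/p} = o_{a.s.}(n^{-3/2})$, follows similarly and more directly: $\theta_{n,m_n}^{1-1/p}\le M^{1-1/p}((\pi_n\vee 1)+\varepsilon)^{-q(1-1/p)m_n}$, and with $m_n = \lfloor \log n\rfloor$ the right side decays faster than any polynomial in $n$, so in particular faster than $n^{-3/2}$, provided $((\pi_n\vee 1)+\varepsilon)^{-q(1-1/p)} $ stays bounded away from $1$ — which it does since $\pi_n\vee 1 \ge 1$ and $\varepsilon>0$.

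The main obstacle is the first step: getting an honest, almost-sure bound on the random neighborhood-size averages $\delta_n^\partial(s;k)$ and $\Delta_n(s,m_n;k)$ in terms of $\pi_n$, uniformly in $s$, given that the graph is random and the $\theta_{n,s}$-bound from Condition NF is itself random through $M$ and $\pi_n$. The delicate point is that $\pi_n$ is a \emph{conditional} expected degree given $\varphi_n$, not a deterministic degree bound, so realized neighborhoods can exceed $\pi_n^s$; one must either restrict to a large-probability event and argue that its complement is negligible in the ``eventually with probability one'' sense via Borel–Cantelli, or work with conditional expectations throughout and only at the end convert an $L^1$-type bound into the almost-sure $o_{a.s.}$ statements in Condition ND. Handling the interaction between the randomness of the graph, the randomness of $M$, and the supremum over $s$ (which ranges up to $n$) carefully — while keeping the polynomial-in-$m_n$ losses under control — is where the real work lies; the rest is bookkeeping with geometric series and the arithmetic of the exponent conditions on $q$.
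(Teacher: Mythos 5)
Your high-level route (Bernstein/Chung--Lu concentration for neighborhood sizes in terms of $\pi_n$, Borel--Cantelli to pass to eventually-a.s.\ statements, then geometric-series bookkeeping against Condition NF) is the same as the paper's, but there is a genuine gap at precisely the quantitative step you treat as bookkeeping. The factor coming from $\Delta_n(s,m_n;k\alpha)$ in \eqref{eq:c_n} is \emph{not} polynomial in $m_n$: $\Delta_n(s,m_n;k\alpha)$ is controlled by the size of balls $N_n(i;m_n)$, which on the high-probability event is of order $(\pi_n\vee 1)^{m_n}$ times $\log n$ and polynomial factors, so after the H\"older/infimum-over-$\alpha$ step the bound takes the form $c_n(s,m_n;k)\le q_n(k)\, s^2(\pi_n\vee 1)^{s}$ with $q_n(k)$ proportional to $(m_n+1)^{3k}(\log n)^{k+1}(\pi_n\vee 1)^{m_n k}$ (also note the $s$-exponent is $1$, from the shell factor alone, not ``$k$ or $2k$''). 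Consequently you are not free to take $m_n=\lfloor \log n\rfloor$: Condition ND(b) forces $(\pi_n\vee 1)^{m_n k}=o(n^{k/2})$, i.e.\ $m_n\log(\pi_n\vee 1)<\tfrac{1}{2}\log n$, which fails for $m_n=\log n$ as soon as $\log(\pi_n\vee 1)>\tfrac12$; and for Condition ND(a) your claim that $((\pi_n\vee 1)+\varepsilon)^{-q(1-1/p)m_n}$ decays faster than any polynomial is false when $\pi_n$ stays near $1$: with $m_n=\log n$ it equals $n^{-q(1-1/p)\log((\pi_n\vee 1)+\varepsilon)}$, a polynomial rate whose exponent can be below $3/2$ because the $\varepsilon$ supplied by Condition NF may be arbitrarily small.

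The missing idea is the $\pi_n$-adapted choice of the truncation sequence. The paper takes $\varepsilon'<\varepsilon$ with $(1+\varepsilon')\log((\pi_n\vee 1)+\varepsilon')\le \log((\pi_n\vee 1)+\varepsilon)$ and sets
\[
	m_n=\frac{\log n}{2(1+\varepsilon')\log((\pi_n\vee 1)+\varepsilon')},
\]
so that $m_n$ is small when the network is dense (making $q_n(k)/n^{k/2}\to 0$, which rescues ND(b)) and large when it is sparse (making $\theta_{n,m_n}^{1-1/p}\le ((\pi_n\vee 1)+\varepsilon)^{-q(1-1/p)m_n}\le n^{-q(1-1/p)/2}=o(n^{-3/2})$, using $q>3p/(p-1)$, which rescues ND(a)); the remaining series $\sum_{s\ge 1}s^2(\pi_n\vee 1)^s\theta_{n,s}^{1-(k+2)/p}$ is bounded uniformly via $q>p/(p-4)$ for $k\in\{1,2\}$. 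Without tying $m_n$ to $\pi_n$ in this way, the two requirements pull in opposite directions and your argument does not close; in addition, the concentration-plus-Borel--Cantelli step you defer (the paper's bounds $\delta_n^\partial(s;k)\le (5.7\,s^2(\pi_n\vee 1)^s\log n)^k$ and its ball analogue, holding simultaneously for all $1\le s\le n$ with probability at least $1-2n^{-1.3}$) is an essential ingredient that must actually be proved, not assumed.
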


Thus, Condition NF is a sufficient condition on the network formation model for the LLN and CLT in \eqref{eq:LLN_CLT} for $\{Y_{n,i}\}$. The proof of Lemma \ref{lemma:network_form} is found in the Supplemental Note to this paper. The proof is built on a bound on the tail probability of $\delta_n^\partial(s;k)$. This bound is obtained using similar arguments in \cite{Chung/Lu:01:AAM} for the case of Erd\"{o}s--R\'{e}nyi graphs.

\subsection{Law of Large Numbers}

Let $\{Y_{n,i}\}$ be conditionally $\psi$\hyp{}dependent given $\{\C_n\}$.
Since a LLN can be applied element-by-element in the vector case, without loss of generality we can assume that $Y_{n,i}\in\R$ in this section, i.e., $v=1$.

Let $\norm{Y_{n,i}}_{\C_n,p}=(\E[ \vert Y_{n,i} \vert^p \mid \C_n])^{1/p}$. We assume the following moment condition.
%
\begin{assumption} For some $\epsilon>0$,
	\label{assu:Uniform_L1_Integrability}
		$\sup_{n\geq1}\max_{i\in N_n} \norm{Y_{n,i}}_{\C_n,1+\epsilon}<\infty$ a.s.
\end{assumption}

%
The next assumption puts a restriction on the denseness of the network and the rate of decay of dependence with the network distance.
\begin{assumption}
	\label{assu:LLN_suff_delta}
	$n^{-1}\sum_{s\ge 1}\delta^\partial_n(s)\theta_{n,s}\to_{a.s.}0$.
\end{assumption}
Note that the above assumption is implied by Condition ND(b) for $k=2$: $\theta_{n,s}\leq \theta_{n,s}^{1-4/p}$ for $p>4$, and $ \delta_n^\partial(s) \leq 4c_n(s,m;2)$.\footnote{The second inequality follows from \eqref{eq:H_bound} and \eqref{eq:delta_H_bound} in the supplement.}

Assumption \ref{assu:LLN_suff_delta}
can fail, for example, if there is a node connected to almost every other node in the network as in the following example. Consider a network with the \emph{star} topology, which has a central node or hub connected to every other node. In this case, the distance between any two nodes does not exceed 2: $\delta_n^\partial(1)=2(n-1)/n$, $\delta_n^\partial(2)=(n-2)(n-1)/n$, and $\delta_n^\partial(s)=0$ for $s\geq 3$. Hence, Assumption \ref{assu:LLN_suff_delta} fails for a star network, unless $\theta_{n,2}=0$, i.e., unless there is no network dependence at the distance $s>1$.

Alternatively, consider a network with the ring topology, where nodes are connected in a circular fashion to form a loop, see Figure \ref{fig:ring_net}(A) in \ref{sec:HAC}. In that case, $\delta_n^\partial(s)\leq 2$, and Assumption \ref{assu:LLN_suff_delta} holds when $n^{-1}\sum_{s\ge 1}\theta_{n,s}\to_{a.s.} 0$.

The following theorem establishes a conditional LLN.
\begin{theorem}
	\label{thm:pointwise_LLN}Suppose that $\{Y_{n,i}\}$ is conditionally $\psi$-dependent given $\{\C_n\}$ and Assumptions \ref{assu:psi_dep}(a), \ref{assu:Uniform_L1_Integrability}, and \ref{assu:LLN_suff_delta} hold. Then as $n\to\infty$,
	\[
		\norm{\frac{1}{n}\sum_{i \in N_n} \left(Y_{n,i}-\E[Y_{n,i} \mid \C_n ]\right)}_{\C_n,1}\to_{a.s.} 0.
	\]
\end{theorem}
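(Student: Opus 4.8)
The plan is to combine a Lipschitz truncation with a conditional second-moment bound, exploiting that $\psi$-dependence only controls covariances of \emph{bounded Lipschitz} transforms. Since $v=1$, write $S_n=\frac{1}{n}\sum_{i\in N_n}(Y_{n,i}-\E[Y_{n,i}\mid\C_n])$. For $T>0$ let $\varphi_T(y)=\max\{\min\{y,T\},-T\}$, which lies in $\L_{1,1}$ with $\norm{\varphi_T}_\infty\le T$ and $\Lip(\varphi_T)\le 1$, and set $R_{n,i}=Y_{n,i}-\varphi_T(Y_{n,i})$, so that $\abs{R_{n,i}}\le\abs{Y_{n,i}}\ind\{\abs{Y_{n,i}}>T\}$. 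Decompose $S_n=S_n^T+\widetilde S_n^T$, with $S_n^T=\frac{1}{n}\sum_i(\varphi_T(Y_{n,i})-\E[\varphi_T(Y_{n,i})\mid\C_n])$ and $\widetilde S_n^T=\frac{1}{n}\sum_i(R_{n,i}-\E[R_{n,i}\mid\C_n])$.

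First I would dispose of the tail part $\widetilde S_n^T$ uniformly in $n$ using the uniform integrability in Assumption \ref{assu:Uniform_L1_Integrability}. By the conditional H\"older inequality and Markov's inequality, $\E[\abs{R_{n,i}}\mid\C_n]\le\norm{Y_{n,i}}_{\C_n,1+\epsilon}^{1+\epsilon}T^{-\epsilon}\le M_\ast^{1+\epsilon}T^{-\epsilon}$ a.s., where $M_\ast=\sup_{n\ge1}\max_{i\in N_n}\norm{Y_{n,i}}_{\C_n,1+\epsilon}<\infty$ a.s. Hence, by the conditional triangle inequality for $\norm{\cdot}_{\C_n,1}$, $\norm{\widetilde S_n^T}_{\C_n,1}\le\frac{2}{n}\sum_i\E[\abs{R_{n,i}}\mid\C_n]\le 2M_\ast^{1+\epsilon}T^{-\epsilon}$ for every $n$.

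Next I would bound the truncated part through its conditional variance. By the conditional Jensen inequality, $\norm{S_n^T}_{\C_n,1}\le\norm{S_n^T}_{\C_n,2}$ and $\norm{S_n^T}_{\C_n,2}^2=\frac{1}{n^2}\sum_{i,j\in N_n}\Cov(\varphi_T(Y_{n,i}),\varphi_T(Y_{n,j})\mid\C_n)$. Each diagonal term is at most $\E[\varphi_T(Y_{n,i})^2\mid\C_n]\le T^2$, and for $i\ne j$ with $d_n(i,j)=s\ge1$, conditional $\psi$-dependence together with Assumption \ref{assu:psi_dep}(a) gives $\abs{\Cov(\varphi_T(Y_{n,i}),\varphi_T(Y_{n,j})\mid\C_n)}\le\psi_{1,1}(\varphi_T,\varphi_T)\theta_{n,s}\le C(T+1)^2\theta_{n,s}$ a.s. Since the number of $j$ with $d_n(i,j)=s$ equals $\abs{N_n^\partial(i;s)}$ and $\sum_{i\in N_n}\abs{N_n^\partial(i;s)}=n\delta_n^\partial(s)$, this yields
\[
	\norm{S_n^T}_{\C_n,2}^2\le\frac{T^2}{n}+\frac{C(T+1)^2}{n}\sum_{s\ge1}\delta_n^\partial(s)\theta_{n,s},
\]
whose right-hand side tends to $0$ a.s.\ for each fixed $T$ by Assumption \ref{assu:LLN_suff_delta}.

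Finally I would assemble the pieces, the only genuinely delicate point being the order of the $T\to\infty$ and $n\to\infty$ limits, since $M_\ast$ is random. On the probability-one event on which $M_\ast<\infty$ and the displayed bound converges to zero for every positive integer $T$ (a countable intersection of probability-one events), fix $\eta>0$, choose an integer $T>(4M_\ast^{1+\epsilon}/\eta)^{1/\epsilon}$, and let $n\to\infty$ to obtain $\limsup_n\norm{S_n}_{\C_n,1}\le\limsup_n\norm{S_n^T}_{\C_n,2}+2M_\ast^{1+\epsilon}T^{-\epsilon}\le\eta$ a.s.; letting $\eta\downarrow0$ completes the proof. Apart from that bookkeeping, the argument is a routine conditional-variance estimate; the one place where care is needed upstream is choosing the truncation $\varphi_T$ to be Lipschitz (not merely bounded), so that the $\psi$-dependence inequality in Definition \ref{def:psi_dep} applies to $\varphi_T(Y_{n,i})$.
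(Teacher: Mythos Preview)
Your proof is correct and follows essentially the same approach as the paper: both split via the Lipschitz censoring $\varphi_T$, control the truncated part by a conditional second-moment bound using the $\psi$-dependence covariance inequality (yielding $T^2/n$ plus $C(T+1)^2 n^{-1}\sum_{s\ge1}\delta_n^\partial(s)\theta_{n,s}$), and handle the tail via the $(1+\epsilon)$-moment assumption. The only cosmetic differences are that the paper cites a uniform-integrability lemma from Davidson for the tail while you give the explicit H\"older/Markov bound $M_\ast^{1+\epsilon}T^{-\epsilon}$, and you are slightly more careful than the paper about the almost-sure event on which the $T\to\infty$ and $n\to\infty$ limits are interchanged.
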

An unconditional version of the result, which replaces the conditional norm in Theorem \ref{thm:pointwise_LLN} with the unconditional norm, can be established in a similar manner by replacing the conditional moment in Assumption \ref{assu:Uniform_L1_Integrability} with the unconditional moment.



Next, we discuss LLNs for nonlinear functions of $\{Y_{n,i}\}$. When $f\in \L_{v,1}$, a LLN for a nonlinear transformation $f(Y_{n,i})$ follows immediately from the definition of the $\psi$-dependence in Definition \ref{def:psi_dep}.\footnote{
	Note that compositions of bounded Lipschitz functions are also bounded and Lipschitz; hence, $f(Y_{n,i})$ is also $\psi$-dependent when $f\in\L_v$.
}
In that case,
\[
	\norm{\frac{1}{n}\sum_{i \in N_n} \left(f(Y_{n,i})-\E[f(Y_{n,i}) \mid \C_n ]\right)}_{\C_n,2}^2 \leq \frac{2}{n}  \norm{ f}_{\infty}^2 +\psi_{1,1}(f,f) \frac{1}{n} \sum_{s\ge 1}\delta^\partial_n(s)\theta_{n,s}.
\]
We have the following result.
\begin{prop}
Suppose that $\{Y_{n,i}\}$ is conditionally $\psi$-dependent given $\{\C_n\}$, Assumption \ref{assu:LLN_suff_delta} holds, and $f\in \L_{v,1}$. Then as $n\to\infty$,
\[
	\norm{\frac{1}{n}\sum_{i \in N_n}(f(Y_{n,i})-\E[f(Y_{n,i}) \mid \C_n])}_{\C_n,2} \to_{a.s.} 0.
\]
\end{prop}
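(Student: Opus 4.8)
The plan is to control the conditional second moment of the centered sum directly, exactly as foreshadowed by the display preceding the statement, and then invoke Assumption~\ref{assu:LLN_suff_delta}. Write $S_n = \sum_{i\in N_n}\bigl(f(Y_{n,i}) - \E[f(Y_{n,i})\mid \C_n]\bigr)$. Since $S_n$ is a finite sum and $f$ is bounded, expanding the square and taking the conditional expectation termwise gives, almost surely,
\[
	\E\bigl[S_n^2\mid \C_n\bigr] = \sum_{i\in N_n}\Var\bigl(f(Y_{n,i})\mid \C_n\bigr) + \sum_{\substack{i,j\in N_n\\ i\ne j}}\Cov\bigl(f(Y_{n,i}),f(Y_{n,j})\mid \C_n\bigr),
\]
and I would bound the two pieces separately.

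For the diagonal part, boundedness of $f$ yields $\Var(f(Y_{n,i})\mid\C_n)\le \|f\|_\infty^2$ (a slightly cruder bound, e.g. $|f(Y_{n,i})-\E[f(Y_{n,i})\mid\C_n]|\le 2\|f\|_\infty$, accounts for the constant in the displayed inequality), so this part is $O(n)$ and contributes $O(1/n)$ after dividing by $n^2$. For the off-diagonal part, since $f\in\L_{v,1}$, for any $i\ne j$ the singleton pair $(\{i\},\{j\})$ belongs to $\P_n(1,1;s)$ with $s=d_n(i,j)\ge 1$, so Definition~\ref{def:psi_dep} applied with $a=b=1$ gives $|\Cov(f(Y_{n,i}),f(Y_{n,j})\mid\C_n)|\le \psi_{1,1}(f,f)\,\theta_{n,s}$ almost surely, where $\psi_{1,1}(f,f)\in[0,\infty)$ is a fixed constant (finite by definition, and bounded via Assumption~\ref{assu:psi_dep}(a)). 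Grouping the ordered pairs by distance and using the exact count $|\{(i,j)\in N_n^2 : d_n(i,j)=s\}| = \sum_{i\in N_n}|N_n^\partial(i;s)| = n\,\delta_n^\partial(s)$, coming from \eqref{eq:nodes_sets}--\eqref{eq:delta_n}, the off-diagonal contribution after dividing by $n^2$ is at most $\psi_{1,1}(f,f)\,n^{-1}\sum_{s\ge 1}\delta_n^\partial(s)\theta_{n,s}$. Together these reproduce the displayed bound on $\|n^{-1}S_n\|_{\C_n,2}^2$.

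To conclude, Assumption~\ref{assu:LLN_suff_delta} states precisely that $n^{-1}\sum_{s\ge 1}\delta_n^\partial(s)\theta_{n,s}\to_{a.s.}0$, while $1/n\to 0$; hence the right-hand side of that bound tends to zero almost surely, so $\|n^{-1}S_n\|_{\C_n,2}^2\to_{a.s.}0$, and taking square roots gives the claim. I do not anticipate a genuine obstacle: the proposition is essentially a repackaging of the second-moment estimate already recorded in the text, with all substantive content carried by Assumption~\ref{assu:LLN_suff_delta}. The only points needing mild care are bookkeeping ones — that the conditional covariance inequality of Definition~\ref{def:psi_dep} holds on a single almost-sure event valid simultaneously over the finitely many pairs at each $n$ and over the countably many $n$, so that the displayed bound is an almost-sure inequality for every $n$, and that the pair-counting identity above is an equality rather than merely an upper bound. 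One could instead note that $f(Y_{n,i})$ is itself bounded and Lipschitz, hence conditionally $\psi$-dependent with the same coefficients $\{\theta_n\}$, but applying Definition~\ref{def:psi_dep} directly to the scalar map $f$ is cleaner and avoids that detour.
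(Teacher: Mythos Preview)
Your proposal is correct and follows essentially the same approach as the paper: the paper's proof is nothing more than the displayed second-moment bound immediately preceding the proposition, obtained exactly as you describe by expanding the conditional variance, bounding the diagonal via $\|f\|_\infty$, bounding the off-diagonal covariances via Definition~\ref{def:psi_dep} with $a=b=1$, and grouping by network distance to produce $\psi_{1,1}(f,f)\,n^{-1}\sum_{s\ge 1}\delta_n^\partial(s)\theta_{n,s}$, after which Assumption~\ref{assu:LLN_suff_delta} finishes the argument.
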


However, in general nonlinear transformations of $\psi$-dependent processes are not necessarily $\psi$-dependent. In such cases, LLNs for nonlinear transformations can be established using the covariance inequalities for transformation functions presented in Appendix \ref{subsec:covariance_inequalities} in this paper.
For example, suppose that the assumptions of Corollary \ref{cor:cov_ineq_simple} in the appendix hold for some nonlinear function $h(\cdot)$ of a $\psi$-dependent process $\{Y_{n,i}\}$, and that $\theta_{n,s}$ is bounded by a constant uniformly over $s \ge 1$ and $n \ge 1$. In that case for some constants $C>0$ and $p>2$, the conditional covariance given $\C_n$ between $h(Y_{n,i})-\E[h(Y_{n,i}) \mid \C_n]$ and $h(Y_{n,j})-\E[h(Y_{n,j}) \mid \C_n]$ is bounded by
\[
 	C\cdot \sup_{n,i}\norm{h(Y_{n,i})}_{\C_n,p}^2 \cdot \theta_{n,d_n(i,j)}^{1-\frac{2}{p}}.
\]
Therefore, as $n\to\infty$,
\[
	 \norm{\frac{1}{n}\sum_{i \in N_n}(h(Y_{n,i})-\E[h(Y_{n,i}) \mid \C_n])}_{\C_n,2} \to_{a.s.} 0,
\]
provided that $\sup_{n,i}\norm{h(Y_{n,i})}_{\C_n,p}<\infty$ a.s., and a condition similar to that in Assumption \ref{assu:LLN_suff_delta} holds:
\[
	\frac{1}{n} \sum_{s\ge 1} \delta_n^\partial(s) \theta_{n,s}^{1-\frac{2}{p}}=o_{a.s.}(1).
\]
Cases not covered by Corollary \ref{cor:cov_ineq_simple} can be handled in a similar manner using the covariance inequality of Theorem \ref{thm:cov_ineq2} in Appendix \ref{subsec:covariance_inequalities}. We use such a strategy to show the consistency of the HAC estimator in Section \ref{sec:HAC}.
\subsection{Central Limit Theorem}

In this section, we study the CLT for a sum of random variables that are conditionally $\psi$-dependent. Define
\begin{equation}
\label{eq:sigma_n2}
	\sigma_n^2 = \Var(S_n\mid \C_n),
\end{equation}
where $S_n = \sum_{i \in N_n} Y_{n,i}$. The assumption below presents a moment condition.
\begin{assumption}
	\label{assu:moment}
	For some $p>4$, $\sup_{n\ge 1}\max_{i \in N_n} \|Y_{n,i}\|_{\C_n,p} < \infty$ a.s.
\end{assumption}
While the moment condition in Assumption \ref{assu:moment} is more restrictive than those conditions known for the CLT for special cases of $\psi$-dependence, such a moment condition is widely used in many models in practice. The following assumption limits the extent of the cross-sectional dependence of the random variables through restrictions on the network.
\begin{assumption}
	\label{assu:theta_bound}
	There exists a positive sequence $m_n \to \infty$ such that for $k=1,2$,
	\begin{align*}
		&\frac{n}{\sigma_n^{2+k}}\sum_{s\ge 0} c_n(s,m_n;k) \theta_{n,s}^{1-\frac{2+k}{p}} \rightarrow_{a.s.} 0, \text{ and } \\
		&\frac{n^2 \theta_{n,m_n}^{1-(1/p)}}{\sigma_n} \to_{a.s.} 0,
	\end{align*}
	as $n \rightarrow \infty$, where $p>4$ is that appears in Assumption \ref{assu:moment}.
\end{assumption}
It is not hard to see that Condition ND is a sufficient condition for this assumption, when $\sigma_n \ge c \sqrt{n}$ with probability one, for some constant $c>0$ that does not depend on $n$. The latter condition is satisfied if the ``long-run variance", $\Var(S_n\mid \C_n)/n$ is bounded away from $c^2>0$ for all $n \ge 1$.

The theorem below establishes the CLT for the normalized sum $S_n/\sigma_n$.

\begin{theorem}
\label{thm:CLT}
Suppose that Assumptions \ref{assu:psi_dep}, \ref{assu:moment}-\ref{assu:theta_bound} hold, and that $\E[Y_{n,i}\mid \C_n]=0$ a.s. Then
\[
	\sup_{t \in \R}\abs{\PM\left\{\frac{S_n}{\sigma_n} \le t\mid \C_n\right\} - \Phi(t)} \to_{a.s.} 0, \text{ as } n \to \infty,
\]
where $\Phi$ denotes the distribution function of $\mathcal{N}(0,1)$.
\end{theorem}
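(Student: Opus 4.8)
The plan is to use Stein's method for normal approximation, combined with a blocking/truncation argument governed by the bandwidth $m_n$. First I would condition on $\C_n$ throughout, so that everything below is understood to hold conditionally, a.s. Working with $W_n = S_n/\sigma_n$, the goal is to bound the Stein discrepancy $\E[f'(W_n) - W_n f(W_n)\mid\C_n]$ uniformly over the solutions $f$ of the Stein equation $f'(w) - w f(w) = \ind\{w \le t\} - \Phi(t)$, using the standard bounds $\|f\|_\infty \le \sqrt{\pi/2}$, $\|f'\|_\infty \le 2$, and the fact that $f$ has Lipschitz-type control. For each node $i$, introduce the local neighborhood sum $S_{n,i} = \sum_{j \in N_n(i;m_n)} Y_{n,j}$ and write $W_n f(W_n)$ via the decomposition $\sum_i (Y_{n,i}/\sigma_n) f(W_n)$, then Taylor-expand $f(W_n)$ around $f(W_n - S_{n,i}/\sigma_n)$. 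The term $W_n - S_{n,i}/\sigma_n$ is "almost independent" of $Y_{n,i}$ in the $\psi$-dependence sense because it only involves nodes at distance $> m_n$ from $i$, and the error from this near-independence is precisely what Condition/Assumption \ref{assu:theta_bound} and the quantity $\theta_{n,m_n}^{1-1/p}$ are designed to control via the covariance inequality of Theorem \ref{thm:cov_ineq2} in the appendix.

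The key steps, in order: (1) reduce to bounding $|\E[f'(W_n) - W_n f(W_n)\mid\C_n]|$ over Stein solutions $f$; (2) perform the neighborhood decomposition $W_n f(W_n) = \sum_i (Y_{n,i}/\sigma_n)[f(W_n) - f(W_n - S_{n,i}/\sigma_n)] + \sum_i (Y_{n,i}/\sigma_n) f(W_n - S_{n,i}/\sigma_n)$; (3) for the second sum, use the $\psi$-dependence bound (Theorem \ref{thm:cov_ineq2}) together with $\E[Y_{n,i}\mid\C_n]=0$ to show it is negligible — this is where the $n^2 \theta_{n,m_n}^{1-1/p}/\sigma_n \to_{a.s.} 0$ part of Assumption \ref{assu:theta_bound} enters; (4) for the first sum, Taylor-expand to first order, producing a main term $\frac{1}{\sigma_n^2}\sum_i Y_{n,i} S_{n,i} \cdot (\text{something like } f'(W_n))$ plus a remainder; (5) show the main term concentrates around $f'(W_n) \cdot \sigma_n^{-2}\E[\sum_i Y_{n,i} S_{n,i}\mid\C_n]$ and that $\sigma_n^{-2}\E[\sum_i Y_{n,i}S_{n,i}\mid\C_n] \to 1$ because the contributions of pairs at distance $> m_n$ vanish; (6) bound the Taylor remainder and the variance of the main term by moment bounds controlled through $c_n(s,m_n;k)$ for $k=1,2$ — the $k=1$ piece handles the $|\E[\cdots]|$ bias-type term and the $k=2$ piece handles the second-moment/variance term, matching exactly the two sums in Assumption \ref{assu:theta_bound}. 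Moment bounds on local sums of the form $\E[|S_{n,i}|^k\mid\C_n]$, and more generally on the relevant cross terms, should follow from a Marcinkiewicz--Zygmund-type inequality for $\psi$-dependent arrays (presumably established in the appendix or derived from Assumption \ref{assu:moment} and Assumption \ref{assu:psi_dep}), where $c_n(s,m_n;k)$ arises from summing neighborhood-shell sizes via Hölder's inequality with the conjugate exponents $\alpha$ and $\alpha/(\alpha-1)$ built into the definition \eqref{eq:c_n}.

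The hard part will be step (6): carefully tracking how the combinatorial quantity $c_n(s,m_n;k)$ emerges when one bounds third-moment-type sums $\sum_{i,j,l}$ over triples of nodes with prescribed distance constraints, and verifying that the exponent $1 - (2+k)/p$ on $\theta_{n,s}$ is exactly what the covariance inequality of Theorem \ref{thm:cov_ineq2} delivers after optimizing the Hölder split (this is the role of the $\inf_{\alpha>1}$ in \eqref{eq:c_n}). A secondary obstacle is making the "a.s." bookkeeping rigorous: since $\theta_n$, $\sigma_n$, and all the network quantities are $\C_n$-measurable and random, every $o_{a.s.}(1)$ statement must be handled pathwise, and one must ensure the Stein bound holds on a set of probability one uniformly in $t$ — the uniformity in $t$ is automatic from the uniform bounds on the Stein solution, so the only real care needed is to intersect countably many probability-one events coming from Assumptions \ref{assu:moment}--\ref{assu:theta_bound}. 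Finally, I would remark that the convergence of $\sigma_n^{-2}\E[\sum_i Y_{n,i}S_{n,i}\mid\C_n]$ to $1$ requires no extra assumption because $S_{n,i}$ was defined using the full $m_n$-neighborhood, so $\E[S_n^2\mid\C_n] - \E[\sum_i Y_{n,i}S_{n,i}\mid\C_n]$ is a sum of covariances across distance $> m_n$, bounded by $n \sum_{s > m_n}\delta_n^\partial(s)\theta_{n,s}^{1-2/p}$ times moments, which is dominated by the $k=2$ term in Assumption \ref{assu:theta_bound}.
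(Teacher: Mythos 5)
Your proposal follows essentially the same route as the paper's proof: Stein's method with the $m_n$-neighborhood truncation, and a three-part decomposition in which your steps (3), (4)/(6), and (5) correspond exactly to the terms $A_{n,3}$, $A_{n,2}$, and $A_{n,1}=R_{n,1}+R_{n,2}$ in Lemma \ref{lemma:CLT}, with the covariance inequalities of Appendix \ref{subsec:covariance_inequalities} (Theorem \ref{thm:cov_ineq1} and Corollary \ref{corr:cov_ineq_prod}) producing the $c_n(s,m_n;1)\theta_{n,s}^{1-3/p}$ and $c_n(s,m_n;2)\theta_{n,s}^{1-4/p}$ sums that Assumption \ref{assu:theta_bound} is designed to control. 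Two points in your write-up, however, would not close as stated.

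First, your Taylor-remainder step needs a second-derivative bound on the Stein solution, but for the Kolmogorov test function $\ind\{w\le t\}$ the solution has bounded $f$ and $f'$ only; $f'$ jumps at $t$, so no uniform $\|f''\|_\infty$ bound is available and "uniformity in $t$ is automatic" skips precisely the delicate part. The paper resolves this by smoothing the indicator at scale $\varepsilon$ (the functions $h_\pm$ in Lemma \ref{lemma:Berry-Esseen}), using $\|g''\|_\infty\le 2/\varepsilon$, and then optimizing over $\varepsilon$; this balancing is what generates the square-root form of the Berry--Esseen-type bound and explains why the $k=1$ condition of Assumption \ref{assu:theta_bound} enters with the normalization $n/\sigma_n^{3}$. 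Second, your closing remark that the distant-pair bias $\sigma_n^{-2}\sum_{i}\sum_{j:\,d_n(i,j)>m_n}\E[Y_{n,i}Y_{n,j}\mid\C_n]$ is "dominated by the $k=2$ term with no extra assumption" is incorrect as stated: your bound is of order $(n/\sigma_n^{2})\sum_{s>m_n}\delta_n^\partial(s)\theta_{n,s}^{1-2/p}$, whereas the $k=2$ condition controls $(n/\sigma_n^{4})\sum_{s\ge 0}c_n(s,m_n;2)\theta_{n,s}^{1-4/p}$; the normalizations differ by a factor $\sigma_n^{2}$, so the claimed domination fails whenever $\sigma_n\to\infty$ (the typical case $\sigma_n^2\asymp n$). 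The paper instead bounds this term ($R_{n,2}$) crudely by $Cn^2\mu_{n,p}^{2}\,\theta_{n,m_n}^{1-2/p}/\sigma_n^{2}$ via Theorem \ref{thm:cov_ineq1} and makes it negligible through the rapid decay at lag $m_n$ (the $\theta_{n,m_n}$ part of Assumption \ref{assu:theta_bound}, together with Condition ND-type behavior of $\sigma_n$), and your argument needs to be tied to that mechanism rather than to the $k=2$ sum.
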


The proof of the CLT uses Stein's Lemma \citep{Stein:86}. The CLT immediately gives a stable convergence of a normalized sum of random variables under appropriate conditions. More specifically, suppose that
\[
	\sigma_n^2/(nv^2) \rightarrow_{a.s.} 1,
\]
where $v^2$ is a random variable that is $\C$-measurable and $\C$ is a sub $\sigma$-field of $\C_n$ for all $n \ge 1$. Then it follows that $S_n/\sqrt{n}$ converges stably to a mixture normal random variable.

\section{Network HAC Estimation}\label{sec:HAC}

In this section, we develop network HAC estimation of the conditional variance of $S_n/\sqrt{n}$ given $\C_n$, where $S_n\eqdef\sum_{i\in N_n}Y_{n,i}$. First, we assume that $\E[Y_{n,i}\mid \C_n]=0$ a.s.\ for all $i\in N_n$. Let
\begin{equation}
\label{eq:Omega_ns}
	\Omega_n(s)\eqdef n^{-1}\sum_{i\in N_n}\sum_{j\in N_n^{\partial}(i;s)}\E[Y_{n,i}Y_{n,j}^{\top}\mid \C_n].
\end{equation}
Then the conditional variance of $S_n/\sqrt{n}$ given $\C_n$ is given by 
\begin{equation}\label{eq:trueV}
	V_n\eqdef\Var(S_n/\sqrt{n}\mid \C_n)=\sum_{s\ge 0}\Omega_n(s) \qtext{a.s.}
\end{equation}

Similarly to the time-series case, the asymptotic consistency of an estimator of $V_n$ requires a restriction on weights given to the estimated ``autocovariance" terms $\Omega_n(\csdot)$. Consider a kernel function $\omega:\bar{\R}\to [-1,1]$ such that $\omega(0)=1$, $\omega(z)=0$ for $\abs{z}>1$, and $\omega(z)=\omega(-z)$ for all $z \in \bar{\R}$.

Let $b_n$ denote the bandwidth or the lag truncation parameter. Then the kernel HAC estimator of $V_n$ is given by
\begin{equation}
\label{eq:HAC1}
	\tilde{V}_n=\sum_{s\ge 0}\omega_n(s)\tilde\Omega_n(s),
\end{equation}
where $\omega_n(s)\eqdef\omega(s/b_n)$, and
\begin{equation}
\label{eq:Omega_tilde}
	\tilde \Omega_n(s)\eqdef n^{-1}\sum_{i\in N_n}\sum_{j\in N_n^{\partial}(i;s)}Y_{n,i}Y_{n,j}^{\top}.
\end{equation}
The weight given for each sample covariance term $\tilde\Omega_n(s)$ is a function of distance $s$ implied by the structure of a network. Also notice that if nodes $i$ and $j$ are disconnected then $d_n(i,j)=\infty$ so that $\omega_n(d_n(i,j))=0$. 

Unlike the time series case, the number of terms included in the double sum in \eqref{eq:Omega_tilde} depends on the shape of the network. Hence, if there are many empty neighborhood shells, a large value of the bandwidth can still produce a HAC estimator that performs well in finite samples.

Next, assume that $\E[Y_{n,i}\mid \C_n]=\Lambda_n$ a.s.\ for all $i\in N_n$ and the sequence of common conditional expectations $\{\Lambda_n\}$ is unknown.\footnote{If random vectors $\{Y_{n,i}\}_{i\in N_n}$ do not share a common expectation, it is hard to justify plugging the sample mean into $\hat{\Omega}_n(\csdot)$ because $\bar{Y}_n$ is not a consistent estimator of $\E[Y_{n,i}\mid \C_n]$.} By Theorem \ref{thm:pointwise_LLN}, $\bar Y_n\eqdef S_n/n$ is a consistent estimator of $\Lambda_n$ in the sense that $\E[\normin{\bar{Y}_n-\Lambda_n}\mid \C_n]\to_{a.s.}0$. We redefine the kernel HAC estimator given in \eqref{eq:HAC1} as follows:
\begin{equation}
\label{eq:HAC2}
	\hat{V}_n=\sum_{s\ge 0}\omega_n(s)\hat\Omega_n(s),
\end{equation}
where
\begin{equation}
\label{eq:Omega_hat}
	\hat\Omega_n(s)\eqdef n^{-1}\sum_{i\in N_n}\sum_{j\in N_n^{\partial}(i;s)}\left(Y_{n,i}-\bar Y_n\right)\left(Y_{n,j}-\bar Y_n\right)^{\top}.
\end{equation}

\subsection{Consistency}\label{subsec:consistency_HAC}

We establish the consistency of the estimators \eqref{eq:HAC1} and \eqref{eq:HAC2} by imposing suitable conditions on the moments of the array $\{Y_{n,i}\}$, the denseness of a sequence of networks, and the rate of growth of the bandwidth parameter.

\begin{assumption}
\label{assu:HAC1}
There exists $p>4$ such that
\begin{enumerate}[label=(\roman*),leftmargin=*,align=left]
	\item $\sup_{n \ge 1}\max_{i\in N_n}\norm{Y_{n,i}}_{\C_n,p}<\infty$ a.s.,
	\item $\lim_{n\to\infty}\sum_{s\ge 1}\abs{\omega_n(s)-1}\delta_n^{\partial}(s) \theta_{n,s}^{1-(2/p)}=0$ a.s., and
	\item $\lim_{n\to\infty}n^{-1}\sum_{s\ge 0}c_n(s,b_n;2) \theta_{n,s}^{1-(4/p)}=0$ a.s.\label{it:weights}
\end{enumerate}
\end{assumption}

The assumption demonstrates the tradeoff between the conditional moments of $\{Y_{n,i}\}$ given $\{\C_n\}$ and the magnitude of the network dependence. For a given sequence of networks, a stronger network dependence requires the finiteness of higher conditional moments, i.e., a larger value of $p$. On the other hand, sparse networks allow for either weaker moments conditions or a stronger dependence along the network. Note that Assumptions \ref{assu:HAC1}(i) and (iii) are implied by Condition ND. 

Assumption \ref{assu:HAC1}(ii) is a high-level condition, which requires that the kernel weights $\omega_n(s)$ converge to one sufficiently fast as $n\to\infty$. Proposition \ref{prop:kernel} below provides primitive conditions for Assumption \ref{assu:HAC1}(ii) in the case of models satisfying Condition NF.

Assumption \ref{assu:HAC1}(iii) determines the admissible rate of growth of the sequence of bandwidths $\{b_n\}$. In particular, it strongly depends on the network topology. In case of models satisfying Condition NF, the following bandwidth selection rule is motivated by equation \eqref{eq:m_n} in the proof of Lemma \ref{lemma:network_form} in the Supplemental Note:
\begin{equation}\label{eq:hac_const}
	b_n=\textrm{constant}\times \frac{1}{\log(\textrm{avg.deg}\vee (1+\varepsilon))}\times \log n,
\end{equation}
where ``avg.deg" is the average degree $\delta_n^\partial(1)$ of the observed network and used to approximate $\pi_n$ in Condition NF. For example, in the case of the Parzen kernel, we found through extensive Monte Carlo simulations that setting the constant in \eqref{eq:hac_const} to $2.0$ and $\varepsilon = 0.05$ works well, see Section \ref{sec:MonteCarlo} for the details.

We define
\begin{equation}
\label{eq:delta2_n}
	\delta_n(b_n)= n^{-1} \sum_{i=1}^n \vert N_n(i;b_n) \vert.
\end{equation}
Note that $\delta_n(b_n)$ measures the denseness of a network in terms of the average size of $b_n$-neighborhoods. Let $\norm{\csdot}_F$ denote the Frobenius norm.\footnote{\label{footnote:Frobenius}
	For a real matrix $A$, $\norm{A}_F\eqdef \sqrt{\trace(A^{\top}A)}$.
}

\begin{prop}
\label{prop:HAC}
Suppose that Assumptions \ref{assu:psi_dep} and \ref{assu:HAC1} hold. Then as $n\to\infty$,
\[
	\E[\normin{\tilde{V}_n-V_n}_F\mid \C_n]\to_{a.s.} 0.
\]
If, in addition, $\delta_n(b_n)=o_{a.s.}(n)$, and $\{\theta_{n,s}/s^{p/(p-4)}\}$ are non-increasing in $s\geq 1$ a.s., then as $n\to\infty$,
\[
	\E[\normin{\hat{V}_n-V_n}_F\mid \C_n]\to_{a.s.} 0.
\]
\end{prop}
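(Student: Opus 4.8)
The plan is to decompose the error into a bias part, coming from the kernel weights $\omega_n(s)$ differing from one, and a variance part, coming from replacing the conditional autocovariances $\Omega_n(s)$ by their sample analogues $\tilde\Omega_n(s)$ (and then $\hat\Omega_n(s)$). For the first claim, write
\begin{align*}
	\tilde V_n - V_n = \sum_{s \ge 0}(\omega_n(s) - 1)\Omega_n(s) + \sum_{s \ge 0}\omega_n(s)\bigl(\tilde\Omega_n(s) - \Omega_n(s)\bigr).
\end{align*}
For the first sum on the right, I would bound $\normin{\Omega_n(s)}_F$ by $\delta_n^\partial(s)$ times a bound on $\max_{i,j}\E[\normin{Y_{n,i}Y_{n,j}^\top}_F \mid \C_n]$; the latter needs the covariance inequality for the nonlinear transform $Y_{n,i} \mapsto Y_{n,i}Y_{n,j}^\top$ (via the products of coordinates), which gives a factor $\theta_{n,s}^{1-2/p}$ after using Assumption \ref{assu:HAC1}(i) and an inequality of the type in Appendix \ref{subsec:covariance_inequalities}, so that the first sum is controlled by $\sum_{s\ge1}|\omega_n(s)-1|\delta_n^\partial(s)\theta_{n,s}^{1-2/p}$, which goes to zero a.s.\ by Assumption \ref{assu:HAC1}(ii). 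For the second sum, I would compute $\E[\normin{\tilde\Omega_n(s) - \Omega_n(s)}_F \mid \C_n]$ (or rather its square, by Jensen) and expand it into a double sum over pairs $(i,j),(i',j')$ of covariances of the products $Y_{n,i}Y_{n,j}^\top$; these are again bounded using the covariance inequalities for nonlinear transforms, organized by the pairwise distances, and then summed against the combinatorial weights that define $c_n(s,b_n;k)$ — this is exactly the same counting used in the proof of Theorem \ref{thm:CLT}, and it produces the quantities $n^{-1}\sum_{s\ge0}c_n(s,b_n;2)\theta_{n,s}^{1-4/p}$ and $n^{-1/2}\sum_{s\ge0}c_n(s,b_n;1)\theta_{n,s}^{1-3/p}$, which vanish a.s.\ by Assumption \ref{assu:HAC1}(iii) (the latter dominated by the former via Cauchy–Schwarz and Assumption \ref{assu:HAC1}(i)), using also that $\omega_n$ is bounded by one and truncates the sum to $s \le b_n$.

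For the second claim, I would write $\hat\Omega_n(s) - \tilde\Omega_n(s)$ in terms of $\bar Y_n - \Lambda_n$: expanding $(Y_{n,i}-\bar Y_n)(Y_{n,j}-\bar Y_n)^\top$ and subtracting $Y_{n,i}Y_{n,j}^\top$ (after recentering by $\Lambda_n$, which is harmless since $\Omega_n$, $\tilde V_n$, $\hat V_n$ are invariant under subtracting the common mean — here I would first reduce to the case $\Lambda_n = 0$) produces cross terms of the form $(\bar Y_n - \Lambda_n)(Y_{n,j}-\Lambda_n)^\top$ summed over neighborhoods, plus a term $|N_n(i;b_n)|(\bar Y_n - \Lambda_n)(\bar Y_n - \Lambda_n)^\top$. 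The sum over $s \le b_n$ of the latter contributes roughly $\delta_n(b_n)\normin{\bar Y_n - \Lambda_n}^2$, which is where the hypothesis $\delta_n(b_n) = o_{a.s.}(n)$ enters: combined with the LLN of Theorem \ref{thm:pointwise_LLN} (which gives $\E[\normin{\bar Y_n - \Lambda_n}\mid\C_n]\to_{a.s.}0$, and with a second-moment version a rate of order $n^{-1}\sum_s\delta_n^\partial(s)\theta_{n,s}^{1-2/p}$), the product is $o_{a.s.}(1)$. The cross terms are handled by Cauchy–Schwarz in the conditional $L^2$ norm together with the bound already obtained for the variance part; the monotonicity hypothesis on $\{\theta_{n,s}/s^{p/(p-4)}\}$ is used to convert $\sum_{s\le b_n}\delta_n^\partial(s)\theta_{n,s}^{1-2/p}$-type quantities into something controlled by $c_n(\cdot,b_n;\cdot)$ and $\delta_n(b_n)$ in a way compatible with Assumption \ref{assu:HAC1}(iii), i.e.\ to ensure the tail of the $s$-sum does not blow up once summed against the growing $b_n$-neighborhood counts.

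The main obstacle I expect is the bookkeeping in the variance part: expanding $\E[\normin{\tilde\Omega_n(s)-\Omega_n(s)}_F^2\mid\C_n]$ gives a quadruple sum over nodes, and one must split the pairs of pairs according to which of the six pairwise distances is the ``binding'' one, bound each configuration by the appropriate product of $\theta$'s raised to powers determined by $p$ and the number of distinct nodes (using the covariance inequalities of Appendix \ref{subsec:covariance_inequalities} for the relevant nonlinear transforms), and then recognize the resulting combinatorial sums as the quantities $c_n(s,b_n;1)$ and $c_n(s,b_n;2)$ via the inequalities relating $\Delta_n$, $\delta_n^\partial$, and these $c_n$'s (the same ones invoked in the footnote to Assumption \ref{assu:LLN_suff_delta}). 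This is mechanically the heaviest step and mirrors the CLT proof; the rest is comparatively routine given Theorems \ref{thm:pointwise_LLN} and the covariance inequalities. A secondary subtlety is making all ``$o_{a.s.}$'' statements uniform enough to combine — but since there are finitely many terms and each is $o_{a.s.}(1)$ under the stated assumptions, a union of null sets suffices.
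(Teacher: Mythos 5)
Your proposal follows essentially the same route as the paper's proof: the bias term is bounded by $\sum_{s\ge1}\abs{\omega_n(s)-1}\,\delta_n^\partial(s)\,\theta_{n,s}^{1-2/p}$ via the covariance inequalities (each entry of $\E[Y_{n,i}Y_{n,j}^\top\mid\C_n]$ being a conditional covariance of order $\theta_{n,s}^{1-2/p}$) and killed by Assumption \ref{assu:HAC1}(ii); the sampling-error term's conditional second moment is a quadruple sum bounded through Corollary \ref{corr:cov_ineq_prod} and the count $|H_n(s,b_n)|\le 4n\,c_n(s,b_n;2)$, killed by Assumption \ref{assu:HAC1}(iii); and the demeaning correction splits, exactly as in the paper, into a quadratic term controlled by $\delta_n(b_n)=o_{a.s.}(n)$ together with the $O_{a.s.}(n^{-1})$ LLN rate, and a cross term where the monotonicity of $\theta_{n,s}/s^{p/(p-4)}$ is used precisely to reindex the $d_n(i,k)$-based counting (the paper's $J_n(s;b_n)$) into the $H_n$/$c_n(\cdot,b_n;2)$ quantity controlled by Assumption \ref{assu:HAC1}(iii). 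One small correction: no term of the form $n^{-1/2}\sum_{s}c_n(s,b_n;1)\theta_{n,s}^{1-3/p}$ arises in the variance part, since the second moment of $\sum_s\omega_n(s)(\tilde\Omega_n(s)-\Omega_n(s))$ involves only fourth-order covariances of products; this is fortunate, because Assumption \ref{assu:HAC1} does not control the $k=1$ quantity (only Condition ND does), and the Cauchy--Schwarz domination of it by the $k=2$ quantity that you invoke is not generally valid.
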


In the second part of the proposition, the non-increasing in $s$ condition for $\{\theta_{n,s}/s^{p/(p-4)}\}$ is a mild requirement consistent with the notion of weak dependence. For example, in the linear model below Proposition \ref{prop:example_eta2} with conditionally independent $\varepsilon_{n,i}$'s, we can take $\theta_{n,s}$ to be the bound on the right hand side of \eqref{eq:theta_bound} to satisfy this monotonicity condition.

Next, we provide primitive conditions for Assumption \ref{assu:HAC1}(ii) in the case of networks satisfying Condition NF.
\begin{prop}\label{prop:kernel}
Suppose that Condition NF holds, and for some constants $C,\eta>0$,
\begin{equation}\label{eq:kernel}
	\vert \omega(x)-1 \vert \leq C\vert x\vert^{1+\eta}.
\end{equation}
Suppose further that $b_n\to\infty$ a.s.\ and $(\log{n})/b_n=O_{a.s.}(\pi_n\vee 1)$. Then Assumption \ref{assu:HAC1}(ii) is satisfied.
\end{prop}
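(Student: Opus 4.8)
Assumption~\ref{assu:HAC1}(ii) requires $\sum_{s\ge 1}\abs{\omega_n(s)-1}\,\delta_n^{\partial}(s)\,\theta_{n,s}^{1-(2/p)}\to 0$ a.s., and the plan is to split the sum at $s=\lfloor b_n\rfloor$. For $s>b_n$ we have $\omega_n(s)=\omega(s/b_n)=0$, so $\abs{\omega_n(s)-1}=1$; for $1\le s\le b_n$, the smoothness assumption \eqref{eq:kernel} gives $\abs{\omega_n(s)-1}=\abs{\omega(s/b_n)-1}\le C(s/b_n)^{1+\eta}$. It therefore suffices to show that $b_n^{-(1+\eta)}\sum_{1\le s\le b_n}s^{1+\eta}\delta_n^{\partial}(s)\theta_{n,s}^{1-(2/p)}\to 0$ and $\sum_{s>b_n}\delta_n^{\partial}(s)\theta_{n,s}^{1-(2/p)}\to 0$, both a.s.

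The core of the argument is a geometric bound on the product $\delta_n^{\partial}(s)\theta_{n,s}^{1-(2/p)}$. Set $\rho_n\eqdef(\pi_n\vee 1)+\varepsilon$. Condition NF gives $\theta_{n,s}^{1-(2/p)}\le M^{1-(2/p)}\rho_n^{-q(1-(2/p))s}$ for $1\le s\le n$, eventually with probability one. For the shell sizes I would invoke the Chung--Lu--type tail bound on $\delta_n^{\partial}(s;k)$ that underlies the proof of Lemma~\ref{lemma:network_form} in the Supplemental Note: eventually with probability one, $\delta_n^{\partial}(s)\le \bar M\rho_n^{s}$ over the range of $s$ for which $\rho_n^{s}$ does not exceed $n$, while $\delta_n^{\partial}(s)\le n$ trivially for all $s$. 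The decisive algebraic fact is that $q(1-(2/p))>2$, since $q>3p/(p-1)$ implies $q(1-(2/p))=q(p-2)/p>3(p-2)/(p-1)>2$ for $p>4$; in particular the exponent $1-q(1-(2/p))$ is negative, so with $\gamma_0\eqdef(1+\varepsilon)^{1-q(1-(2/p))}$ we have $\rho_n^{\,1-q(1-(2/p))}\le\gamma_0<1$ uniformly in $n$ because $\rho_n\ge 1+\varepsilon$. Hence, on the ``geometric'' range, $\delta_n^{\partial}(s)\theta_{n,s}^{1-(2/p)}\le \bar M M^{1-(2/p)}\gamma_0^{s}$.

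For the near-diagonal piece, $b_n^{-(1+\eta)}\sum_{1\le s\le b_n}s^{1+\eta}\bar M M^{1-(2/p)}\gamma_0^{s}\le \bar M M^{1-(2/p)}b_n^{-(1+\eta)}\sum_{s\ge 1}s^{1+\eta}\gamma_0^{s}$, which tends to $0$ a.s.\ because $b_n\to\infty$ a.s., $\gamma_0<1$ makes the series finite, and $M,\bar M$ are a.s.\ finite. For the tail I would split once more at $s_n^{*}\asymp\log n/\log\rho_n$: over $b_n<s\le s_n^{*}$ the geometric bound gives a contribution at most a constant times $\gamma_0^{b_n}\to 0$; over $s>s_n^{*}$ one uses $\delta_n^{\partial}(s)\le n$ together with $\rho_n^{s}>n$, so that $\theta_{n,s}^{1-(2/p)}\le M^{1-(2/p)}n^{-q(1-(2/p))}$, and since there are at most $n$ such terms and $q(1-(2/p))>2$, the contribution is at most a constant times $n^{\,2-q(1-(2/p))}\to 0$. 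The hypothesis $(\log n)/b_n=O_{a.s.}(\pi_n\vee 1)$ enters here, forcing $b_n$ to diverge at least at the rate $\log n/(\pi_n\vee 1)$, which is what keeps the bandwidth in pace with the neighborhood-growth scale $\log\rho_n$ closely enough that the split at $s_n^{*}$ leaves only negligible pieces. Combining the three bounds establishes Assumption~\ref{assu:HAC1}(ii).

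The step I expect to be the main obstacle is the bookkeeping around the random quantities $\delta_n^{\partial}(s)$, $\pi_n$, $M$, and $\bar M$: one must secure the Chung--Lu--type almost-sure bound on the shell sizes uniformly over the whole relevant range of $s$, splice it cleanly with the crude bound $\delta_n^{\partial}(s)\le n$ in the far tail, and carry the ``eventually with probability one'' qualifiers consistently through both levels of splitting, while checking that the bandwidth lower bound provides precisely the room needed for the far-tail term to vanish alongside $b_n^{-(1+\eta)}\to 0$.
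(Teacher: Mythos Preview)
Your route is essentially the paper's: combine a Chung--Lu shell-size bound with Condition~NF and the key exponent computation $q(1-2/p)>2$. The gap is precisely in the bookkeeping you flag at the end. The bound $\delta_n^\partial(s)\le\bar M\rho_n^s$ with $\bar M$ independent of $n$ is \emph{not} what Lemma~\ref{lemma:aux34} delivers; it gives $\delta_n^\partial(s)\le 5.7\,s^2(\pi_n\vee 1)^s\log n$ for all $1\le s\le n$ (eventually with probability one, via Borel--Cantelli), and the $s^2\log n$ factor cannot be absorbed into $((\pi_n\vee 1)+\varepsilon)^s$ uniformly once $\pi_n$ is allowed to grow. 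With the correct bound, your near-diagonal piece is $(\log n)\,b_n^{-(1+\eta)}\sum_{s\ge 1}s^{3+\eta}\gamma_0^s$, and $b_n\to\infty$ alone does not kill the $\log n$: the bandwidth hypothesis $(\log n)/b_n=O_{a.s.}(\pi_n\vee 1)$ is needed already here, not only in the tail. The extra $(\pi_n\vee 1)$ this produces must then be fed back into the geometric decay; the paper does this by writing $(\pi_n\vee 1)^{s+1}\theta_{n,s}^{1-2/p}\le M'\rho_n^{\,s+1-qs(1-2/p)}\le M'\rho_n^{1-s}$, using the slack in $q(1-2/p)\ge 2$.

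The paper's argument is also structurally simpler: it never splits. Since $|\omega_n(s)-1|=1\le (s/b_n)^{1+\eta}$ for $s>b_n$, the inequality $|\omega_n(s)-1|\le C(s/b_n)^{1+\eta}$ holds for all $s\ge 1$ (taking $C\ge 1$), so one bounds the full sum in one stroke; and because the Chung--Lu estimate covers every $1\le s\le n$, no far-tail split at $s_n^*$ with the crude $\delta_n^\partial(s)\le n$ is needed. After these two observations the proof collapses to a short chain: apply the kernel bound, plug in Lemma~\ref{lemma:aux34}, convert $\log n/b_n$ into an extra power of $(\pi_n\vee 1)$ via the bandwidth condition, and invoke Condition~NF to obtain $O_{a.s.}(b_n^{-\eta})\cdot\sup_{n\ge 1}\sum_{s\ge 1} s^{3+\eta}\rho_n^{1-s}\to 0$.
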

The bandwidth condition in Proposition \ref{prop:kernel} is consistent with the bandwidth selection rule in \eqref{eq:hac_const}. The condition in \eqref{eq:kernel} is satisfied by many commonly used kernels such as the truncated kernel $\omega(x)=1\{\abs{x}\leq 1\}$, Parzen, and Tukey--Hanning kernels \cite[see][p. 824]{Andrews:91}. However, 
\eqref{eq:kernel} does not hold for the Bartlett kernel.

\begin{figure}[t]
	\hfill
	\begin{subfigure}[b]{0.25\textwidth}
		\includegraphics[width=\textwidth]{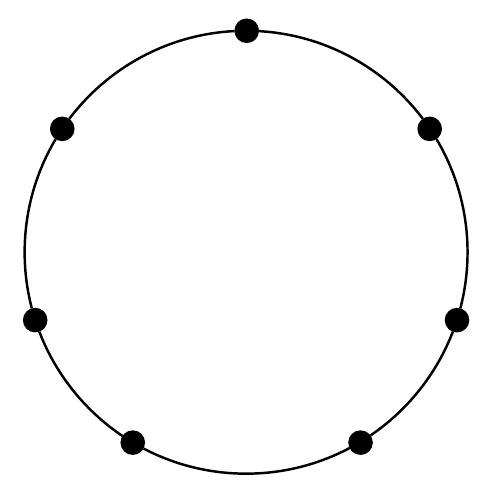}
	\caption{}
	\end{subfigure}
	\hfill
	\begin{subfigure}[b]{0.25\textwidth}
		\includegraphics[width=\textwidth]{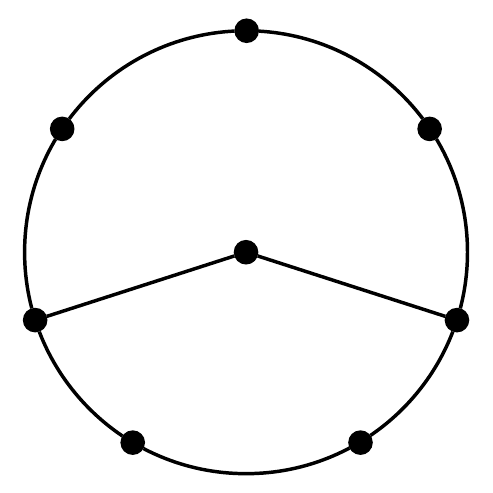}
	\caption{}
	\end{subfigure}
	\hfill\null
\caption{\footnotesize An example of networks for which the corresponding weighting matrices $W=[\omega(d_n(i,j)/2)]_{i,j\in N_n}$ are either positive semidefinite (A) or indefinite (B) for the same positive definite kernel function $\omega(z)=\ind\{\abs{z}\le 1\}(1-\abs{z})$.}
\label{fig:ring_net}
\end{figure}

While according to Proposition \ref{prop:HAC} the proposed HAC estimators are consistent, they are not necessarily positive semidefinite. The following example provides a simple case in which positive definiteness of the kernel function does not automatically imply positive semidefiniteness of the estimated covariance matrix.

\begin{example}
\label{ex:circ_network}
Consider a ring network (an example is shown in Figure \ref{fig:ring_net}(A)), where $N_n^{\partial}(i;s)=2$ for $1\le s \le \lfloor(n-1)/2\rfloor$ and all $i\in N_n$. Suppose that $\Lambda_n=0$ a.s., and let $\omega(z)=(1-|z|)\ind\{|z|\le 1\}$ (Barlett kernel). Then for an integer $m_n<(n-1)/2$ and a vector $c\in \R^v$,
\[
	c^{\top}(\tilde{V}_n-\hat{V}_n)c=2\bar y_n^2\sum_{s=0}^{m_n}\left(1-\frac{s}{m_n+1}\right)=\bar y_n^2(2+m_n)\ge 0,
\]
where $\bar{y}_n=c^{\top}\bar{Y}_n$ and we take $b_n=m_n+1$. Hence, $\tilde V_n-\hat V_n$ is positive semidefinite. In particular, $[\tilde V_n-\hat V_n]_{k,k}\ge 0$ for all $1\le k\le v$ so that the estimator $\hat V_n$ yields lower variances in finite samples.

In addition, it is easy to verify that given the network in Figure \ref{fig:ring_net}(A) and the Barlett kernel, each estimator yields a positive semidefinite covariance matrix. Generally, if the weighting matrix $W\eqdef [\omega_n(d_n(i,j))]_{i,j\in N_n}$ is positive semidefinite, there exists a matrix $L$ with $W=LL^{\top}$ so that
\[
	\tilde V_n=n^{-1}(\tilde YL)(\tilde YL)^{\top} \qtextq{and} \hat V_n=n^{-1}(\hat YL)(\hat YL)^{\top},
\]
where $\tilde Y$ and $\hat Y$ are $d\times n$ matrices whose columns are given by $(Y_{n,i}-\E[Y_{n,i}\mid \C_n])$ and $(Y_{n,i}-\bar Y_n)$, respectively. Hence, both $\tilde V_n$ and $\hat V_n$ are positive semidefinite. Consequently, in a context, in which the distance measure corresponds to the Euclidean norm on $\R^p$, $p\ge 1$, i.e., $d(i,j)=\|x_i-x_j\|$ for some vectors of characteristics $x_i,x_j\in\R^p$, positive definiteness of the kernel function implies that $W$ is positive semidefinite (see, e.g., \citealp{Kelejian/Prucha:07} and \citealp[Chapter~6]{Wendland:04:ScatteredDataApprox}).

This result, however, is not applicable to our case, and positive semidefiniteness of the weighting matrix strongly depends on the network topology. For example, while $W$ is positive semidefinite for the ring network in Figure \ref{fig:ring_net}(A) and the Barlett kernel with $m_n=2$, it becomes indefinite after a slight modification shown in Figure \ref{fig:ring_net}(B).
\end{example}

\subsection{Partially Observed Networks}

In the context of spatial models, \citet{Conley/Molinary:07:JOE} and \citet{Kim/Sun:11:JOE} show that the network HAC estimator can be consistent despite measurement errors in locations. Below, we show that our network HAC estimators have a similar property when the network is only partially observed.

Suppose that the true network is given by $G_n=(N_n,E_n)$. However, the econometrician observes $G^*_n=(N_n,E^*_n)$, where the observed set of links $E^*_n$ is a subset of the true set of links $E_n$. Thus, the links are only partially observed by the econometrician. We continue to use $d_n(i,j)$ to denote the distance between $i$ and $j$ in $G_n$. Let $d^*_n(i,j)$ denote the distance between $i$ and $j$ in $G^*_n$. The immediate consequence of $E^*_n\subset E_n$ is that $d^*_n(i,j)\geq d_n(i,j)$. Hence, because some of the links are unobserved, the true network can be denser than the observed one. The HAC estimator is now defined similarly to \eqref{eq:HAC1}, however, with $\tilde \Omega_n(s)$ replaced by $\tilde \Omega^*_n(s)$, where
\[
	\tilde \Omega^*_n(s) =n^{-1} \sum_{i\in N_n} \sum_{j\in N^{*\partial}_n(i,s)} Y_{n,i}Y_{n,j}^\top,
\]
and $N_n^{*\partial}(i;s)=\{j\in N_n: d^*_n(i,j)=s\}$ is the set of nodes of distance $s$ from $i$ according to the observed network $G^*_n$. We denote the resulting HAC estimator as
\[
	\tilde V^*_n=\sum_{s\ge 0}\omega_n(s)\tilde \Omega^*_n(s).
\]

The implications for the HAC estimator are two-fold: (a) some terms $Y_{n,i}Y_{n,j}^{\top}$ would appear in the covariance term $\tilde{\Omega}^*_n(s)$ with a larger distance $s$ than the true distance; (b) some terms $Y_{n,i}Y_{n,j}^{\top}$ would be missing from the estimator because there is no observed path between $i$ and $j$ in $G^*_n$. The direct consequence of (a) is that such terms would be assigned smaller weights $\omega_n(s)$ compared to those one would assign if the true network was observed. However, since the weights must converge to one according to Assumption \ref{assu:HAC1}(ii), the effect of (a) would be asymptotically negligible provided that the true unobserved network satisfies the rest of the conditions in Assumption \ref{assu:HAC1}. From the expression for $V_n$ in \eqref{eq:trueV}, one can also see that the effect of (b) is asymptotically negligible if the number of missing $Y_{n,i}Y_{n,j}^{\top}$ terms in the HAC estimator is of a smaller order than $n$.

We define $\delta_n^\partial(s\mid d^*_n=\infty)$ as the \emph{average} number of $s$-neighbors that are isolated in the partially observed network:
\[
	\delta_n^\partial(s\mid d^*_n=\infty) = \frac{1}{n} \sum_{i\in N_n} | \{j\in N_n^\partial(i;s): d^*_n(i,j)=\infty\}|.
\]
\begin{assumption}
\label{assu:partial}
	$\lim\sup_{n\to\infty} \sum_{s\ge 1}\delta_n^\partial(s\mid d^*_n=\infty) \theta_{n,s}^{1-(2/p)} =0$ a.s.\ for the same $p>4$ as in Assumption \ref{assu:HAC1}.
\end{assumption}
Assumption \ref{assu:partial} controls the share of nodes that appear isolated due to missing links. For example, the assumption holds if the total number of such nodes is $o_{a.s.}(n)$. For the consistency of the HAC estimator with partially observed networks, we also assume that the true network satisfies the conditions in Assumption \ref{assu:HAC1}. 

In the case of non-zero means, the estimator is defined similarly to \eqref{eq:HAC2}:
\begin{align*}
\hat{V}^*_n &=\sum_{s\ge 0}\omega_n(s)\hat\Omega^*_n(s), \; \text{where}\\
	\hat\Omega^*_n(s)&\eqdef n^{-1}\sum_{i\in N_n}\sum_{j\in N_n^{*\partial}(i;s)}\left(Y_{n,i}-\bar Y_n\right)\left(Y_{n,j}-\bar Y_n\right)^{\top}.
\end{align*}
Let $N_n^{*}(i;s)=\{j\in N_n: d^*_n(i,j)\leq s\}$ be the set of nodes within distance $s$ from $i$ according to the observed network $G^*_n$, and define $\delta_n^*(s)=n^{-1}\sum_{i\in N_n} \vert N_n^{*}(i;s) \vert$. We have the following result.

\begin{prop}\label{cor:HAC_partially_observed}
Suppose that the true network satisfies Assumptions \ref{assu:psi_dep} and  \ref{assu:HAC1}, and Assumption \ref{assu:partial} holds for the partially observed network. Suppose further that $\vert \omega(s)-1\vert \leq \vert \omega(s')-1\vert$ for $s\geq s'$. Then as $n\to\infty$,
\[
	\E[\normin{\tilde{V}^*_n-V_n}_F\mid \C_n]\to_{a.s.} 0.
\]
If, in addition, $\delta_n^*(b_n)=o_{a.s.}(n)$, and $\{\theta_{n,s}/s^{p/(p-4)}\}$ are non-increasing in $s\geq 1$ a.s., then as $n\to\infty$,
\[
	\E[\normin{\hat{V}^*_n-V_n}_F\mid \C_n]\to_{a.s.} 0.
\]
\end{prop}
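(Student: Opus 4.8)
The plan is to reduce the claim to the already-established consistency of the full-network estimator $\tilde V_n$ (Proposition~\ref{prop:HAC}) and then control the discrepancy caused by partial observation by separating ``re-weighted'' pairs from ``missing'' pairs. First I would write
\[
	\|\tilde V_n^* - V_n\|_F \le \|\tilde V_n^* - \tilde V_n\|_F + \|\tilde V_n - V_n\|_F ,
\]
and note that the second term converges to $0$ a.s.\ in $\C_n$-conditional $L^1$ by Proposition~\ref{prop:HAC}, since the true network satisfies Assumptions~\ref{assu:psi_dep} and~\ref{assu:HAC1}. Because $E_n^* \subseteq E_n$ forces $d_n^*(i,j) \ge d_n(i,j)$, the diagonal ($s=0$) terms cancel and
\[
	\tilde V_n^* - \tilde V_n = R_n^{(a)} - R_n^{(b)},
\]
where $R_n^{(a)} = n^{-1}\sum_{i,j:\,1\le d_n(i,j)<\infty,\,d_n^*(i,j)<\infty}(\omega_n(d_n^*(i,j)) - \omega_n(d_n(i,j)))\,Y_{n,i}Y_{n,j}^\top$ collects the pairs connected in both networks, and $R_n^{(b)} = n^{-1}\sum_{i,j:\,1\le d_n(i,j)<\infty,\,d_n^*(i,j)=\infty}\omega_n(d_n(i,j))\,Y_{n,i}Y_{n,j}^\top$ collects the pairs that become disconnected. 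It then remains to show $\E[\|R_n^{(a)}\|_F\mid\C_n]\to_{a.s.}0$ and $\E[\|R_n^{(b)}\|_F\mid\C_n]\to_{a.s.}0$.

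For each of $R_n^{(a)}$ and $R_n^{(b)}$ I would replace $Y_{n,i}Y_{n,j}^\top$ by $\E[Y_{n,i}Y_{n,j}^\top\mid\C_n]$ and bound separately the resulting ``population'' term and the mean-zero ``fluctuation'' term. Since $\E[Y_{n,i}\mid\C_n]=0$, we have $\E[Y_{n,i}Y_{n,j}^\top\mid\C_n]=\Cov(Y_{n,i},Y_{n,j}^\top\mid\C_n)$, and the covariance inequality for $\psi$-dependent variables from Appendix~\ref{subsec:covariance_inequalities} (Theorem~\ref{thm:cov_ineq2}/Corollary~\ref{cor:cov_ineq_simple}, with Assumption~\ref{assu:HAC1}(i)) yields $\|\E[Y_{n,i}Y_{n,j}^\top\mid\C_n]\|_F \le C K_n\,\theta_{n,d_n(i,j)}^{1-2/p}$ with $K_n = \sup_{n}\max_i\|Y_{n,i}\|_{\C_n,p}^2<\infty$ a.s. Applied inside $R_n^{(b)}$ and summed over the isolated $s$-neighbors, this gives the bound $C K_n\sum_{s\ge1}\delta_n^\partial(s\mid d_n^*=\infty)\,\theta_{n,s}^{1-2/p}$, which vanishes by Assumption~\ref{assu:partial}. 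For the population term of $R_n^{(a)}$ I use the monotonicity hypothesis: since $d_n^*(i,j)\ge d_n(i,j)$, the assumption $|\omega(s)-1|\le|\omega(s')-1|$ for $s\ge s'$ gives $|\omega_n(d_n^*(i,j))-\omega_n(d_n(i,j))| \le 2|\omega_n(d_n(i,j))-1|$, so summing gives $2CK_n\sum_{s\ge1}|\omega_n(s)-1|\,\delta_n^\partial(s)\,\theta_{n,s}^{1-2/p}\to0$ by Assumption~\ref{assu:HAC1}(ii). For the fluctuation terms, note that (after centering) $R_n^{(a)}$ and $R_n^{(b)}$ are built from the same products $n^{-1}(Y_{n,i}Y_{n,j}^\top-\E[\cdot\mid\C_n])$ that appear in $\tilde V_n - \E[\tilde V_n\mid\C_n]$, now with bounded weights and over the observed network, whose neighborhoods are contained in those of $G_n$; hence the $\C_n$-conditional $L^2$ (and so $L^1$) norm is controlled by the same $c_n(s,b_n;2)$-based bound as in the proof of Proposition~\ref{prop:HAC}, which is $o_{a.s.}(1)$ by Assumption~\ref{assu:HAC1}(iii).

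For the non-centered estimator $\hat V_n^*$, I would add the usual mean-estimation correction: expanding $\hat\Omega_n^*(s)-\tilde\Omega_n^*(s)$ produces cross terms linear in $\bar Y_n-\Lambda_n$ and a quadratic term in $\bar Y_n-\Lambda_n$, each multiplied by a count of the pairs entering the double sum of $\hat\Omega_n^*(s)$; summing the weighted counts over $s\le b_n$ yields a quantity of order $\delta_n^*(b_n)$, so this correction is bounded by a constant times $\delta_n^*(b_n)\,n^{-1}$ times $\C_n$-conditional moments of $\bar Y_n-\Lambda_n$ that tend to $0$ by the LLN (Theorem~\ref{thm:pointwise_LLN}); hence it is $o_{a.s.}(1)$ under $\delta_n^*(b_n)=o_{a.s.}(n)$. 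The monotonicity of $\{\theta_{n,s}/s^{p/(p-4)}\}$ enters exactly where it does in the proof of Proposition~\ref{prop:HAC}, to bound the tail of the kernel-weighted sum of covariance bounds. The main obstacle is the re-weighted term $R_n^{(a)}$: the weight discrepancy $|\omega_n(d_n^*(i,j))-\omega_n(d_n(i,j))|$ is governed by the \emph{observed} distance, which can be far larger than the true distance, so none of the true-network conditions (all indexed by true distance) apply to it verbatim. The monotonicity hypothesis on $|\omega(\cdot)-1|$ is precisely what converts this discrepancy into one controlled by $|\omega_n(d_n(i,j))-1|$, bringing it within the scope of Assumption~\ref{assu:HAC1}(ii); making this step rigorous, and checking that the fluctuation sub-sums indexed by the observed network are genuinely dominated by the Proposition~\ref{prop:HAC} bound, is where the work lies.
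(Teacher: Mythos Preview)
Your first-part argument is correct and uses the same ingredients as the paper, organized slightly differently. The paper decomposes $c^\top(\tilde V_n^*-V_n)c$ directly as $R_{n,0}+R_{n,1}+R_{n,2}+R_{n,3}$, where $R_{n,1}$ is a \emph{single} fluctuation term over pairs with $1\le d_n^*(i,j)\le b_n$ (bounded via $c_n(s,b_n;2)$ after using $d_n\le d_n^*$), $R_{n,2}$ is the $[\omega_n(d_n^*)-1]$-weighted population term (handled via the monotonicity of $|\omega(\cdot)-1|$ exactly as you suggest), and $R_{n,3}$ collects only the missing-pair population contributions (Assumption~\ref{assu:partial}). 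Your route through $\tilde V_n$ generates two extra fluctuation pieces, but since their index sets sit inside $\{(i,j):d_n(i,j)\le b_n\}$ the same $H_n(s,b_n)$ bound applies. One small inaccuracy: those index sets are governed by the \emph{true} distance, not the observed one, so your phrase ``over the observed network'' should read ``over a subset of $\{d_n(i,j)\le b_n\}$.''

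There is a genuine gap in your second part. The quadratic term in $\bar Y_n-\Lambda_n$ is indeed bounded by $\delta_n^*(b_n)\,\|\bar y_n-\lambda_n\|_{\C_n,2}^2=O(\delta_n^*(b_n)/n)=o(1)$, as in \eqref{eq:feas_infeas_1}. But the linear cross term
\[
(\bar y_n-\lambda_n)\cdot\frac{2}{n}\sum_{i\in N_n}\Bigl(\sum_{s\ge 0}\omega_n(s)\,|N_n^{*\partial}(i;s)|\Bigr)(y_{n,i}-\lambda_n)
\]
is \emph{not} controlled by a ``$\delta_n^*(b_n)\,n^{-1}$ times moments of $\bar Y_n-\Lambda_n$'' bound: a triangle-inequality estimate on the second factor yields only $O(n^{-1/2}\delta_n^*(b_n))$, which would require $\delta_n^*(b_n)=o(n^{1/2})$ rather than the assumed $o(n)$. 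The paper instead bounds the conditional $L^2$ norm of that weighted sum through the covariance structure, arriving at $n^{-2}\sum_s\theta_{n,s}^{1-2/p}|J_n(s;b_n)|$, and then converts $J_n$ to $H_n$ using the monotonicity of $\theta_{n,s}/s^{p/(p-4)}$; see \eqref{eq:second_bound}--\eqref{eq:end_HAC}. That monotonicity enters \emph{precisely here}, in the linear cross term, not as a separate ``tail'' bound. Once you invoke $N_n^*(i;b_n)\subset N_n(i;b_n)$ to dominate the observed-network weights by true-network ones, the argument of Proposition~\ref{prop:HAC} carries over verbatim.
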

The monotonicity condition for $\vert \omega(s)-1\vert $ holds, for example, for the truncated, Parzen, and Tukey-Hanning kernels.
\section{Monte Carlo Study}
\label{sec:MonteCarlo}

For our simulation study, we use a version of the network formation model described in Section \ref{sec:NF}. For each sample size $n=500,1000$, and $5000$, we randomly sample $n$ points, $\{X_1,\ldots,X_n\}$ from the uniform distribution on $[0,1]^2$. These points represent the nodes of a random graph $G_n$. Two nodes $i,j\in N_n$ become connected with probability that is inversely proportional to the Euclidean distance between $X_i$ and $X_j$, that is,
\[
	\PM(\{i,j\}\in E_n \mid X_i,X_j)=\exp\left(-\norm{X_i-X_j} \sqrt{2\pi n/\lambda}\right),
\]
where $\lambda$ is a positive constant that determines the average degree of the resulting graph. To reduce the dependence of the results on a particular realization of the latent process $\{X_1,\ldots,X_n\}$, a new network was drawn in each Monte Carlo repetition. We use the values $\lambda$=$1$, $2$, $3$, $4$, and $5$.

To generate $\{Y_{n,i}\}$, we consider a special case of the network dependent processes presented in Section \ref{sec:example_eta}. Specifically, we generate samples using the following linear model:
\begin{equation}
\label{eq:linear_model}
	Y_{n,i}=\sum_{m\ge 0}\frac{\gamma^m}{\absin{N_n^{\partial}(i;m)}}\sum_{j\in N_n^{\partial}(i;m)}\varepsilon_{n,j}, \quad i\in N_n,
\end{equation}
where $\{\varepsilon_{n,i}\}$ are independent $\mathcal{N}(0,1)$ random variables, and $\gamma=0.0$, $0.1$, $0.2$, $0.3$, $0.4$, and $0.5$.

We use the $\hat V_n$ version of the network HAC estimator that does not assume a known mean. To compute the HAC estimator, the bandwidth is chosen according to the rule in \eqref{eq:hac_const} with $\varepsilon=0.05$ and the constant equal to $1.7$, $1.8$, $1.9$, $2.0$, $2.1$, and $2.2$. We use the Parzen kernel given by
\[
	\omega(x)\eqdef\begin{cases}
		1-6x^2+6|x|^3 & \text{for }0\le |x|\le 1/2, \\
		2(1-|x|)^3 & \text{for }1/2<|x|\le 1, \\
		0 & \text{otherwise}.
	\end{cases}
\]

The number of Monte Carlo repetitions is set to 10,000.\footnote{
	The simulations were performed on Compute Canada clusters in Julia using 640 CPUs and 3GB of memory per CPU. The total computation time was 10.5 hours.
} In each Monte Carlo repetition, we compute the average $\bar Y_{n}$, the HAC estimator $\hat V_n$, and construct the 95\% asymptotic confidence interval for the mean of $Y_{n,i}$'s as $\bar Y_n \pm z_{0.975} \times (\hat V_n/n)^{1/2}$, where $z_{0.975} $ is the $0.975$-th percentile of the standard normal distribution.

According to our simulations with the Parzen kernel, setting the bandwidth constant to $2.0$ provides the most accurate coverage in terms of the average squared distance from the nominal coverage probability of $0.95$, where the average is computed across the all considered data generating processes. The distance exhibits a U-shape pattern across the considered constant values.

In Table \ref{tbl:sim_results} we report the simulated coverage probabilities obtained with the bandwidth constant equal to $2.0$. The results for the other bandwidth constants are reported in Appendix \ref{sec:more_simulations} of the Supplemental Note. The Supplemental Note also reports the simulated  rejection probabilities of the corresponding HAC-based $t$-test.

Table \ref{tbl:sim_results} also reports networks statistics such as the diameter, average degree, maximum degree, and average connected distance. One can see from the table that the average degree of the simulated networks is very close to the value of $\lambda$, and that larger values of $\lambda$ correspond to denser networks. Note also that, for example, in the case of $\lambda=3$, $n=1000$, and the bandwidths constant equal to $2.0$, our bandwidth selection rule \eqref{eq:hac_const} produces the bandwidth of approximately $12.58$. In this case, the average simulated diameter is $41.70$, and the average connected distance is $15.89$. Hence, a non-trivial amount of truncation is applied when computing the HAC estimator (except for $\lambda=1$).

Note that when $\lambda=1$, the simulated coverage probabilities do not vary with the constant in the bandwidth selection rule in \eqref{eq:hac_const}, see Tables \ref{tab:coverage_1_5_1_7} and \ref{tab:coverage_1_8_2_2} in the Supplemental Note. As reported in Table \ref{tbl:sim_results}, in that case the simulated average degree is below one, and the bandwidth values resulting from  \eqref{eq:hac_const} exceed the diameters of the simulated networks even for the smallest considered value of the constant. Nevertheless, the coverage of the confidence intervals remains accurate because the networks generated with $\lambda=1$ are sparse with the average connected distance of $2.75-3.01$.

Table  \ref{tbl:sim_results} shows that while in the majority of the cases the simulated coverage probabilities are close to the nominal coverage of $0.95$, the performance of the HAC-based confidence intervals deteriorates for larger values of the denseness parameter $\lambda$ and the dependence parameter $\gamma$. Nevertheless, the coverage improves with the sample size. For example, when $n=5000$ the simulated coverage probabilities are between $0.931-0.949$ even for denser graphs with $\lambda=5$ as long as the dependence parameter $\gamma$ does not exceed $0.3$.

\begin{table}[h!]
  \centering
\footnotesize
  \caption{Average across simulations network statistics and simulated coverage probabilities of the 95\% HAC-based confidence intervals for the bandwidth constant $2.0$ and different values of the denseness parameter $\lambda$, sample size $n$, and the dependence parameter $\gamma$.}
\scalebox{0.95}{
  \begin{threeparttable}
    \begin{tabular}{rrrrrrrrrrrrrrr}
\toprule        &       &       &       &       &       &       &       &     \multicolumn{6}{c}{Simulated Coverage}        &  \\
          &       &       &       &       &       & \multicolumn{1}{c}{Avg.} &       & \multicolumn{6}{c}{$\gamma$} \\
\cmidrule{9-14}    \multicolumn{1}{c}{$\lambda$} & \multicolumn{1}{c}{$n$} &       & \multicolumn{1}{c}{Diam.\tnote{1}} & \multicolumn{1}{c}{Avg.Deg.\tnote{2}} & \multicolumn{1}{c}{Max.Deg.\tnote{3}} & \multicolumn{1}{c}{Dist.\tnote{4}} &       & \multicolumn{1}{c}{0.0} & \multicolumn{1}{c}{0.1} & \multicolumn{1}{c}{0.2} & \multicolumn{1}{c}{0.3} & \multicolumn{1}{c}{0.4} & \multicolumn{1}{c}{0.5} \\
          &       &       &       &       &       &       &       &       &       &       &       &       &  \\
    \midrule
          &       &       &       &       &       &       &       &       &       &       &       &       &  \\
    \multicolumn{1}{c}{1} & \multicolumn{1}{c}{500} &       & \multicolumn{1}{c}{9.68} & \multicolumn{1}{c}{0.95} & \multicolumn{1}{c}{4.99} & \multicolumn{1}{c}{2.75} &       & \multicolumn{1}{c}{0.948} & \multicolumn{1}{c}{0.944} & \multicolumn{1}{c}{0.948} & \multicolumn{1}{c}{0.944} & \multicolumn{1}{c}{0.947} & \multicolumn{1}{c}{0.944} \\
          &       &       & \multicolumn{1}{c}{(2.37)} & \multicolumn{1}{c}{(0.06)} & \multicolumn{1}{c}{(0.77)} & \multicolumn{1}{c}{(0.55)} &       &       &       &       &       &       &  \\
    \multicolumn{1}{c}{1} & \multicolumn{1}{c}{1000} &       & \multicolumn{1}{c}{11.16} & \multicolumn{1}{c}{0.97} & \multicolumn{1}{c}{5.42} & \multicolumn{1}{c}{2.87} &       & \multicolumn{1}{c}{0.946} & \multicolumn{1}{c}{0.949} & \multicolumn{1}{c}{0.947} & \multicolumn{1}{c}{0.947} & \multicolumn{1}{c}{0.949} & \multicolumn{1}{c}{0.944} \\
          &       &       & \multicolumn{1}{c}{(2.43)} & \multicolumn{1}{c}{(0.04)} & \multicolumn{1}{c}{(0.73)} & \multicolumn{1}{c}{(0.47)} &       &       &       &       &       &       &  \\
    \multicolumn{1}{c}{1} & \multicolumn{1}{c}{5000} &       & \multicolumn{1}{c}{14.66} & \multicolumn{1}{c}{0.99} & \multicolumn{1}{c}{6.29} & \multicolumn{1}{c}{3.01} &       & \multicolumn{1}{c}{0.949} & \multicolumn{1}{c}{0.948} & \multicolumn{1}{c}{0.952} & \multicolumn{1}{c}{0.951} & \multicolumn{1}{c}{0.950} & \multicolumn{1}{c}{0.947} \\
          &       &       & \multicolumn{1}{c}{(2.52)} & \multicolumn{1}{c}{(0.02)} & \multicolumn{1}{c}{(0.68)} & \multicolumn{1}{c}{(0.26)} &       &       &       &       &       &       &  \\
          &       &       &       &       &       &       &       &       &       &       &       &       &  \\
    \multicolumn{1}{c}{2} & \multicolumn{1}{c}{500} &       & \multicolumn{1}{c}{30.19} & \multicolumn{1}{c}{1.87} & \multicolumn{1}{c}{7.21} & \multicolumn{1}{c}{10.34} &       & \multicolumn{1}{c}{0.936} & \multicolumn{1}{c}{0.931} & \multicolumn{1}{c}{0.933} & \multicolumn{1}{c}{0.931} & \multicolumn{1}{c}{0.924} & \multicolumn{1}{c}{0.912} \\
          &       &       & \multicolumn{1}{c}{(8.23)} & \multicolumn{1}{c}{(0.09)} & \multicolumn{1}{c}{(0.92)} & \multicolumn{1}{c}{(3.16)} &       &       &       &       &       &       &  \\
    \multicolumn{1}{c}{2} & \multicolumn{1}{c}{1000} &       & \multicolumn{1}{c}{42.01} & \multicolumn{1}{c}{1.91} & \multicolumn{1}{c}{7.73} & \multicolumn{1}{c}{13.89} &       & \multicolumn{1}{c}{0.943} & \multicolumn{1}{c}{0.942} & \multicolumn{1}{c}{0.940} & \multicolumn{1}{c}{0.938} & \multicolumn{1}{c}{0.933} & \multicolumn{1}{c}{0.922} \\
          &       &       & \multicolumn{1}{c}{(11.80)} & \multicolumn{1}{c}{(0.06)} & \multicolumn{1}{c}{(0.89)} & \multicolumn{1}{c}{(4.51)} &       &       &       &       &       &       &  \\
    \multicolumn{1}{c}{2} & \multicolumn{1}{c}{5000} &       & \multicolumn{1}{c}{82.59} & \multicolumn{1}{c}{1.96} & \multicolumn{1}{c}{8.87} & \multicolumn{1}{c}{24.84} &       & \multicolumn{1}{c}{0.947} & \multicolumn{1}{c}{0.947} & \multicolumn{1}{c}{0.945} & \multicolumn{1}{c}{0.946} & \multicolumn{1}{c}{0.944} & \multicolumn{1}{c}{0.938} \\
          &       &       & \multicolumn{1}{c}{(23.28)} & \multicolumn{1}{c}{(0.03)} & \multicolumn{1}{c}{(0.83)} & \multicolumn{1}{c}{(8.47)} &       &       &       &       &       &       &  \\
          &       &       &       &       &       &       &       &       &       &       &       &       &  \\
    \multicolumn{1}{c}{3} & \multicolumn{1}{c}{500} &       & \multicolumn{1}{c}{31.59} & \multicolumn{1}{c}{2.76} & \multicolumn{1}{c}{9.09} & \multicolumn{1}{c}{12.04} &       & \multicolumn{1}{c}{0.936} & \multicolumn{1}{c}{0.926} & \multicolumn{1}{c}{0.926} & \multicolumn{1}{c}{0.918} & \multicolumn{1}{c}{0.903} & \multicolumn{1}{c}{0.869} \\
          &       &       & \multicolumn{1}{c}{(4.97)} & \multicolumn{1}{c}{(0.11)} & \multicolumn{1}{c}{(1.06)} & \multicolumn{1}{c}{(1.46)} &       &       &       &       &       &       &  \\
    \multicolumn{1}{c}{3} & \multicolumn{1}{c}{1000} &       & \multicolumn{1}{c}{41.70} & \multicolumn{1}{c}{2.83} & \multicolumn{1}{c}{9.75} & \multicolumn{1}{c}{15.89} &       & \multicolumn{1}{c}{0.938} & \multicolumn{1}{c}{0.938} & \multicolumn{1}{c}{0.936} & \multicolumn{1}{c}{0.923} & \multicolumn{1}{c}{0.915} & \multicolumn{1}{c}{0.889} \\
          &       &       & \multicolumn{1}{c}{(5.26)} & \multicolumn{1}{c}{(0.08)} & \multicolumn{1}{c}{(1.01)} & \multicolumn{1}{c}{(1.45)} &       &       &       &       &       &       &  \\
    \multicolumn{1}{c}{3} & \multicolumn{1}{c}{5000} &       & \multicolumn{1}{c}{80.48} & \multicolumn{1}{c}{2.93} & \multicolumn{1}{c}{11.06} & \multicolumn{1}{c}{30.65} &       & \multicolumn{1}{c}{0.944} & \multicolumn{1}{c}{0.947} & \multicolumn{1}{c}{0.943} & \multicolumn{1}{c}{0.937} & \multicolumn{1}{c}{0.932} & \multicolumn{1}{c}{0.919} \\
          &       &       & \multicolumn{1}{c}{(5.21)} & \multicolumn{1}{c}{(0.04)} & \multicolumn{1}{c}{(0.94)} & \multicolumn{1}{c}{(1.17)} &       &       &       &       &       &       &  \\
          &       &       &       &       &       &       &       &       &       &       &       &       &  \\
    \multicolumn{1}{c}{4} & \multicolumn{1}{c}{500} &       & \multicolumn{1}{c}{21.85} & \multicolumn{1}{c}{3.64} & \multicolumn{1}{c}{10.78} & \multicolumn{1}{c}{8.56} &       & \multicolumn{1}{c}{0.929} & \multicolumn{1}{c}{0.921} & \multicolumn{1}{c}{0.918} & \multicolumn{1}{c}{0.905} & \multicolumn{1}{c}{0.885} & \multicolumn{1}{c}{0.833} \\
          &       &       & \multicolumn{1}{c}{(2.12)} & \multicolumn{1}{c}{(0.13)} & \multicolumn{1}{c}{(1.15)} & \multicolumn{1}{c}{(0.49)} &       &       &       &       &       &       &  \\
    \multicolumn{1}{c}{4} & \multicolumn{1}{c}{1000} &       & \multicolumn{1}{c}{28.38} & \multicolumn{1}{c}{3.74} & \multicolumn{1}{c}{11.53} & \multicolumn{1}{c}{11.02} &       & \multicolumn{1}{c}{0.936} & \multicolumn{1}{c}{0.931} & \multicolumn{1}{c}{0.927} & \multicolumn{1}{c}{0.923} & \multicolumn{1}{c}{0.900} & \multicolumn{1}{c}{0.860} \\
          &       &       & \multicolumn{1}{c}{(2.13)} & \multicolumn{1}{c}{(0.09)} & \multicolumn{1}{c}{(1.11)} & \multicolumn{1}{c}{(0.45)} &       &       &       &       &       &       &  \\
    \multicolumn{1}{c}{4} & \multicolumn{1}{c}{5000} &       & \multicolumn{1}{c}{55.04} & \multicolumn{1}{c}{3.89} & \multicolumn{1}{c}{13.05} & \multicolumn{1}{c}{21.00} &       & \multicolumn{1}{c}{0.943} & \multicolumn{1}{c}{0.943} & \multicolumn{1}{c}{0.941} & \multicolumn{1}{c}{0.937} & \multicolumn{1}{c}{0.921} & \multicolumn{1}{c}{0.898} \\
          &       &       & \multicolumn{1}{c}{(2.48)} & \multicolumn{1}{c}{(0.04)} & \multicolumn{1}{c}{(1.04)} & \multicolumn{1}{c}{(0.43)} &       &       &       &       &       &       &  \\
          &       &       &       &       &       &       &       &       &       &       &       &       &  \\
    \multicolumn{1}{c}{5} & \multicolumn{1}{c}{500} &       & \multicolumn{1}{c}{17.41} & \multicolumn{1}{c}{4.50} & \multicolumn{1}{c}{12.39} & \multicolumn{1}{c}{6.95} &       & \multicolumn{1}{c}{0.928} & \multicolumn{1}{c}{0.919} & \multicolumn{1}{c}{0.910} & \multicolumn{1}{c}{0.894} & \multicolumn{1}{c}{0.864} & \multicolumn{1}{c}{0.810} \\
          &       &       & \multicolumn{1}{c}{(1.43)} & \multicolumn{1}{c}{(0.15)} & \multicolumn{1}{c}{(1.24)} & \multicolumn{1}{c}{(0.28)} &       &       &       &       &       &       &  \\
    \multicolumn{1}{c}{5} & \multicolumn{1}{c}{1000} &       & \multicolumn{1}{c}{22.54} & \multicolumn{1}{c}{4.64} & \multicolumn{1}{c}{13.24} & \multicolumn{1}{c}{8.87} &       & \multicolumn{1}{c}{0.935} & \multicolumn{1}{c}{0.929} & \multicolumn{1}{c}{0.930} & \multicolumn{1}{c}{0.913} & \multicolumn{1}{c}{0.889} & \multicolumn{1}{c}{0.842} \\
          &       &       & \multicolumn{1}{c}{(1.48)} & \multicolumn{1}{c}{(0.10)} & \multicolumn{1}{c}{(1.20)} & \multicolumn{1}{c}{(0.27)} &       &       &       &       &       &       &  \\
    \multicolumn{1}{c}{5} & \multicolumn{1}{c}{5000} &       & \multicolumn{1}{c}{43.53} & \multicolumn{1}{c}{4.84} & \multicolumn{1}{c}{14.91} & \multicolumn{1}{c}{16.72} &       & \multicolumn{1}{c}{0.949} & \multicolumn{1}{c}{0.942} & \multicolumn{1}{c}{0.942} & \multicolumn{1}{c}{0.931} & \multicolumn{1}{c}{0.920} & \multicolumn{1}{c}{0.885} \\
          &       &       & \multicolumn{1}{c}{(1.75)} & \multicolumn{1}{c}{(0.05)} & \multicolumn{1}{c}{(1.10)} & \multicolumn{1}{c}{(0.28)} &       &       &       &       &       &       &  \\
          &       &       &       &       &       &       &       &       &       &       &       &       &  \\
    \bottomrule
    \end{tabular}%
 \begin{tablenotes}
    \item[1] Diameter: the shortest distance between two most distant nodes.
    \item[2] Average Degree: the average number of adjacent nodes.
    \item[3] Maximum Degree: the maximum number of adjacent nodes.
    \item[4] Average Connected Distance.
     \item Standard deviations in the parentheses.
    \end{tablenotes}
    \end{threeparttable}
    }
  \label{tbl:sim_results}%
\end{table}%
    \clearpage


\putbib[network_dep]

\newpage
\appendix
\section{}

\subsection{Auxiliary Results for \texorpdfstring{$\psi$}{\textpsi}-Dependent Processes}
\label{subsec:covariance_inequalities}

In this section, we present covariance inequalities for functions of general $\psi$-dependent processes. The proofs of the results in this section are found in the Supplemental Note. Let $\F_a$ and $\G_a$ be some classes of functions on $\R^{v\times a}$ with $v,a\ge 1$, and let $\F\eqdef \bigcup_{a\ge 1}\F_a$ and $\G\eqdef \bigcup_{a\ge 1}\G_a$.
\begin{definition}
\label{def:psi_dep_gen}
A triangular array $\{Y_{n,i}\}$, $Y_{n,i} \in \R^v$, is \textit{conditionally $(\F,\G,\psi)$-dependent} given $\{\C_n\}$, if for each $n \in \N$, there exist a $\C_n$-measurable sequence $\theta_n\eqdef\{\theta_{n,s}\}_{s\ge 0}$, $\theta_{n,0}=1$, and a collection of nonrandom functions $(\psi_{a,b})_{a,b \in \N}$, $\psi_{a,b}: \F_a \times \G_b \rightarrow [0,\infty)$, such that for all $(A,B) \in \P_n(a,b;s)$ with $s>0$ and all $f\in \F_a$ and $g\in \G_b$,
\begin{equation}
\label{eq:psi_dep_gen}
	\abs{\Cov\left(f(Y_{n,A}),g(Y_{n,B})\mid \C_n\right)}\le \psi_{a,b} (f,g) \theta_{n,s} \quad \text{a.s.}
\end{equation}
\end{definition}

Let $\{Y_{n,i}\}$ be a triangular array of random vectors in $\R^v$, and let $(A,B)\in \P_n(a,b;s)$ with $s>0$. We first provide a result of a covariance inequality that permits the nonlinear transforms to be random functions. Suppose that $Z_j$, $j=1,2$ is a random element taking values in a separable metric space $(\Z_j,\rho_j)$ equipped with the Borel $\sigma$-algebra $\B(\Z_j)$ and $f$ and $g$ are real-valued, measurable functions defined on $\R^{v\times a}\times \Z_1$ and $\R^{v\times b}\times \Z_2$, respectively. Let $f^z$ be the $z$-section of $f$, i.e., $f^z(y)\eqdef f(y,z)$ (the $z$-section $g^z$ of $g$ is defined similarly) and note that if $f^{z_1}\in \F_a$ and $g^{z_2}\in \G_b$, then $\psi_{a,b}(f^{z_1},g^{z_2})$ is well defined. In addition, let $\bar{f}(y)\eqdef\sup_{z\in \Z_1}\abs{f(y,z)}$ and $\bar{g}(y)\eqdef\sup_{z\in \Z_2}\abs{g(y,z)}$.

\begin{lemma}
\label{lemma:psi_dep}
Suppose that $\{Y_{n,i}\}$ is conditionally $(\F,\G,\psi)$-dependent given $\{\C_n\}$ with the dependence coefficients $\{\theta_n\}$. Suppose further that $f^{z_1}\in \F_a$ and $g^{z_2}\in\G_b$ for all $z_j\in \Z_j$, $f$ and $g$ are continuous in the second arguments, and the function $F(z_1,z_2)\eqdef\psi_{a,b}(f^{z_1},g^{z_2})$ is continuous on $\Z_1\times \Z_2$.\footnote{
	Note that the continuity of $F$ implies that it is Borel measurable. Moreover, if $\Z_j=\Z_{j,1}\times \Z_{j,2}$, $j=1,2$, where each $\Z_{j,k}$ is a separable metric space, the supremum of $F$ taken over $\Z_{1,1}$ and $\Z_{2,1}$ is also Borel measurable. The last observation is essential for other result presented in this section.
} If $Z_1$ and $Z_2$ are $\C_n$-measurable, and $\E[\bar{f}^2(Y_{n,A})+\bar{g}^2(Y_{n,B})]<\infty$ for $(A,B)\in \P_n(a,b;s)$ with $s>0$, then
\[
	\abs{\Cov\left(f(Y_{n,A},Z_1),g(Y_{n,B},Z_2)\mid \C_n\right)}\le F(Z_1,Z_2)\theta_{n,s} \qtext{a.s.}
\]
\end{lemma}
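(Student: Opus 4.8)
The plan is to first establish the bound when $Z_1$ and $Z_2$ take only countably many values, where one can ``freeze'' $(Z_1,Z_2)$ on its level sets and invoke \eqref{eq:psi_dep_gen} directly, and then recover the general case by a $\C_n$-measurable approximation that passes the limit through the conditional expectations defining the covariance. The reduction rests on the elementary localization identity that, for any $\C_n$-measurable event $D$ and any $U,V$ with $U,V,UV$ integrable, $\ind_D\Cov(U,V\mid\C_n)=\Cov(\ind_D U,\ind_D V\mid\C_n)$ a.s., which follows from $\ind_D=\ind_D^2$ and the $\C_n$-measurability of $\ind_D$.

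Concretely, suppose first that $Z_1=\sum_k z_{1,k}\ind_{\Omega_{1,k}}$ and $Z_2=\sum_l z_{2,l}\ind_{\Omega_{2,l}}$ for countable $\C_n$-measurable partitions $\{\Omega_{1,k}\}$, $\{\Omega_{2,l}\}$ and points $z_{1,k}\in\Z_1$, $z_{2,l}\in\Z_2$. On the $\C_n$-measurable atom $D=\Omega_{1,k}\cap\Omega_{2,l}$ one has $\ind_D f(Y_{n,A},Z_1)=\ind_D f^{z_{1,k}}(Y_{n,A})$ and $\ind_D g(Y_{n,B},Z_2)=\ind_D g^{z_{2,l}}(Y_{n,B})$; since $\absin{f^{z_{1,k}}(Y_{n,A})}\le\bar f(Y_{n,A})\in L^2$ and $\absin{g^{z_{2,l}}(Y_{n,B})}\le\bar g(Y_{n,B})\in L^2$, all the conditional expectations involved are well defined. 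Applying the localization identity twice and then \eqref{eq:psi_dep_gen} to the sections $f^{z_{1,k}}\in\F_a$, $g^{z_{2,l}}\in\G_b$ gives $\ind_D\,\absin{\Cov(f(Y_{n,A},Z_1),g(Y_{n,B},Z_2)\mid\C_n)}\le\ind_D\,\psi_{a,b}(f^{z_{1,k}},g^{z_{2,l}})\theta_{n,s}=\ind_D\,F(Z_1,Z_2)\theta_{n,s}$ a.s.; summing over the partition yields the claim for countably-valued $Z_1,Z_2$.

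For the general case, since each $\Z_j$ is separable metric, I would pick $\C_n$-measurable, countably-valued $Z_j^{(m)}$ with $\rho_j(Z_j^{(m)},Z_j)\to 0$ pointwise. Continuity of $f$ and $g$ in their second arguments yields $f(Y_{n,A},Z_1^{(m)})\to f(Y_{n,A},Z_1)$ and $g(Y_{n,B},Z_2^{(m)})\to g(Y_{n,B},Z_2)$ pointwise, with $\absin{f(Y_{n,A},Z_1^{(m)})}\le\bar f(Y_{n,A})$ and $\absin{g(Y_{n,B},Z_2^{(m)})}\le\bar g(Y_{n,B})$, so by Cauchy--Schwarz the products are dominated by $\bar f(Y_{n,A})\bar g(Y_{n,B})\in L^1$. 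The conditional dominated convergence theorem then lets me pass the limit through $\E[\cdot\mid\C_n]$ in each of the three pieces of $\Cov(f(Y_{n,A},Z_1^{(m)}),g(Y_{n,B},Z_2^{(m)})\mid\C_n)$, so this conditional covariance converges a.s. to $\Cov(f(Y_{n,A},Z_1),g(Y_{n,B},Z_2)\mid\C_n)$, while continuity of $F$ gives $F(Z_1^{(m)},Z_2^{(m)})\to F(Z_1,Z_2)$ a.s.; letting $m\to\infty$ in the bound from the previous paragraph finishes the proof. The points requiring care are the a.s.\ (not merely $L^1$) version of the conditional dominated convergence theorem used in the limit interchange, and the measurability bookkeeping for $F(Z_1,Z_2)$ and for the countably-valued approximants; the first is standard and the second is handled by the separability hypothesis together with the remark in the footnote to the statement.
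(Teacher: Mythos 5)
Your proposal is correct and follows essentially the same route as the paper's proof: approximate $Z_1,Z_2$ by $\C_n$-measurable simple (countably-valued) functions, freeze their values on the atoms of the partition so that \eqref{eq:psi_dep_gen} applies to the sections $f^{z_{1,k}},g^{z_{2,l}}$, and then pass to the limit via the conditional dominated convergence theorem together with the continuity of $f$, $g$, and $F$. Your localization identity $\ind_D\Cov(U,V\mid\C_n)=\Cov(\ind_D U,\ind_D V\mid\C_n)$ is just a slightly more explicit substitute for the paper's ``w.l.o.g.\ centered'' reduction and does not change the argument.
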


The continuity requirement of the function $F(z_1,z_2)$ in Lemma \ref{lemma:psi_dep} can be relaxed by considering a continuous function $\tilde{F}$ such that for all $(z_1, z_2)\in\Z_1\times \Z_2$, $F(z_1,z_2)\le \tilde{F}(z_1,z_2)$. Consider, for example, the case when $h:\R\to\R$ is piece-wise linear and $f(x,z)=\varphi_z(h(x))$. If the $\psi$ function depends on the Lipschitz constant of $f^{z_1}$ as in Assumption \ref{assu:psi_dep}, then the corresponding $F(z_1, z_2)$ is not continuous in $z_1$. It is clear, however, that the result of Lemma \ref{lemma:psi_dep} holds if we replace $F$ with a continuous dominating function.

\begin{assumption} The triangular array  $\{Y_{n,i}\}$ is conditionally $(\F,\G,\psi)$-dependent given $\{\C_n\}$ with the dependence coefficients $\{\theta_n\}$ satisfying the following conditions.
\label{assu:psi_dep_gen}
\begin{enumerate}[leftmargin=*]
	\item[(a)] $\F$ and $\G$ are stable under multiplication by constants, that is, if $f\in \F$, $g\in \G$ and $c\in \R$, then $cf\in \F$ and $cg\in \G$.
	\item[(b)] If $f\in \F_a, g\in \G_b$ and $c_1,c_2\in \R$, then $\psi_{a,b}(c_1f,c_2g)=\abs{c_1c_2}\cdot \psi_{a,b}(f,g)$.
\end{enumerate}
\end{assumption}

Consider measurable functions $f:\R^{v\times a}\to \R$ and $g:\R^{v\times b}\to \R$ such that $f\notin \F_a$ and $g\notin \G_b$, and define
\begin{equation}
\label{eq:xi_zeta}
	\xi\eqdef f(Y_{n,A}) \qtext{and}\quad \zeta\eqdef g(Y_{n,B}).
\end{equation}
Let $\mu_{\xi,p}\eqdef\norm{\xi}_{\C_n,p}$ and $\mu_{\zeta,p}\eqdef\norm{\zeta}_{\C_n,p}$, $p>0$, and let $\varphi_K$ with $K\in [0,\infty)$ denote the element-wise censoring function, i.e., for an indexed family of real numbers $\mathbf{x}=(x_i)_{i\in I}$,
\begin{equation}
\label{eq:censoring_fn}
	[\varphi_K(\mathbf{x})]_i=(-K)\vee(K\wedge x_i), \quad i\in I,
\end{equation}
where $[A]_i$ denotes the $i$-th element of an indexed family $A$. Finally, we define
\[
	\underline{\theta}_{n,s}=\theta_{n,s}\wedge 1 \qtextq{and} \overline{\theta}_{n,s}=\theta_{n,s}\vee 1.
\]

The following result establishes a bound for the conditional covariance between $\xi$ and $\zeta$ given $\C_n$ in the case in which the censored functions $\varphi_K \circ f$ and $\varphi_L\circ g$, $K,L>0$, belong to the classes $\F$ and $\G$, respectively. The result, therefore, does not require truncation of the domains of the transformation functions. We apply the definition of $\psi$-dependence to the censored counterparts of $f$ and $g$.

\begin{theorem}
\label{thm:cov_ineq1}
Suppose that Assumption \ref{assu:psi_dep_gen} holds, and let $f$, $g$, $\xi$, and $\zeta$ be as in \eqref{eq:xi_zeta}. Suppose further that
\begin{enumerate}[label=\textnormal{(\roman*)},leftmargin=*,align=left]
	\item $\mu_{\xi,p}< \infty$ and $\mu_{\zeta,q}< \infty$ a.s.\ for some $p,q>1$ with $p^{-1}+q^{-1}<1$;
	\item $\varphi_K\circ f\in \F_a$ and $\varphi_K\circ g \in \G_b$ for all $K\in (0,\infty)$;
	\item $(K,L)\mapsto \psi_{a,b}(\varphi_K\circ f, \varphi_L\circ g)$ is continuous on $(0,\infty)^2$.
\end{enumerate}
Then
\begin{equation}
\label{eq:cov_ineq1}
	\absin{\Cov(\xi,\zeta\mid \C_n)}\le \left(\overline{\theta}_{n,s} \overline{\psi}_{a,b}(\mu_{\xi,p},\mu_{\zeta,q})+16\mu_{\xi,p}\mu_{\zeta,q}\right)\underline{\theta}_{n,s}^{1-\frac{1}{p}-\frac{1}{q}} \qtext{a.s.},\footnote{
		Following convention, we take $0 \cdot \infty$ to be $0$ throughout this paper.
	}
\end{equation}
where for $z_1,z_2\in (0,\infty)$,
\[
	\overline{\psi}_{a,b}(z_1,z_2)\eqdef \sup_{K,L\ge 1}(KL)^{-1}\psi_{a,b}(\varphi_{Kz_1}\circ f,\varphi_{Lz_2}\circ g).
\]
\end{theorem}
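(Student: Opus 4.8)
The plan is to censor $\xi$ and $\zeta$ at random levels: by (ii) the censored parts lie in $\F_a$ and $\G_b$ and are therefore controlled through $\psi$-dependence, while the tails are controlled through the conditional moment bounds (i). Fix $(A,B)\in\P_n(a,b;s)$ with $s>0$. For parameters $\bar K,\bar L\ge 1$ to be chosen at the end, set the $\C_n$-measurable censoring levels $K=\bar K\mu_{\xi,p}$ and $L=\bar L\mu_{\zeta,q}$, and put $\xi_K=(\varphi_K\circ f)(Y_{n,A})$ and $\zeta_L=(\varphi_L\circ g)(Y_{n,B})$. Using bilinearity of the conditional covariance I would write
\[
\Cov(\xi,\zeta\mid\C_n)=\Cov(\xi_K,\zeta_L\mid\C_n)+\Cov(\xi_K,\zeta-\zeta_L\mid\C_n)+\Cov(\xi-\xi_K,\zeta\mid\C_n)\quad\text{a.s.}
\]

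For the first term I would apply Lemma \ref{lemma:psi_dep} to the functions $(y,z)\mapsto\varphi_z(f(y))$ and $(y,z)\mapsto\varphi_z(g(y))$ with $Z_1=K$, $Z_2=L$: these are continuous in $z$, their $z$-sections lie in $\F_a$ and $\G_b$ by (ii), and $F(z_1,z_2)=\psi_{a,b}(\varphi_{z_1}\circ f,\varphi_{z_2}\circ g)$ is continuous by (iii). This gives $|\Cov(\xi_K,\zeta_L\mid\C_n)|\le\psi_{a,b}(\varphi_K\circ f,\varphi_L\circ g)\,\theta_{n,s}\le\bar K\bar L\,\overline{\psi}_{a,b}(\mu_{\xi,p},\mu_{\zeta,q})\,\theta_{n,s}$, where the last step uses $\bar K,\bar L\ge 1$ and the definition of $\overline{\psi}_{a,b}$. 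Since Lemma \ref{lemma:psi_dep} requires $\E[\bar f^2+\bar g^2]<\infty$, which is not implied by (i), I would first localize to the $\C_n$-measurable sets $\{\mu_{\xi,p}\vee\mu_{\zeta,q}\le M\}\cap\{\bar K\vee\bar L\le M\}$ — on which $K,L\le M^2$ and the relevant envelopes are bounded — prove the inequality there, and let $M\to\infty$; the null event $\{\theta_{n,s}=0\}$, where $\bar K$ may be infinite, is disposed of separately, since there $\Cov(\xi_K,\zeta_L\mid\C_n)=0$ and letting $K,L\to\infty$ gives $\Cov(\xi,\zeta\mid\C_n)=0$.

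For the two remainder covariances I would use $|\xi_K|\le|\xi|$, $|\zeta-\zeta_L|\le|\zeta|\ind\{|\zeta|>L\}$, the elementary bound $|\Cov(U,V\mid\C_n)|\le\E[|UV|\mid\C_n]+\E[|U|\mid\C_n]\E[|V|\mid\C_n]$, conditional H\"older with exponents $p$, $q$, and $r$ where $r^{-1}=1-p^{-1}-q^{-1}>0$ by (i), and conditional Markov, $\PM(|\zeta|>L\mid\C_n)\le(\mu_{\zeta,q}/L)^q$. This yields $|\Cov(\xi_K,\zeta-\zeta_L\mid\C_n)|\le 2\mu_{\xi,p}\mu_{\zeta,q}\,\bar L^{-q/r}$ and, symmetrically, $|\Cov(\xi-\xi_K,\zeta\mid\C_n)|\le 2\mu_{\xi,p}\mu_{\zeta,q}\,\bar K^{-p/r}$; the one point to watch is that the product-of-means pieces carry the larger exponents $\bar L^{-(q-1)}$ and $\bar K^{-(p-1)}$, which are dominated precisely because $q-1-q/r=q/p>0$ and $p-1-p/r=p/q>0$. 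Finally, choosing $\bar K=\underline{\theta}_{n,s}^{-1/p}$ and $\bar L=\underline{\theta}_{n,s}^{-1/q}$ (both $\ge 1$), one verifies $\bar K\bar L\,\theta_{n,s}=\overline{\theta}_{n,s}\,\underline{\theta}_{n,s}^{1-p^{-1}-q^{-1}}$ and $\bar L^{-q/r}=\bar K^{-p/r}=\underline{\theta}_{n,s}^{1-p^{-1}-q^{-1}}$, and adding the three bounds gives \eqref{eq:cov_ineq1} (with the factor $4$, hence a fortiori $16$, absorbing the two remainders).

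The step I expect to be the main obstacle is the rigorous treatment of the \emph{random} censoring levels $K$ and $L$: the $\psi$-dependence inequality \eqref{eq:psi_dep_gen} is stated only for deterministic functions, so passing to $\C_n$-measurable truncation forces the use of Lemma \ref{lemma:psi_dep} together with the localization argument above to secure the unconditional square-integrability of the envelopes. The remainder is routine bookkeeping with conditional H\"older and Markov inequalities, where the strict inequality $p^{-1}+q^{-1}<1$ is exactly what makes the tail exponents $q/r$ and $p/r$ positive.
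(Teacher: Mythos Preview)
Your proposal is correct and follows the same censor-plus-tail-bound strategy as the paper. Two execution differences are worth flagging. First, the paper sidesteps your localization: rather than censoring $\xi$ at the random level $K=\bar K\mu_{\xi,p}$, it normalizes to $\xi'=\mu_{\xi,p}^{-1}\xi$ and censors $\xi'$ at a \emph{deterministic} level $\kappa\ge 1$, passing $\mu_{\xi,p}$ (not $K$) as the $\C_n$-measurable argument $Z_1$ in Lemma~\ref{lemma:psi_dep}; the envelope $\sup_{z}|\varphi_\kappa(z^{-1}f(\cdot))|$ is then trivially bounded by $\kappa$, so the $L^2$ hypothesis of that lemma holds without further work, and random $\kappa=\underline\theta_{n,s}^{-1/p}$ is substituted only after the deterministic inequality is established for all $\kappa,\lambda\ge 1$. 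Second, the paper uses a symmetric four-term split $(\xi_\kappa^*+\hat\xi_\kappa^*)(\zeta_\lambda^*+\hat\zeta_\lambda^*)$ and bounds the cross terms via the crude $|\xi_\kappa^*|\le 2\kappa$, which is where the constant $16$ arises; your asymmetric three-term decomposition together with three-exponent H\"older $(p,q,r)$ is sharper and delivers $4$. Both routes are valid, and your handling of the degenerate event $\{\theta_{n,s}=0\}$ matches the paper's (which also disposes separately of $\{\mu_{\xi,p}=0\}\cup\{\mu_{\zeta,q}=0\}$, a trivial case you may want to mention).
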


It is not hard to check that under Assumption \ref{assu:psi_dep_gen} the bound in \eqref{eq:cov_ineq1} preserves the scale-equivariance property because for any $c_1,c_2\in\R$,
\[
	\psi_{a,b}(\varphi_{K c_1 z_1}\circ (c_1 f),\varphi_{Lc_2 z_2}\circ (c_2 g))=\abs{c_1c_2}\psi_{a,b}(\varphi_{Kz_1}\circ f,\varphi_{Lz_2}\circ g).
	\]

\begin{corollary}\label{cor:cov_ineq_simple}
Suppose that the assumptions of Theorem \ref{thm:cov_ineq1} hold. Then
\[
	\absin{\Cov(\xi,\zeta\mid \C_n)}\le (\overline{\theta}_{n,s} \check{\psi}_{a,b} + 16) \mu_{\xi,p}\mu_{\zeta,q}\underline{\theta}_{n,s}^{1-\frac{1}{p}-\frac{1}{q}} \qtext{a.s.},
\]
where
\[
	\check{\psi}_{a,b} \eqdef \sup_{K,L\in (0,\infty)}(KL)^{-1}\psi_{a,b}(\varphi_K\circ f,\varphi_L\circ g).
\]
\end{corollary}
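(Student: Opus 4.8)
The plan is to obtain Corollary~\ref{cor:cov_ineq_simple} directly from the bound~\eqref{eq:cov_ineq1} of Theorem~\ref{thm:cov_ineq1}, by showing that the quantity $\overline{\psi}_{a,b}(\mu_{\xi,p},\mu_{\zeta,q})$ occurring there is dominated by $\check{\psi}_{a,b}\,\mu_{\xi,p}\mu_{\zeta,q}$. Since all hypotheses of Theorem~\ref{thm:cov_ineq1} are assumed, \eqref{eq:cov_ineq1} holds a.s., so it remains only to re-express its first summand.

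First I would dispose of the degenerate situations. If $\check{\psi}_{a,b}=\infty$ the asserted right-hand side is $+\infty$ and there is nothing to prove, so assume $\check{\psi}_{a,b}<\infty$. On the event $\{\mu_{\xi,p}=0\}$ we have $\xi=0$ a.s.\ there, hence $\Cov(\xi,\zeta\mid\C_n)=0$ and the inequality holds trivially (and likewise on $\{\mu_{\zeta,q}=0\}$); on the remaining event both $\mu_{\xi,p}$ and $\mu_{\zeta,q}$ are strictly positive, and by assumption~(i) of Theorem~\ref{thm:cov_ineq1} they are also a.s.\ finite.

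The crux is a change of variables in the supremum defining $\overline{\psi}_{a,b}$. Fixing $z_1=\mu_{\xi,p}\in(0,\infty)$ and $z_2=\mu_{\zeta,q}\in(0,\infty)$ and substituting $K'=Kz_1$, $L'=Lz_2$, the map $(K,L)\mapsto(Kz_1,Lz_2)$ carries $[1,\infty)^2$ bijectively onto $[z_1,\infty)\times[z_2,\infty)$ and $(KL)^{-1}=z_1z_2(K'L')^{-1}$, so
\[
	\overline{\psi}_{a,b}(z_1,z_2)=z_1z_2\sup_{K'\ge z_1,\,L'\ge z_2}(K'L')^{-1}\psi_{a,b}(\varphi_{K'}\circ f,\varphi_{L'}\circ g)\le z_1z_2\,\check{\psi}_{a,b},
\]
the inequality being merely the enlargement of the index set to all of $(0,\infty)^2$. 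Substituting $\overline{\psi}_{a,b}(\mu_{\xi,p},\mu_{\zeta,q})\le \mu_{\xi,p}\mu_{\zeta,q}\,\check{\psi}_{a,b}$ into~\eqref{eq:cov_ineq1} and factoring out $\mu_{\xi,p}\mu_{\zeta,q}$ then yields
\[
	\absin{\Cov(\xi,\zeta\mid\C_n)}\le\big(\overline{\theta}_{n,s}\,\check{\psi}_{a,b}+16\big)\mu_{\xi,p}\mu_{\zeta,q}\,\underline{\theta}_{n,s}^{1-\frac{1}{p}-\frac{1}{q}}\qtext{a.s.},
\]
which is the claim.

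There is no serious obstacle here beyond bookkeeping. The two points that warrant a moment's care are that $\check{\psi}_{a,b}$ is a deterministic constant whereas $\mu_{\xi,p}$ and $\mu_{\zeta,q}$ are $\C_n$-measurable, so the displayed inequality is the appropriate a.s.\ statement inherited from~\eqref{eq:cov_ineq1}; and that the rescaling step is legitimate precisely because $\mu_{\xi,p},\mu_{\zeta,q}$ are a.s.\ finite and the zero cases have been separated out, as above.
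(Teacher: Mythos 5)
Your proof is correct and follows the route the paper intends (the paper states the corollary without proof as an immediate consequence of Theorem \ref{thm:cov_ineq1}): apply the bound \eqref{eq:cov_ineq1} and observe, via the reindexing $K'=K\mu_{\xi,p}$, $L'=L\mu_{\zeta,q}$, that $\overline{\psi}_{a,b}(\mu_{\xi,p},\mu_{\zeta,q})\le \mu_{\xi,p}\mu_{\zeta,q}\,\check{\psi}_{a,b}$, with the degenerate cases $\mu_{\xi,p}=0$ or $\mu_{\zeta,q}=0$ handled exactly as in the theorem's own proof. Nothing is missing.
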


The latter result applies trivially to the strong mixing processes and any measurable functions $f$ and $g$ satisfying relevant moment conditions because $\psi_{a,b}(f,g)=4\norm{f}_{\infty}\norm{g}_{\infty}$. However, for some types of $\psi$-dependence, Condition (ii) of Theorem \ref{thm:cov_ineq1} may not be satisfied. Consider, for example, the case in which $\F=\L_v$ and $f(x,y)=xy$ with $x,y\in \R$. For any $K>0$, the set $\{\absin{f}\le K\}$ is unbounded so that $\varphi_K\circ f$ is not Lipschitz.\footnote{Since $\partial(xy)/\partial x=y$, one can choose $x=0$ so that the function is bounded by any $K>0$, but the partial derivative is unbounded.} To handle such cases we use truncated domains in addition to censoring of the transformation functions.

\begin{theorem}
\label{thm:cov_ineq2}
Suppose that Assumption \ref{assu:psi_dep_gen} holds, and let $f$, $g$, $\xi$, and $\zeta$ be as in \eqref{eq:xi_zeta}. Suppose further that
\begin{enumerate}[label=\textnormal{(\roman*)},leftmargin=*,align=left]
	\item the functions $f$ and $g$ are continuous, and
	\item $\mu_{\xi,p}< \infty$ and $\mu_{\zeta,q}< \infty$ a.s.\ for some $p,q>1$ s.t.\ $p^{-1}+q^{-1}<1$.
\end{enumerate}
Furthermore, there exist increasing continuous functions $h_1,h_2:[0,\infty] \to [0,\infty]$ such that
\begin{enumerate}[resume,label=\textnormal{(\roman*)},leftmargin=*,align=left]
	\item $\gamma_1\eqdef\max_{i\in A}\max_{1\le k\le v}\normin{h_1^{-1}(\absin{[Y_{n,i}]_k})}_{\C_n,p}< \infty$ a.s.\ and \\
	$\gamma_2\eqdef\max_{i\in B}\max_{1\le k\le v}\normin{h_2^{-1}(\absin{[Y_{n,i}]_k})}_{\C_n,q}< \infty$ a.s.;
	\item $f_K\eqdef \varphi_{K_1}\circ f\circ\varphi_{h_1(K_2)}\in \F_a$ and $g_K\eqdef \varphi_{K_1}\circ g\circ\varphi_{h_2(K_2)}\in \G_b$, for all $K\in (0,\infty)^2$;
	\item $(K,L)\mapsto\psi_{a,b}(f_K,g_L)$ is continuous on $(0,\infty)^4$.
\end{enumerate}
Then
\begin{equation}
\label{eq:cov_ineq2}
	\absin{\Cov(\xi,\zeta\mid \C_n)}\le \left(\overline{\theta}_{n,s}\widetilde{\psi}_{a,b}(\mu_{\xi,p},\mu_{\zeta,q},\gamma_1,\gamma_2)+16(abv^2+1)\mu_{\xi,p}\mu_{\zeta,q}\right)\underline{\theta}_{n,s}^{1-\frac{1}{p}-\frac{1}{q}} \qtext{a.s.},
\end{equation}
where for $(z_j,w_j)\in (0,\infty)^2$, $j=1,2$,
\[
	\widetilde{\psi}_{a,b}(z_1,z_2,w_1,w_2)\eqdef \sup_{K,L\ge 1}(KL)^{-1}\psi_{a,b}(f_{(Kz_1,Kw_1)},g_{(Lz_2,Lw_2)}).
\]
\end{theorem}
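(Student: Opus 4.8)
The plan is to reduce the general statement to Theorem \ref{thm:cov_ineq1} by first truncating the \emph{domains} of $f$ and $g$, then applying the censoring argument of that theorem to the truncated functions. Concretely, fix $K_2,L_2>0$ and set $f^{\mathrm{tr}}\eqdef f\circ \varphi_{h_1(K_2)}$ and $g^{\mathrm{tr}}\eqdef g\circ\varphi_{h_2(K_2)}$; these are continuous functions on bounded domains (after the inner censoring the argument lies in a box), and by hypothesis (iv) their outer-censored versions $\varphi_{K_1}\circ f^{\mathrm{tr}} = f_K$ and $\varphi_{L_1}\circ g^{\mathrm{tr}} = g_L$ lie in $\F_a$ and $\G_b$. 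I would first bound the covariance of the truncated random variables $\xi^{\mathrm{tr}}\eqdef f^{\mathrm{tr}}(Y_{n,A})$ and $\zeta^{\mathrm{tr}}\eqdef g^{\mathrm{tr}}(Y_{n,B})$ using (a conditional version of) Theorem \ref{thm:cov_ineq1}: the censoring conditions (ii)--(iii) of Theorem \ref{thm:cov_ineq1} are exactly hypotheses (v)--(vi) here restricted to the truncated functions. This yields a bound of the form $\bigl(\overline{\theta}_{n,s}\overline{\psi}_{a,b}(\mu_{\xi^{\mathrm{tr}},p},\mu_{\zeta^{\mathrm{tr}},q}) + 16\,\mu_{\xi^{\mathrm{tr}},p}\mu_{\zeta^{\mathrm{tr}},q}\bigr)\underline{\theta}_{n,s}^{1-1/p-1/q}$, and since truncation only decreases the relevant norms, $\mu_{\xi^{\mathrm{tr}},p}\le\mu_{\xi,p}$ and $\mu_{\zeta^{\mathrm{tr}},q}\le\mu_{\zeta,q}$.

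Second, I would control the approximation error $\Cov(\xi,\zeta\mid\C_n) - \Cov(\xi^{\mathrm{tr}},\zeta^{\mathrm{tr}}\mid\C_n)$. Writing $\xi - \xi^{\mathrm{tr}} = f(Y_{n,A}) - f(\varphi_{h_1(K_2)}(Y_{n,A}))$, this difference is supported on the event $\{\max_{i\in A}\max_k |[Y_{n,i}]_k| > h_1(K_2)\}$, and similarly for $\zeta$. The role of hypothesis (iii) is to give Markov-type tail control: since $h_1$ is increasing, $\{|[Y_{n,i}]_k| > h_1(K_2)\} = \{h_1^{-1}(|[Y_{n,i}]_k|) > K_2\}$, so by the conditional Markov inequality $\PM(\cdots\mid\C_n)\le \gamma_1^p K_2^{-p}$ after a union bound over the $av$ relevant coordinates. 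Combining this with Hölder's inequality (using $p^{-1}+q^{-1}<1$, so there is room for a third exponent on the indicator) bounds the error in $\Cov$ by a constant times $(abv^2)\,\mu_{\xi,p}\mu_{\zeta,q}$ weighted by a negative power of $K_2$; I expect the $abv^2+1$ factor in \eqref{eq:cov_ineq2} to emerge precisely from these union bounds over the two index sets and the $v$ coordinates of each vector. Third, I would optimize over the truncation levels: choosing $K_1,K_2$ (and $L_1,L_2$) as appropriate powers of $\underline\theta_{n,s}$ — mimicking the choice in the proof of Theorem \ref{thm:cov_ineq1} — balances the censoring error against the approximation error and produces the stated exponent $1-1/p-1/q$ and the supremum defining $\widetilde{\psi}_{a,b}$.

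The main obstacle will be step two: making the domain-truncation error bound genuinely uniform and almost sure, while keeping track of the conditional ($\C_n$-measurable) nature of everything. In particular one must verify that $f^{\mathrm{tr}}$ and $g^{\mathrm{tr}}$, though continuous, still interact correctly with the $\psi$-dependence definition (which requires membership in $\F_a$, $\G_b$ only after the \emph{outer} censoring), and that the measurability footnote attached to Lemma \ref{lemma:psi_dep} applies so that $\widetilde\psi_{a,b}$ evaluated at the random arguments $\mu_{\xi,p},\mu_{\zeta,q},\gamma_1,\gamma_2$ is itself $\C_n$-measurable; this is why hypothesis (vi) demands joint continuity of $(K,L)\mapsto\psi_{a,b}(f_K,g_L)$ on $(0,\infty)^4$ rather than merely on a two-dimensional slice. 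The remaining steps — the conditional version of Theorem \ref{thm:cov_ineq1}, the Markov/Hölder estimates, and the optimization over truncation levels — are routine once the bookkeeping is set up, and I would relegate the detailed computations (as the paper does) to the Supplemental Note.
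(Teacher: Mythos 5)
There is a genuine gap, and it lies at the heart of your plan: the claim that ``truncation only decreases the relevant norms, $\mu_{\xi^{\mathrm{tr}},p}\le\mu_{\xi,p}$'' is false for a general continuous $f$. Censoring the \emph{argument} gives $f(\varphi_{h_1(K_2)}(Y_{n,A}))$, and there is no reason why $|f(\varphi_{h_1(K_2)}(y))|\le |f(y)|$; for instance, if $f$ is large on the boundary of the box $[-h_1(K_2),h_1(K_2)]^{v\times a}$ and small where $Y_{n,A}$ actually lives, then $\mu_{\xi^{\mathrm{tr}},p}$ can vastly exceed $\mu_{\xi,p}$ (the inequality does hold for special cases such as product functions, which is perhaps what suggested it, but the theorem is stated for arbitrary continuous $f$). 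The hypotheses of the theorem control only $\mu_{\xi,p}$, $\mu_{\zeta,q}$, $\gamma_1$, $\gamma_2$; they give no control over the moments of the domain-truncated variables $\xi^{\mathrm{tr}},\zeta^{\mathrm{tr}}$. Consequently your step one (apply Theorem \ref{thm:cov_ineq1} to $\xi^{\mathrm{tr}},\zeta^{\mathrm{tr}}$) does not deliver a bound expressible through the quantities in the statement—the outer censoring levels and the $16\mu\mu$ term it produces are tied to $\mu_{\xi^{\mathrm{tr}},p},\mu_{\zeta^{\mathrm{tr}},q}$, not to $\mu_{\xi,p},\mu_{\zeta,q}$ and $\gamma_1,\gamma_2$ as required by $\widetilde{\psi}_{a,b}$—and your step two suffers from the same problem, since the error $\xi-\xi^{\mathrm{tr}}$ is unbounded and its $p$-norm again involves $\mu_{\xi^{\mathrm{tr}},p}$ before H\"older can be applied.

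The paper avoids this by reversing the order of operations. It keeps the decomposition from the proof of Theorem \ref{thm:cov_ineq1} applied to the \emph{original} $\xi,\zeta$ (normalized and outer-censored at levels $\kappa,\lambda$), so the leading term $\E[\xi_{\kappa}^{*}\zeta_{\lambda}^{*}\mid\C_n]$ involves variables bounded by $2\kappa$ and $2\lambda$; only inside this bounded term does it introduce the doubly-censored functions $f_{(\kappa\mu_{\xi,p},\kappa\gamma_1)}$ and $g_{(\lambda\mu_{\zeta,q},\lambda\gamma_2)}$, with the inner truncation level scaled by the \emph{same} $\kappa$ (resp.\ $\lambda$) as the outer one, exactly as the definition of $\widetilde{\psi}_{a,b}$ requires. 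The approximation errors $\xi_{\kappa}-\xi_{\dbl\kappa}$ are then bounded by $2\kappa$ and supported on the event that some coordinate exceeds $h_1(\kappa\gamma_1)$, whose conditional probability is at most $\kappa^{-p}$ by Markov's inequality using $\gamma_1$; no moment of $f$ composed with a censored argument ever appears. This is what produces the $4av\,\kappa^{1-p}\lambda$, $4bv\,\kappa\lambda^{1-q}$ and $4abv^{2}$ terms which, combined with the bounds inherited from Theorem \ref{thm:cov_ineq1} and the choice $\kappa=\underline{\theta}_{n,s}^{-1/p}$, $\lambda=\underline{\theta}_{n,s}^{-1/q}$, give the constant $16(abv^{2}+1)$ and the exponent $1-\tfrac1p-\tfrac1q$ in \eqref{eq:cov_ineq2}. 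Your closing choices of truncation levels and the measurability discussion are fine, but the argument as proposed cannot be completed without restructuring it along these lines.
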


It can be seen from the proof that when $\varphi_K\circ f\in \F_a$ for all $K>0$ and $g$ satisfies the conditions of Theorem \ref{thm:cov_ineq2}, there is no need to truncate the domain of $f$. In such a case we do not require the continuity of $f$, and the covariance inequality becomes
\[
	\absin{\Cov(\xi,\zeta\mid \C_n)}\le \left(\overline{\theta}_{n,s} \widetilde{\psi}_{a,b}(\mu_{\xi,p},\mu_{\zeta,q},0,\gamma_2)+4(bv+4)\mu_{\xi,p}\mu_{\zeta,q}\right)\underline{\theta}_{n,s}^{1-\frac{1}{p}-\frac{1}{q}} \qtext{a.s.},
\]
where $h_1\equiv \infty$. Similarly, if both $\varphi_K\circ f\in \F_a$ and $\varphi_K\circ g\in \G_b$ for all $K>0$, we are back to the result of Theorem \ref{thm:cov_ineq1}.

Condition (iii) is a moment condition on the original process, where the required moments are defined through the functions $h_1$ and $h_2$. In the special case in which Assumption \ref{assu:psi_dep} holds (i.e., $\F=\G=\L_v$ and the $\psi$ functions are of a certain form), and $f$ and $g$ are the product functions on $\R^{1\times a}$ and $\R^{1\times b}$, respectively, with $a,b\ge 2$, i.e.,
\[
	f(Y_{n,A})=\prod_{i\in A} Y_{n,i} \qtext{and}\quad g(Y_{n,B})=\prod_{i\in B} Y_{n,i},
\]
it suffices to choose $h_1(x)=x^{\frac{1}{a-1}}$ and $h_2(x)=x^{\frac{1}{b-1}}$ in order to guarantee that $\widetilde{\psi}_{a,b}$ is finite valued. Indeed, with this choice of functions $h_1$ and $h_2$ it is not hard to see that $\Lip(f_{(K_1,K_2)})$ and $\Lip(g_{(K_1,K_2)})$ are bounded by $K_2$.

\begin{corollary}
\label{corr:cov_ineq_prod}
Let $\{Y_{n,i}\}$ be an array of random variables satisfying Assumption \ref{assu:psi_dep}, $\xi=\prod_{i\in A} Y_{n,i}$, and $\zeta=\prod_{i\in B} Y_{n,i}$. Let $\{p_i:i\in A\}$ and $\{q_i:i\in B\}$ be collections of positive numbers such that $p^{-1}+q^{-1}<1$, where $p\eqdef\left(\sum_{i\in A}1/p_i\right)^{-1}$ and $q\eqdef\left(\sum_{i\in B}1/q_i\right)^{-1}$. Suppose that
$\norm{Y_{n,i}}_{\C_n,p^*}+\norm{Y_{n,j}}_{\C_n,q^*}<\infty$ a.s.\ for $p^*=\max_{i\in A} p_i$, $q^*=\max_{i\in B}q_i$ and all $i\in A$, $j\in B$. Then
\[
	\absin{\Cov(\xi,\zeta\mid \C_n)}\le 2\overline{\theta}_{n,s}(C+16)\times ab\left(\pi_1+\tilde{\gamma}_1\right)\left(\pi_2+\tilde{\gamma}_2\right)\underline{\theta}_{n,s}^{1-\frac{1}{p}-\frac{1}{q}} \qtext{a.s.},
\]
where
\begin{align*}
	&\pi_1=\prod_{i\in A}\norm{Y_{n,i}}_{\C_n,p_i}, \quad \tilde{\gamma}_1=\max_{i\in A}\norm{Y_{n,i}^{a-1}}_{\C_n,p}, \\
	&\pi_2=\prod_{i\in B}\norm{Y_{n,i}}_{\C_n,q_i}, \quad \tilde{\gamma}_2=\max_{i\in B}\norm{Y_{n,i}^{b-1}}_{\C_n,q},
\end{align*}
and $C>0$ is the constant in Assumption \ref{assu:psi_dep}.
\end{corollary}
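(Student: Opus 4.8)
The plan is to obtain Corollary~\ref{corr:cov_ineq_prod} as a direct specialization of Theorem~\ref{thm:cov_ineq2}, with the truncation functions chosen so that the censored products become bounded and Lipschitz, followed by bookkeeping of constants. Concretely, I would apply Theorem~\ref{thm:cov_ineq2} with $v=1$, $\F=\G=\L_v$, $f(\vec y)=\prod_{l}y_l$ (and $g$ the analogous product over $B$), and $h_1(x)=x^{1/(a-1)}$, $h_2(x)=x^{1/(b-1)}$; these are increasing continuous functions on $[0,\infty]$ with inverses $h_1^{-1}(x)=x^{a-1}$, $h_2^{-1}(x)=x^{b-1}$ (here $a,b\ge 2$, as for the product functions at hand). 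Hypotheses (i) and (ii) of Theorem~\ref{thm:cov_ineq2} are then immediate: products are continuous, and the conditional generalized H\"older inequality with exponents $r_i=p_i/p\ge 1$ (using $\sum_{i\in A}p_i^{-1}=p^{-1}$, and $p\le p_i$) yields $\mu_{\xi,p}=\norm{\prod_{i\in A}Y_{n,i}}_{\C_n,p}\le\prod_{i\in A}\norm{Y_{n,i}}_{\C_n,p_i}=\pi_1$, and symmetrically $\mu_{\zeta,q}\le\pi_2$; these are finite a.s.\ because $p_i\le p^\ast$ gives $\norm{Y_{n,i}}_{\C_n,p_i}\le\norm{Y_{n,i}}_{\C_n,p^\ast}<\infty$ a.s.

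For hypothesis (iii), since $v=1$ and $h_1^{-1}(x)=x^{a-1}$ we have $\gamma_1=\max_{i\in A}\norm{\,\abs{Y_{n,i}}^{a-1}}_{\C_n,p}=\tilde\gamma_1$ and $\gamma_2=\tilde\gamma_2$, and these are a.s.\ finite because $p_i\le p^\ast$ forces $p^{-1}=\sum_{i\in A}p_i^{-1}\ge a/p^\ast$, hence $(a-1)p\le(a-1)p^\ast/a<p^\ast$, so conditional Jensen gives $\tilde\gamma_1\le\max_{i\in A}\norm{Y_{n,i}}_{\C_n,p^\ast}^{\,a-1}<\infty$ a.s.\ (likewise for $\tilde\gamma_2$). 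For hypothesis (iv), the key estimate is that the product map restricted to a box $[-M,M]^a$ is $M^{a-1}$-Lipschitz with respect to $\mathss{d}_a$ (which, since $v=1$, is the $\ell^1$ distance), via the telescoping identity $\prod_l x_l-\prod_l y_l=\sum_k\bigl(\prod_{l<k}x_l\bigr)(x_k-y_k)\bigl(\prod_{l>k}y_l\bigr)$. Composing with the $1$-Lipschitz censoring $\varphi_{K_1}$ does not increase the constant, so $f_{(K_1,K_2)}=\varphi_{K_1}\circ f\circ\varphi_{h_1(K_2)}\in\L_{v,a}$ with $\norm{f_{(K_1,K_2)}}_{\infty}\le K_1$ and $\Lip(f_{(K_1,K_2)})\le h_1(K_2)^{a-1}=K_2$, and likewise $g_{(K_1,K_2)}\in\L_{v,b}$ with $\norm{g_{(K_1,K_2)}}_{\infty}\le K_1$, $\Lip(g_{(K_1,K_2)})\le K_2$.

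The main (and essentially only) obstacle is hypothesis (v): the array's functional $\psi_{a,b}$ need not be jointly continuous in the censoring/truncation parameters, since under Assumption~\ref{assu:psi_dep} it may depend on Lipschitz constants of piecewise-linear compositions. I would bypass this exactly as in the remark following Lemma~\ref{lemma:psi_dep}: Assumption~\ref{assu:psi_dep}(a) furnishes the continuous dominating bound $\psi_{a,b}(f_{(K_1,K_2)},g_{(L_1,L_2)})\le Cab(K_1+K_2)(L_1+L_2)$, and the conclusion of Theorem~\ref{thm:cov_ineq2} continues to hold with $\psi_{a,b}$ replaced by this upper bound. Then $\widetilde\psi_{a,b}(\mu_{\xi,p},\mu_{\zeta,q},\gamma_1,\gamma_2)=\sup_{K,L\ge1}(KL)^{-1}\psi_{a,b}(f_{(K\mu_{\xi,p},K\gamma_1)},g_{(L\mu_{\zeta,q},L\gamma_2)})\le Cab(\mu_{\xi,p}+\gamma_1)(\mu_{\zeta,q}+\gamma_2)\le Cab(\pi_1+\tilde\gamma_1)(\pi_2+\tilde\gamma_2)$. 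Substituting this, together with $\mu_{\xi,p}\le\pi_1$, $\mu_{\zeta,q}\le\pi_2$ and $v=1$, into \eqref{eq:cov_ineq2}, the right-hand side is at most $\bigl(Cab(\pi_1+\tilde\gamma_1)(\pi_2+\tilde\gamma_2)\,\overline\theta_{n,s}+16(ab+1)\pi_1\pi_2\bigr)\underline\theta_{n,s}^{1-1/p-1/q}$; finally, using $ab+1\le 2ab$, $\pi_1\pi_2\le(\pi_1+\tilde\gamma_1)(\pi_2+\tilde\gamma_2)$, and $\overline\theta_{n,s}\ge1$ to absorb constants, I arrive at the asserted bound $2\overline\theta_{n,s}(C+16)\,ab(\pi_1+\tilde\gamma_1)(\pi_2+\tilde\gamma_2)\,\underline\theta_{n,s}^{1-1/p-1/q}$. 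Everything apart from the verification of (iv)--(v) is routine manipulation of H\"older's and Jensen's inequalities.
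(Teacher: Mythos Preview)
Your proposal is correct and follows precisely the approach the paper itself sketches in the paragraph immediately preceding Corollary~\ref{corr:cov_ineq_prod}: apply Theorem~\ref{thm:cov_ineq2} with $h_1(x)=x^{1/(a-1)}$, $h_2(x)=x^{1/(b-1)}$, note that $\Lip(f_{(K_1,K_2)})\le K_2$ via the telescoping product identity, and use the dominating-function remark after Lemma~\ref{lemma:psi_dep} together with Assumption~\ref{assu:psi_dep}(a) to handle the continuity requirement in hypothesis~(v). Your constant bookkeeping (H\"older for $\mu_{\xi,p}\le\pi_1$, identification $\gamma_j=\tilde\gamma_j$, and the final absorption $C\overline\theta_{n,s}+32\le 2\overline\theta_{n,s}(C+16)$ using $\overline\theta_{n,s}\ge1$) is accurate.
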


\subsection{Proofs of the Main Results}

\begin{proof}[\textnbf{Proof of Proposition \ref{prop:example_eta}}]
Let $\xi\eqdef f(Y_{n,A})$, $\zeta\eqdef g(Y_{n,B})$,
\[
	\xi^{(s)}\eqdef f(Y_{n,i}^{(s)}: i \in A), \qtext{and} \quad \zeta^{(s)}\eqdef g(Y_{n,i}^{(s)}: i \in B).
\]
Then, since $\xi^{(s)}$ and $\zeta^{(s)}$ are conditionally independent given $\C_n$, we find that
\begin{align*}
	\absin{\Cov(\xi,\zeta\mid \C_n)}&\le \absin{\Cov((\xi-\xi^{(s)}),\zeta\mid \C_n)}+\absin{\Cov(\xi^{(s)},(\zeta-\zeta^{(s)})\mid \C_n)} \\[1em]
	&\le 2\norm{g}_{\infty}\E[\absin{\xi-\xi^{(s)}}\mid \C_n]+ 2\norm{f}_{\infty}\E[\absin{\zeta-\zeta^{(s)}}\mid \C_n] \\[1em]
	&\le 2\norm{g}_{\infty}\Lip(f)\sum_{i\in A}\E[\normin{Y_{n,i}-Y_{n,i}^{(s)}}\mid \C_n] \\
	&\quad+ 2\norm{f}_{\infty}\Lip(g)\sum_{i\in B}\E[\normin{Y_{n,i}-Y_{n,i}^{(s)}}\mid \C_n] \\
	&\le \big(a\norm{g}_{\infty}\Lip(f)+b\norm{f}_{\infty}\Lip(g)\big)\times \theta_{n,s} \qtext{a.s.}
\qedhere
\end{align*}
\end{proof}

\begin{proof}[\textnbf{Proof of Proposition \ref{prop:example_eta2}}]
Define $\xi$, $\zeta$, $\xi^{(s)}$, and $\zeta^{(s)}$ as in the proof of Proposition \ref{prop:example_eta}. With the choice of $(A,B) \in \P_n(a,b;3s)$,
\begin{align}
	\begin{aligned}\label{eq:ext}
		\absin{\Cov(\xi,\zeta\mid \C_n)}&\le \absin{\Cov((\xi-\xi^{(s)}),\zeta\mid \C_n)}+\absin{\Cov(\xi^{(s)},(\zeta-\zeta^{(s)})\mid \C_n)} \\
		&\quad + \absin{\Cov(\xi^{(s)},\zeta^{(s)} \mid \C_n)} \qtext{a.s.}
	\end{aligned}
\end{align}
The first two terms on the right hand side are bounded by
\[
	2 \big(a\norm{g}_{\infty}\Lip(f)+b\norm{f}_{\infty}\Lip(g)\big)\times \max_{i \in N_n} \E[ \| Y_{n,i} - Y_{n,i}^{(s)}|\C_n ] \qtext{a.s.}
\]
from the proof of Proposition \ref{prop:example_eta}. We focus on the last conditional covariance in \eqref{eq:ext}. Define
\[
	N_n(A;s) = \bigcup_{i \in A} N_n(i;s).
\]
By the $\psi$-dependence of $\varepsilon_{n,i}$, for any $(A,B) \in \P_n(a,b;3s)$, and any $f \in \L_{k,| N_{n}(A;s)|}$ and $g \in \L_{k,| N_{n}(B;s)|}$,
\begin{equation}
\label{eq:bd32}
	\abs{\Cov ( f(\varepsilon_{n, N_{n}(A;s)}), f(\varepsilon_{n, N_{n}(B;s)})\mid \C_n )} \le \psi_{| N_{n}(A;s)|,| N_{n}(B;s)|}^\varepsilon(f,g) \theta_{n,s}^\varepsilon.\qtext{a.s.}
\end{equation}

Now, let $f,g$ be as given in the proposition. Let $f_1,g_1$ be maps such that
\begin{align*}
	f_1(\varepsilon_{n, N_{n}(A;s)}) &= f\left( (\bm \phi_{n,i}(\varepsilon_{n}^{(s,i)}))_{i \in A} \right), \text{ and } \\
	g_1(\varepsilon_{n, N_{n}(A;s)}) &= g\left( (\bm \phi_{n,i}(\varepsilon_{n}^{(s,i)}))_{i \in B} \right).
\end{align*}
Then by \eqref{eq:bd32} and Assumption \ref{assu:psi_dep}(a), we find that
\begin{align*}
	\abs{\Cov\left( f(Y_{n,A}^{(s)}), g(Y_{n,B}^{(s)}) \mid \C_n \right)}
	&\le \psi_{| N_{n}(A;s)|,|\overline N_{n}(B;s)|}^\varepsilon(f_1,g_1) \theta_{n,s}^\varepsilon\\
	&\le C| N_n(A;s)| | N_n(B;s)| \\
	&\quad\times \left( \|f_1\|_\infty + \Lip(f_1)\right)\left( \|g_1\|_\infty + \Lip(g_1)\right) \theta_{n,s}^\varepsilon \qtext{a.s.}
\end{align*}
Note that $ \|f_1\|_\infty = \|f\|_\infty$ and $ \|g_1\|_\infty = \|g\|_\infty$, and
\begin{align*}
	|N_n(A;s)||N_n(B;s)| \le D_n^{2}(s) a b.
\end{align*}
Let us compute $\Lip(f_1)$. For $x_n^{(s,i)},\tilde x_n^{(s,i)} \in \R^{n \times k}$ such that their $j$-th rows $x_{n,j}$ and $\tilde x_{n,j}$ are equal to zero for all $j \in N_n \setminus N_n(i;s)$. Then by the definition of $\mathss{d}_a$ in \eqref{eq:distance},
\begin{align*}
	&\left| f\left( (\bm \phi_{n,i}(x_n^{(s,i)}))_{i \in A} \right) - f\left( (\bm \phi_{n,i}(\tilde x_n^{(s,i)}))_{i \in A} \right) \right| \\[1em]
	&\qquad\le \Lip(f) \mathss{d}_a\left((\bm \phi_{n,i}(x_n^{(s,i)}))_{i \in A},(\bm \phi_{n,i}(\tilde x_n^{(s,i)}))_{i \in A} \right).
\end{align*}
The last term is equal to
\begin{align*}
	&\Lip(f) \sum_{i \in A} \sqrt{\sum_{r=1}^k (\phi_{n,ir}(x_n^{(s,i)}) - \phi_{n,ir}(\tilde x_n^{(s,i)}))^2}\\
	&\qquad\le \sqrt{k} \Lip(f) \max_{i \in N_n} \sum_{r=1}^{k} \sum_{i \in A} |\phi_{n,ir}(x_n^{(s,i)}) - \phi_{n,ir}(\tilde x_n^{(s,i)})|\\
	&\qquad\le \sqrt{k} \Lip(f) \max_{i \in N_n} \sum_{r=1}^{k} \Lip(\phi_{n,ir})\sum_{i \in N_n} \|x_n^{(s,i)} - \tilde x_n^{(s,i)}\|\\
	&\qquad\le \sqrt{k} \Lip(f) \max_{i \in N_n} \sum_{r=1}^{k} \Lip(\phi_{n,ir})\mathss{d}_n(x_n^{(s)},\tilde x_n^{(s)}),
\end{align*}
where $x_n^{s}, \tilde x_n^{(s)} \in \R^{n \times k}$ are matrices whose $i$-th row is given by $x_n^{s,i}$ and $\tilde x_n^{s,i}$. Hence, we can take
\[
	\Lip(f_1) = \sqrt{k} \Lip(f) \bar \phi.
\]
Similarly, $\Lip(g_1) = \sqrt{k}\Lip(g) \bar \phi$. We find that
\begin{align*}
	&\abs{\Cov\left( f(Y_{n,A}^{(s)}), g(Y_{n,B}^{(s)}) \mid \C_n \right)} \\
	&\qquad\le C\times ab \left( \|f\|_\infty + \sqrt{k}\Lip(f) \bar \phi \right)\left( \|g\|_\infty + \sqrt{k}\Lip(g)\bar \phi \right) D_n^{2}(s) \theta_{n,s}^\varepsilon \qtext{a.s.}
\end{align*}
Combining this with the bounds for the two terms on the right hand side of \eqref{eq:ext}, we obtain the desired result.
\end{proof}

\begin{proof}[\textnbf{Proof of Theorem \ref{thm:pointwise_LLN}}]
We follow the approach of \cite{Jenish/Prucha:09:JOE}, see the proof of Theorem 3 therein. However, instead of the truncation used therein, we rely on censoring functions $\varphi_k(x)$ defined in \eqref{eq:censoring_fn} in order to be able to use the notion of $\psi$-dependence.\footnote{Unlike discontinuous truncation functions $x\cdot 1\{|x|\leq k\}$, censoring functions $\varphi_k(x)$ are continuous and have a finite Lipschitz constant: $\Lip(\varphi_k)=1$.}
Consider a censored version of $Y_{n,i}$: for some $k>0$, let
\begin{align*}
	Y_{n,i} & =Y_{n,i}^{(k)}+\tilde{Y}_{n,i}^{(k)},\mbox{ where}\\
	Y_{n,i}^{(k)} & =\varphi_k(Y_{n,i}),\\
	\tilde{Y}_{n,i}^{(k)} & =Y_{n,i}-\varphi_k(Y_{n,i})= (Y_{n,i}-\sgn(Y_{n,i})k)1\left(|Y_{n,i}|>k\right).
\end{align*}
We have
\begin{align}
	\norm{\frac{1}{n}\sum_{i \in N_n}\left(Y_{n,i}-\E[Y_{n,i} \mid \C_n ]\right)}_{\C_n,1}
	&\leq \norm{\frac{1}{n}\sum_{i \in N_n}\left(Y_{n,i}^{(k)}-\E[Y_{n,i}^{(k)} \mid \C_n ]\right)}_{\C_n,1} \notag \\
	&\quad +\norm{\frac{1}{n}\sum_{i \in N_n}\left(\tilde Y_{n,i}^{(k)}-\E[\tilde Y_{n,i}^{(k)} \mid \C_n ]\right)}_{\C_n,1}. \label{eq:LLNtail}
\end{align}
Note that $\E[\vert\tilde Y_{n,i}^{(k)}\vert \mid \C_n]=\E[\vert\tilde Y_{n,i}^{(k)}\vert 1\left(\vert Y_{n,i}\vert >k\right) \mid \C_n]\leq 2\E[|Y_{n,i}|1\{|Y_{n,i}|>k\}\mid \C_n]$, which holds since $\tilde Y_{n,i}^{(k)}=0$ when $|Y_{k,i}|\leq k$. Hence, by the triangle inequality, the second term on the right-hand side of \eqref{eq:LLNtail} is bounded by $2 \sup_{n\geq 1}\max_{i\in N_n} \E[\vert\tilde Y_{n,i}^{(k)}\vert \mid \C_n]\leq
4\sup_{n\geq 1}\max_{i\in N_n} \E[\vert Y_{n,i}\vert 1\{\vert Y_{n,i}\vert >k\} \mid \C_n]$.

By Assumption \ref{assu:Uniform_L1_Integrability} and using the same arguments as in \citet[Theorem 12.10]{Davidson:94:StochasticLimitTheory}, $\lim_{k\to\infty}\sup_{n\geq 1}\max_{i\in N_n} \E[\vert Y_{n,i}\vert 1\{\vert Y_{n,i}\vert >k\} \mid \C_n]=0$ a.s.

By the norm inequality,
\begin{align}
	\norm{\frac{1}{n}\sum_{i \in N_n}\left(Y_{n,i}^{(k)}-\E[Y_{n,i}^{(k)} \mid \C_n ]\right)}_{\C_n,1}&\leq \frac{\sigma_{n,k}}{n},\label{eq:Bound_sum_X^(k)}
\end{align}
where
\begin{align*}
	\sigma_{n,k}^{2} &= \E\left[\left(\sum_{i \in N_n}\left(Y_{n,i}^{(k)}-\E[Y_{n,i}^{(k)} \mid \C_n ]\right)\right)^{2}\mid \C_n\right]\\
	&= \sum_{i \in N_n}\sum_{j=1}^n\E\left[(Y_{n,i}^{(k)}-\E[Y_{n,i}^{(k)} \mid \C_n ])(Y_{n,j}^{(k)}-\E[Y_{n,j}^{(k)} \mid \C_n ])\mid C_n \right]\\
	&\leq 4nk^2+\sum_{i \in N_n}\sum_{s\ge 1}\sum_{j\in N_n^{\partial}(i;s)}\left|\Cov(Y_{n,i}^{(k)},Y_{n,j}^{(k)} \mid \C_n )\right|.
\end{align*}
In view of Definition \ref{def:psi_dep} and Assumption \ref{assu:psi_dep}(a),
we have for $d_n(i,j)=s$,
\begin{equation}\label{eq:HS_covariance}
	\abs{\Cov(Y_{n,i}^{(k)},Y_{n,j}^{(k)} \mid \C_n )}
	\leq \psi_{1,1}(\varphi_k,\varphi_k) \cdot \theta_{n,s} \quad \text{a.s.},
\end{equation}
where $\varphi_k$ is bounded and $\text{Lip}(\psi_k)=1$.
Using the definitions of $N_n^\partial(i;s)$ and $\delta_n^\partial(s)$, we obtain that $\sigma_{n,k}^{2}$ is bounded by
\begin{align}\label{eq:Bound_sigma^2_(n,k)}
	\begin{aligned}
	 4nk^2+\psi_{1,1}(\varphi_k,\varphi_k) \sum_{s\ge 1} \theta_{n,s}\sum_{i \in N_n}|N^\partial_n(i;s)|
	= n\left(4k^2+\psi_{1,1}(\varphi_k,\varphi_k) \sum_{s\ge 1}\delta_n^\partial(s) \theta_{n,s}\right).
	\end{aligned}
\end{align}
By \eqref{eq:Bound_sum_X^(k)} and \eqref{eq:Bound_sigma^2_(n,k)},
\[
	\norm{\frac{1}{n}\sum_{i \in N_n}\left(Y_{n,i}^{(k)}-\E[Y_{n,i}^{(k)} \mid C_n]\right)}_{\C_n,1}\leq \left(\frac{4k^2}{n} +\psi_{1,1}(\varphi_k,\varphi_k) \frac{1}{n}\sum_{s\ge 1}\delta^\partial_n(s) \theta_{n,s} \right)^{1/2}.
\]
The result now follows from Assumption \ref{assu:LLN_suff_delta}.
\end{proof}
For each $p \ge 1$,
\[
	\mu_{n,p} \eqdef 1\vee \max_{i \in N_n} \left(\E[|Y_{n,i}|^p \mid \C_n]\right)^{1/p}, \quad \overline{\theta}_n\eqdef\max_{s\ge 1}\overline{\theta}_{n,s}, \qtextq{and} \tilde S_n \eqdef \sum_{j \in N_n} \tilde Y_{n,j},
\]
where $\tilde Y_{n,j} = Y_{n,j}/\sigma_n$, assuming that $\sigma_n>0$ a.s. Define
\begin{equation}
\label{eq:H}
	\quad H_n(s,m) = \left\{(i,j,k,l) \in N_n^4: j \in N_n(i;m),l \in N_n(k;m), d_n(\{i,j\},\{k,l\}) = s \right\}.
\end{equation}
Then note that
\begin{align*}
	|H_n(s,m)|&\le 4\sum_{i\in N_n}\sum_{j\in N_n^{\partial}(i;s)}\abs{N_n(i;m)\setminus N_n(j;s - 1)}^2\\
	&\le 4\sum_{i\in N_n}\max_{j\in N_n^{\partial}(i; s)}\abs{N_n(i;m)\setminus N_n(j;s - 1)}^2\times |N_n^{\partial}(i;s)| \\
	&\le 4n\left[\Delta_n(s,m;2\alpha)\right]^{\frac{1}{\alpha}}\left[\delta_n^{\partial}(s;\alpha/(1-\alpha))\right]^{1-\frac{1}{\alpha}},
\end{align*}
by H\"{o}lder's inequality, where $\alpha \in (1,\infty)$. Recall the definition of $\Delta_n(s,m;k)$ and $c_n(s,m;k) $ in \eqref{eq:Delta_n} and \eqref{eq:c_n}. By taking the infimum over $\alpha>1$ of the last term, we obtain that
\begin{equation}
\label{eq:H_bound}
	|H_n(s,m)| \le 4 n c_n(s,m;2).
\end{equation}

The following lemma is used for the central limit theorem.

\begin{lemma}
	\label{lemma:CLT}
	Suppose that Assumption \ref{assu:psi_dep} holds, and that $\E[Y_{n,i}\mid \C_n]=0$ a.s. Let $g: \R \rightarrow \R$ be a twice continuously differentiable bounded function with bounded derivatives. Then there exists a constant $C>0$ such that for any positive integer $m_n$ and any $n \ge 1$,
	\begin{equation}
		|\E[g'(\tilde S_n)-\tilde S_n g(\tilde S_n)\mid \C_n]| \le \bar \Delta_n(g) + \frac{C n \overline{\theta}_n \| g''\|_\infty\mu_{n,p}^3}{\sigma_n^3} \sum_{s\ge 0} c_n(s,m_n;1) \theta_{n,s}^{1-(3/p)} \qtext{a.s.},
	\end{equation}
	where
	\begin{align*}
		\bar \Delta_n(g)&= \frac{C\sqrt{n\overline{\theta}_n}\|g'\|_\infty \mu_{n,p}^2}{\sigma_n^2}\sqrt{\sum_{s\ge 0}c_n(s,m_n;2) \theta_{n,s}^{1-(4/p)}}\\
		&\quad + C n^2\overline{\theta}_n\left( \frac{\|g'\|_\infty \mu_{n,p}^2}{\sigma_n^2} \theta_{n,m_n}^{1-(2/p)} + \frac{\left( \|g\|_\infty + \sigma_n^{-1}\|g'\|_\infty\right)\mu_{n,p}}{\sigma_n} \theta_{n,m_n}^{1-(1/p)}\right).
	\end{align*}
\end{lemma}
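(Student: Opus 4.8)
The plan is to bound the Stein functional $\E[g'(\tilde S_n)-\tilde S_n g(\tilde S_n)\mid\C_n]$ by the standard local-decomposition argument, adapted to conditional $\psi$-dependence. Write $W\eqdef\tilde S_n=\sum_{j\in N_n}\tilde Y_{n,j}$, so that $\E[W\mid\C_n]=0$ and $\Var(W\mid\C_n)=1$. For each $j$ set $A_j\eqdef N_n(j;m_n)$ and $W_j\eqdef\sum_{k\in A_j}\tilde Y_{n,k}$, the part of $W$ ``local'' to $j$. Starting from $\E[Wg(W)\mid\C_n]=\sum_j\E[\tilde Y_{n,j}g(W)\mid\C_n]$ and using $\E[\tilde Y_{n,j}\mid\C_n]=0$, I would split off the far part $g(W-W_j)$, which only involves $\{Y_{n,l}:d_n(j,l)>m_n\}$, and Taylor-expand the rest: writing $g(W)-g(W-W_j)=W_j\int_0^1 g'(W-(1-t)W_j)\,dt$ and replacing $g'(W-(1-t)W_j)$ by $g'(W)$ at the cost of a remainder $R_j^{(2)}$ with $|R_j^{(2)}|\le\|g''\|_\infty\E[|\tilde Y_{n,j}|W_j^2\mid\C_n]$, one gets
\[
	\E[\tilde Y_{n,j}g(W)\mid\C_n]=\E[\tilde Y_{n,j}W_j g'(W)\mid\C_n]+R_j^{(2)}+\Cov(\tilde Y_{n,j},g(W-W_j)\mid\C_n).
\]

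For the leading term, decompose $\sum_j\tilde Y_{n,j}W_j=\E[\sum_j\tilde Y_{n,j}W_j\mid\C_n]+D_n$ with $\E[D_n\mid\C_n]=0$, so that $\sum_j\E[\tilde Y_{n,j}W_j g'(W)\mid\C_n]=\E[g'(W)\mid\C_n]\,\E[\sum_j\tilde Y_{n,j}W_j\mid\C_n]+\Cov(D_n,g'(W)\mid\C_n)$. Collecting everything,
\[
	\E[g'(\tilde S_n)-\tilde S_n g(\tilde S_n)\mid\C_n]=\E[g'(W)\mid\C_n]\Big(1-\E[\textstyle\sum_j\tilde Y_{n,j}W_j\mid\C_n]\Big)-\Cov(D_n,g'(W)\mid\C_n)-\textstyle\sum_j R_j^{(2)}-\textstyle\sum_j\Cov(\tilde Y_{n,j},g(W-W_j)\mid\C_n),
\]
and I would bound the four pieces in turn. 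The first factor is $|1-\E[\sum_j\tilde Y_{n,j}W_j\mid\C_n]|=|\sum_{d_n(j,k)>m_n}\Cov(\tilde Y_{n,j},\tilde Y_{n,k}\mid\C_n)|$, which (one-factor covariance inequality, $p$-th moment bound $\mu_{n,p}$, at most $n^2$ pairs) is $\le Cn^2\overline{\theta}_n\sigma_n^{-2}\mu_{n,p}^2\theta_{n,m_n}^{1-2/p}$; times $\|g'\|_\infty$ this is the first term in the bracket of $\bar\Delta_n(g)$. For $\Cov(D_n,g'(W)\mid\C_n)$ use $|\Cov(D_n,g'(W)\mid\C_n)|\le\|g'\|_\infty\sqrt{\Var(D_n\mid\C_n)}$; since $\sum_j\tilde Y_{n,j}W_j=\sum_i\sum_{j\in A_i}\tilde Y_{n,i}\tilde Y_{n,j}$, expanding $\Var(D_n\mid\C_n)$ as a sum of $\Cov(\tilde Y_{n,i}\tilde Y_{n,j},\tilde Y_{n,k}\tilde Y_{n,l}\mid\C_n)$ over quadruples with $j\in A_i$, $l\in A_k$, grouping by $s=d_n(\{i,j\},\{k,l\})$, using $|H_n(s,m_n)|\le 4nc_n(s,m_n;2)$ from \eqref{eq:H_bound} and the product covariance inequality of Corollary \ref{corr:cov_ineq_prod} with $a=b=2$ (which yields the exponent $1-4/p$, valid since $p>4$), gives $\Var(D_n\mid\C_n)\le Cn\overline{\theta}_n\sigma_n^{-4}\mu_{n,p}^4\sum_{s\ge0}c_n(s,m_n;2)\theta_{n,s}^{1-4/p}$, hence the first term of $\bar\Delta_n(g)$.

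For the far-field term, $g(W-W_j)$ is bounded by $\|g\|_\infty$ with Lipschitz constant $\le\|g'\|_\infty/\sigma_n$ in the $Y_{n,l}$'s, and it is separated from $\tilde Y_{n,j}$ by more than $m_n$; a covariance inequality handling the unbounded factor $\tilde Y_{n,j}$ through its $p$-th conditional moment gives $|\Cov(\tilde Y_{n,j},g(W-W_j)\mid\C_n)|\le Cn\overline{\theta}_n\sigma_n^{-1}\mu_{n,p}(\|g\|_\infty+\sigma_n^{-1}\|g'\|_\infty)\theta_{n,m_n}^{1-1/p}$, and summing over $j$ yields the second term in the bracket of $\bar\Delta_n(g)$. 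Finally, for $\sum_j R_j^{(2)}$ expand $W_j^2=\sum_{k,l\in A_j}\tilde Y_{n,k}\tilde Y_{n,l}$, write each $\E[|\tilde Y_{n,j}|\tilde Y_{n,k}\tilde Y_{n,l}\mid\C_n]=\Cov(|\tilde Y_{n,j}|\tilde Y_{n,k},\tilde Y_{n,l}\mid\C_n)$ after picking $l$ as the most-separated node of the triple and using $\E[\tilde Y_{n,l}\mid\C_n]=0$, apply the product covariance inequality with three factors (exponent $1-3/p$, again needing $p>4$), and re-sum over triples grouped by separating distance with a combinatorial bound of the same Hölder-type as \eqref{eq:H_bound} but with $k=1$, i.e.\ bounded by $c_n(s,m_n;1)$; this produces the $\|g''\|_\infty$ term.

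The main obstacle is the unboundedness of the summands $\tilde Y_{n,j}=Y_{n,j}/\sigma_n$: the $\psi$-dependence of Definition \ref{def:psi_dep} is stated for bounded Lipschitz functions, but the Stein argument forces conditional covariances of $\tilde Y_{n,j}$ and of products thereof. This is exactly where the moment hypothesis $p>4$ and the covariance inequalities of Appendix \ref{subsec:covariance_inequalities} (in particular the product inequality of Corollary \ref{corr:cov_ineq_prod}) enter, converting each $\theta_{n,s}$ into $\theta_{n,s}^{1-k/p}$ with $k$ the number of factors in the relevant product. The remaining effort — reorganizing the triple and quadruple sums and matching their sizes to $c_n(s,m_n;1)$ and $|H_n(s,m_n)|\le 4nc_n(s,m_n;2)$ — is bookkeeping parallel to the derivation of \eqref{eq:H_bound}, and every estimate holds a.s.\ since $\theta_n$ is $\C_n$-measurable and the conditional moments are a.s.\ finite.
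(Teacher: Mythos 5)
Your proposal is correct and follows essentially the same route as the paper: the identical local/far decomposition with $m_n$-neighborhoods (your Taylor-integral formulation of $g(\tilde S_n)-g(\tilde S_{n,i})$, the centered quadratic term handled by Cauchy--Schwarz plus the $H_n(s,m_n)$ count and Corollary \ref{corr:cov_ineq_prod} with exponent $1-4/p$, the far-pair term giving $\theta_{n,m_n}^{1-2/p}$, the far-field covariance $\Cov(\tilde Y_{n,i},g(\tilde S_{n,i})\mid\C_n)$ giving $\theta_{n,m_n}^{1-1/p}$, and the triple-product term organized by separation distance into $c_n(s,m_n;1)$ with exponent $1-3/p$) matches the paper's treatment of $A_{n,1}$, $A_{n,2}$, and $A_{n,3}$ term by term.
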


\begin{proof}[\textnbf{Proof}]
We set an increasing sequence of positive integers $m_n$, and define for each $i \in N_n$,
\[
	\tilde S_{n,i} = \sum_{j \in N_n \setminus N_n(i;m_n)} \tilde Y_{n,j}.
\]
We write
\[
	g'(\tilde S_n) - \tilde S_n g(\tilde S_n) = A_{n,1} + A_{n,2} + A_{n,3},
\]
where
\begin{align*}
	A_{n,1} &= g'(\tilde S_n)\left(1- \sum_{i \in N_n} \tilde Y_{n,i} ( \tilde S_n - \tilde S_{n,i} ) \right) \\
	A_{n,2} &= \sum_{i \in N_n} \tilde Y_{n,i} \left(g'(\tilde S_n)(\tilde S_n - \tilde S_{n,i}) - (g(\tilde S_n) - g(\tilde S_{n,i})) \right), \qtext{and}\\
	A_{n,3} &= -\sum_{i \in N_n} \tilde Y_{n,i} g(\tilde S_{n,i}).
\end{align*}
Let us turn to $A_{n,2}$. Applying Taylor expansion,
\[
	|\E[A_{n,2}\mid \C_n]| \le \frac{\| g'' \|_\infty}{2} \sum_{i\in N_n} \E[|\tilde Y_{n,i}| (\tilde S_{n,i}-\tilde S_n)^2\mid \C_n].
\]
The last bound is bounded by
\begin{align*}
	& \frac{\| g'' \|_\infty}{2} \sum_{i\in N_n}\sum_{j\in N_n(i;m_n)}\sum_{k\in N_n(i;m_n)} \E[|\tilde Y_{n,i}| \tilde Y_{n,j} \tilde Y_{n,k} \mid \C_n]\\
	&\qquad= \frac{\| g'' \|_\infty}{2} \sum_{s\ge 0} \sum_{i\in N_n}\sum_{j\in N_n(i;m_n)}\sum_{k\in N_n(i;m_n):d_n(k,\{i,j\})=s} \E[|\tilde Y_{n,i}| \tilde Y_{n,j} \tilde Y_{n,k} \mid \C_n]
\end{align*}
Now, since $\E[\tilde Y_{n,k} \mid \C_n] = 0$, by Corollary \ref{corr:cov_ineq_prod} the last sum is bounded by
\begin{align*}
	&\frac{C\overline{\theta}_n\mu_{n,p}^3}{\sigma_n^3} \sum_{s\ge 0} \sum_{i\in N_n}\sum_{k \in N_n^\partial(i;s)}|N_n(i;m_n) \setminus N_n(k;s-1)| \theta_{n,s}^{1-(3/p)}\\
	&\qquad\le \frac{C\overline{\theta}_n\mu_{n,p}^3}{\sigma_n^3} \sum_{s\ge 0} \sum_{i\in N_n}\max_{k \in N_n^\partial(i;s)}|N_n(i;m_n) \setminus N_n(k;s-1)||N_n^\partial(i;s)|^{1-(3/p)}.
\end{align*}
Using H\"{o}lder's inequality and the definitions of $\Delta_n(s,m;k)$, $\delta_n^\partial(s)$, and $c_n(s,m;k)$ in \eqref{eq:delta_n}, \eqref{eq:Delta_n}, and \eqref{eq:c_n}, we can bound the last term by
\[
	\frac{Cn\overline{\theta}_n\mu_{n,p}^3}{\sigma_n^3} \sum_{s\ge 0}c_n(s,m_n;1) \theta_{n,s}^{1-(3/p)}.
\]
Thus, it follows that
\[
	|\E[A_{n,2}\mid \C_n]|\le \frac{ Cn \overline{\theta}_n \| g'' \|_\infty \mu_{n,p}^3}{\sigma_n^3} \sum_{s\ge 0} c_n(s,m_n;1) \theta_{n,s}^{1-(3/p)}.
\]
Let us now turn to $A_{n,1}$. Write
\[
	|\E[A_{n,1}\mid \C_n]| = \E\left[g'(\tilde S_n)\left(1-\sum_{i\in N_n}\sum_{j\in N_n(i;m_n)} \tilde Y_{n,i} \tilde Y_{n,j} \right)\mid \C_n\right].
\]
Since, by the definition of $\sigma_n$,
\[
	1 = \E\left[\sum_{i\in N_n}\sum_{j\in N_n} \tilde Y_{n,i} \tilde Y_{n,j}\mid \C_n\right],
\]
we rewrite
\begin{align*}
	\E[A_{n,1}\mid \C_n] &= - \E\left[g'(\tilde S_n)\left(\sum_{i\in N_n}\sum_{j\in N_n(i;m_n)} (\tilde Y_{n,i} \tilde Y_{n,j} - \E[\tilde Y_{n,i} \tilde Y_{n,j}\mid \C_n]) \right)\mid \C_n\right]\\
	&\quad + \E\left[g'(\tilde S_n)\mid \C_n\right]\left(\sum_{i\in N_n}\sum_{j\in N_n \setminus N_n(i;m_n)} \E[\tilde Y_{n,i} \tilde Y_{n,j}\mid \C_n] \right) = R_{n,1} + R_{n,2}, \textnormal{ say.}
\end{align*}
Using the Cauchy-Schwarz inequality and letting $Z_{n,ij} = \tilde Y_{n,i} \tilde Y_{n,j} - \E[\tilde Y_{n,i} \tilde Y_{n,j}\mid \C_n]$, we bound $|R_{n,1}|$ by
\[
	\sqrt{\E[(g'(\tilde S_n))^2\mid \C_n]} \sqrt{\E\left[\left(\sum_{i\in N_n}\sum_{j\in N_n(i;m_n)} Z_{n,ij} \right)^2\mid \C_n\right]}.
\]

Let us write the last conditional expectation in the preceding display as
\begin{align*}
	& \sum_{i\in N_n}\sum_{j\in N_n(i;m_n)}\sum_{k\in N_n}\sum_{l\in N_n(k;m_n)} \E[Z_{n,ij} Z_{n,kl} \mid \C_n] \\
	&\qquad= \sum_{s\ge 0} \sum_{(i,j,k,l) \in H_n(s,m_n)} \E[Z_{n,ij} Z_{n,kl} \mid \C_n]\\
	&\qquad\le \frac{C\overline{\theta}_n\mu_{n,p}^4}{\sigma_n^4} \sum_{s\ge 0}|H_n(s,m_n)| \theta_{n,m}^{1-(4/p)},
\end{align*}
by Corollary \ref{corr:cov_ineq_prod}, recalling the definition of $H_n(s,m_n)$ in \eqref{eq:H}. In combination with \eqref{eq:H_bound}, this gives the following bound for $|R_{n,1}|$:
\[
	\frac{C\sqrt{n\overline{\theta}_n} \|g'\|_\infty \mu_{n,p}^2}{\sigma_n^2}\sqrt{\sum_{s\ge 0}c_n(s,m_n;2) \theta_{n,s}^{1-(4/p)}}.
\]
Let us turn to $R_{n,2}$. We bound $|R_{n,2}|$ by
\[
	\frac{Cn^2 \overline{\theta}_n \|g'\|_\infty \mu_{n,p}^2 }{\sigma_n^2}\theta_{n,m_n}^{1-(2/p)},
\]
using Theorem \ref{thm:cov_ineq1}. Finally, let us consider $A_{n,3}$. Note that
\begin{align}
\label{eq:bound}
	\E[A_{n,3}|\C_n] \le \sum_{i \in N_n} \left|\Cov\left(\tilde Y_{n,i}, g(\tilde S_{n,i})\mid \C_n\right)\right|.
\end{align}
Using Theorem \ref{thm:cov_ineq1} and Assumption \ref{assu:psi_dep}, we bound the term on the right hand side of \eqref{eq:bound} by
\begin{align*}
	\frac{Cn^2\overline{\theta}_n\left( \|g\|_\infty +\sigma_n^{-1}\|g'\|_\infty\right)\mu_{n,p}}{\sigma_n} \theta_{n,m_n}^{1-(1/p) - (1/q)},
\end{align*}
where $q$ is such that $p^{-1} + q^{-1} < 1$. By taking $q \rightarrow \infty$, we conclude that
\[
	|\E[A_{n,3}\mid \C_n]| \le \frac{Cn^2\overline{\theta}_n\left( \|g\|_\infty + \sigma_n^{-1}\|g'\|_\infty\right)\mu_{n,p}}{\sigma_n}\theta_{n,m_n}^{1-(1/p)}.
\]
Collecting the results, we obtain the desired result of the lemma.
\end{proof}

\begin{lemma}
	\label{lemma:Berry-Esseen}
	Suppose that Assumption \ref{assu:psi_dep} holds, and that $\E[Y_{n,i}\mid \C_n]=0$ a.s. Then there exists $C>0$ such that for all $n \ge 1$,
	\begin{align*}
		 \sup_{t \in \R}\,\left|\PM\left\{\tilde S_n \le t\mid \C_n\right\} - \Phi(t)\right|&\le C \sqrt{n\overline{\theta}_n} \left(\frac{\mu_{n,p}}{\sigma_n}\right)^{3/2} \sqrt{\sum_{s\ge 0} c_n(s,m_n;1) \theta_{n,s}^{1-(3/p)}} \\
		&\quad+ C \sqrt{n\overline{\theta}_n} \left(\frac{\mu_{n,p}}{\sigma_n}\right)^2\sqrt{\sum_{s\ge 0} c_n(s,m_n;2) \theta_{n,s}^{1-(4/p)}}\\
		&\quad + C n^2 \overline{\theta}_n\left(\frac{ \mu_{n,p}^2}{\sigma_n^2} \theta_{n,m_n}^{1-(2/p)} + \frac{\mu_{n,p}}{\sigma_n}\left(1+\frac{1}{\sigma_n}\right) \theta_{n,m_n}^{1-(1/p)}\right) \qtext{a.s.},
	\end{align*}
	where $\Phi$ denotes the distribution function of $\mathcal{N}(0,1)$.
\end{lemma}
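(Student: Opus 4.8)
The plan is to deduce this Berry--Esseen bound from Lemma \ref{lemma:CLT} via Stein's method, smoothing the indicator so as to stay inside the class of test functions covered by that lemma. Fix $t \in \R$ and $\epsilon>0$. First I would choose twice continuously differentiable functions $h_{t,\epsilon}^{\pm}$ with $\|h_{t,\epsilon}^{\pm}\|_\infty \le 1$, $\|(h_{t,\epsilon}^{\pm})'\|_\infty \le C/\epsilon$, $\|(h_{t,\epsilon}^{\pm})''\|_\infty \le C/\epsilon^2$, such that $\ind\{x \le t\} \le h_{t,\epsilon}^{+}(x) \le \ind\{x \le t+\epsilon\}$ and $\ind\{x \le t-\epsilon\} \le h_{t,\epsilon}^{-}(x) \le \ind\{x \le t\}$ (a mollified ramp does the job). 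Let $g_{t,\epsilon}^{\pm}$ solve the Stein equation $g'(x) - x g(x) = h_{t,\epsilon}^{\pm}(x) - \E[h_{t,\epsilon}^{\pm}(Z)]$ with $Z \sim \mathcal{N}(0,1)$; by the standard bounds on the solution of Stein's equation \citep{Stein:86}, $g_{t,\epsilon}^{\pm}$ is twice continuously differentiable with $\|g_{t,\epsilon}^{\pm}\|_\infty \le C$, $\|(g_{t,\epsilon}^{\pm})'\|_\infty \le C$, and crucially $\|(g_{t,\epsilon}^{\pm})''\|_\infty \le 2\|(h_{t,\epsilon}^{\pm})'\|_\infty \le C/\epsilon$, so Lemma \ref{lemma:CLT} applies to $g = g_{t,\epsilon}^{\pm}$.

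Next I would use the sandwiching to pass from the smoothed functionals to the distribution function. Since $\ind\{x \le t\} \le h_{t,\epsilon}^{+}(x)$, the Stein identity and $\E[h_{t,\epsilon}^{+}(Z)] \le \Phi(t+\epsilon) \le \Phi(t) + \epsilon/\sqrt{2\pi}$ give
\begin{align*}
	\PM\{\tilde S_n \le t \mid \C_n\} - \Phi(t) &\le \E[h_{t,\epsilon}^{+}(\tilde S_n)\mid \C_n] - \Phi(t) \\
	&= \E[(g_{t,\epsilon}^{+})'(\tilde S_n) - \tilde S_n g_{t,\epsilon}^{+}(\tilde S_n)\mid \C_n] + \E[h_{t,\epsilon}^{+}(Z)] - \Phi(t) \\
	&\le \E[(g_{t,\epsilon}^{+})'(\tilde S_n) - \tilde S_n g_{t,\epsilon}^{+}(\tilde S_n)\mid \C_n] + \epsilon/\sqrt{2\pi},
\end{align*}
and the symmetric argument with $h_{t,\epsilon}^{-}$ yields a matching lower bound. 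Hence $\bigl|\PM\{\tilde S_n \le t\mid \C_n\} - \Phi(t)\bigr|$ is at most $\epsilon/\sqrt{2\pi}$ plus the bound of Lemma \ref{lemma:CLT} evaluated at a $C^2$ function with $\|g\|_\infty, \|g'\|_\infty \le C$ and $\|g''\|_\infty \le C/\epsilon$. Because that bound depends on $g$ only through these three norms, it is independent of $t$; substituting the norm bounds into $\bar\Delta_n(g)$ and the $\|g''\|_\infty$ term, and using $\mu_{n,p}^2/\sigma_n^2 = (\mu_{n,p}/\sigma_n)^2$, one checks that every contribution except the one carrying $\|g''\|_\infty$ is already exactly of the form of the second and third groups of terms in the statement, while the $\|g''\|_\infty$ term becomes $\epsilon^{-1} A_n$ with $A_n \eqdef C\,(n\overline{\theta}_n \mu_{n,p}^3/\sigma_n^3)\sum_{s\ge 0} c_n(s,m_n;1)\theta_{n,s}^{1-(3/p)}$, a $\C_n$-measurable quantity.

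Finally I would take the supremum over $t$ and optimize over $\epsilon>0$: the two $\epsilon$-dependent pieces are $\epsilon/\sqrt{2\pi}$ and $A_n/\epsilon$, and choosing $\epsilon$ of order $\sqrt{A_n}$ gives $C\sqrt{A_n} = C\sqrt{n\overline{\theta}_n}\,(\mu_{n,p}/\sigma_n)^{3/2}\bigl(\sum_{s\ge 0} c_n(s,m_n;1)\theta_{n,s}^{1-(3/p)}\bigr)^{1/2}$, which is precisely the first term in the statement (if $A_n = 0$ let $\epsilon \downarrow 0$). The main obstacle is this smoothing step: Lemma \ref{lemma:CLT} requires a genuine $C^2$ test function, the Stein solution of a hard indicator is only Lipschitz, and the smoothing trades an $O(\epsilon)$ bias against an $\epsilon^{-1}$ blow-up of $\|g''\|_\infty$; balancing these must reproduce the exact $\sqrt{\,\cdot\,}$ rate, which works only because the Stein bound $\|g''\|_\infty \le 2\|h'\|_\infty$ keeps $\|g''\|_\infty$ at order $\epsilon^{-1}$ rather than $\epsilon^{-2}$. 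One should also note that all inequalities are conditional and hold almost surely: every quantity entering the bounds is $\C_n$-measurable, Lemma \ref{lemma:CLT} is itself an a.s.\ statement, and the optimization over the (random) $\epsilon$ is legitimate since after substituting the norm bounds the right-hand side is an explicit function of $\epsilon$ and of $\C_n$-measurable quantities only.
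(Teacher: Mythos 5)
Your proposal is correct and follows essentially the same route as the paper's proof: smooth the indicator on each side, pass through the Stein equation with the standard bounds $\|g\|_\infty,\|g'\|_\infty\le C$ and $\|g''\|_\infty\le 2\|h'\|_\infty$, apply Lemma \ref{lemma:CLT}, and balance the $O(\varepsilon)$ smoothing bias against the $\varepsilon^{-1}$ term to produce the square-root leading term. The only (harmless) difference is that the paper uses piecewise-linear ramps $h_\pm$ following Penrose, whereas your $C^2$ mollified versions are, if anything, more faithful to the literal twice-continuous-differentiability hypothesis of Lemma \ref{lemma:CLT}.
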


\begin{proof}[\textnbf{Proof}]
The proof is an adaptation of the proof of Theorem 2.4 of \cite{Penrose:03:RandomGeometricGraphs} to our set-up. Let $\bar \Delta_n(g)$ be as defined in Lemma \ref{lemma:CLT}. Let us define $h_+(x)=1$ for $x \le t$, $h_+(x)=0$ for $x \ge t + \varepsilon$, and $h_+$ is continuous and linear on $[x,x+\varepsilon]$. Similarly, we also take $h_-(x)=1$ for $x \le t - \varepsilon$, $h_-(x)=0$ for $x \ge t$, and $h_-$ is continuous and linear on $[x-\varepsilon,x]$. Define for any real function $g$,
\[
	\Delta_n(g) = |\E[g(\tilde S_n)\mid \C_n]-\E[g(Z)]|.
\]
Let us find a bound for $\Delta_n(h_+)$ and $\Delta_n(h_-)$. First, note that by Stein's Lemma \citep[e.g.,][p.~15]{Chen/Goldstein/Shao:11:NormalApprox}, for any real valued function $h$ with $\mathbf{E}|h(Z)|<\infty$,
\begin{equation}
\label{eq:stein}
	|\E[g'(\tilde S_n)-\tilde S_ng(\tilde S_n)\mid \C_n]| = \Delta_n(h),
\end{equation}
where
\[
	g(x) = e^{x^2/2}\int_{-\infty}^x (h(w) - \E[h(Z)]) e^{-w^2/2} dw.
\]
Since for $h = h _+$ or $h = h_-$, \citep[see Lemma 2.4 of][]{Chen/Goldstein/Shao:11:NormalApprox}
\begin{equation}
\label{eq:inequalities}
	\begin{aligned}
		\|g\|_\infty &\le \sqrt{\pi/2}\,\|h - \E[h(Z)]\|_\infty \le \sqrt{\pi/2}, \\
		\|g'\|_\infty &\le 2\,\|h - \E[h(Z)]\|_\infty \le 2, \qtext{and} \\
		\|g''\|_\infty &\le 2\,\|h'\|_\infty \le 2/\varepsilon,
	\end{aligned}
\end{equation}
we apply Lemma \ref{lemma:CLT} to \eqref{eq:stein} to deduce that for $h = h_+$ or $h = h_-$,
\[
	\Delta_n(h) \le \bar \Delta_n(g) + \frac{Cn\overline{\theta}_n\| g''\|_\infty \mu_{n,p}^3}{\varepsilon\sigma_n^3} \sum_{s\ge 0}c_n(s,m_n;1) \theta_{n,s}^{1-(3/p)}.
\]
Let us now bound
\begin{align*}
	\PM\{\tilde S_n \le t\mid \C_n\} \le \E[h_+(\tilde S_n)\mid \C_n] &\le \E[h_+(Z)] + \Delta_n(h_+)\\
	&\le \PM\{Z \le t + \varepsilon\} +\Delta_n(h_+)\\
	&\le \PM\{Z \le t\} + \phi(0)\varepsilon + \Delta_n(h_+),
\end{align*}
where $\phi$ is the density of $\mathcal{N}(0,1)$. Similarly, we also bound
\[
	\PM\{\tilde S_n \le t\mid \C_n\} \ge \PM\{Z \le t\} - \phi(0) \varepsilon - \Delta_n(h_-).
\]
Hence, we have
\begin{align*}
	&|\PM\{\tilde S_n \le t\mid \C_n\} - \PM\{Z \le t\}| \\
	&\qquad\le 2 \phi(0) \varepsilon + \frac{Cn\overline{\theta}_n\| g''\|_\infty \mu_{n,p}^3}{\varepsilon\sigma_n^3} \sum_{s\ge 0} c_n(s,m_n;1) \theta_{n,s}^{1-(3/p)} + \bar \Delta_n(g).
\end{align*}
Choose
\[
	\varepsilon = \left(\frac{C n \overline{\theta}_n \| g''\|_\infty \mu_{n,p}^3}{2 \phi(0)\sigma_n^3} \sum_{s\ge 0} c_n(s,m_n;1) \theta_{n,s}^{1-(3/p)} \right)^{1/2}.
\]
Applying the bounds in \eqref{eq:inequalities} to $\bar \Delta_n(g)$, we obtain the desired result.
\end{proof}

\begin{proof}[\textnbf{Proof of Theorem \ref{thm:CLT}}]
The desired result follows from Lemma \ref{lemma:Berry-Esseen} in combination with the conditions given in the theorem. Details are omitted.
\end{proof}
\newpage
\subsection{Notation List}

{%
	\renewcommand{\arraystretch}{1.05}
	\footnotesize
	\setlength\LTleft{-6.9mm}
	\begin{longtable}{lll}
		\hline\hline
		Notation & Description & Place of Definition \\
		\hline
		& & \\
		\endhead
		& & \\
		\hline\hline
		\endfoot
		$\alpha_{n,s}$ &: strong mixing coefficients & \eqref{alpha_ns} \\

		$A_n$ &: the adjacency matrix of network $G_n$ &Above \eqref{eq:distance} \\

		$A_{n,ij}$ &: the $(i,j)$-th entry of $A_n$ &Above \eqref{eq:distance} \\

		$b_n$ &: the bandwidth in HAC estimation & Above \eqref{eq:HAC1} \\

		$c_n(s,m;k)$ &: a network statistic used in Condition ND & \eqref{eq:c_n} \\

		$\C_n$ &: the $\sigma$-field as the common shock w.r.t. which $A_n$ is measurable & Above Definition \ref{def:psi_dep} \\

		$d_n(i,i')$ &: the length of a shortest path between $i$ and $i'$ in $G_n$ & Section \ref{subsec:Network_Topology} \\

		$d_n(A,B)$ &: $\min_{i \in A} \min_{i' \in B} d_n(i,i')$ & \eqref{eq:d_n} \\

		$\mathss{d}_a$ &: distance on $\R^{v,a}$ for $v, a \in \N$ & \eqref{eq:distance} \\

		$\delta_n(s)$ &: the average of $|N_n(i;s)|$ over $i\in N_n$ & \eqref{eq:delta2_n} \\

		$\delta_n^\partial(s)$ &: $\delta_n^\partial (s;1)$ & Below \eqref{eq:delta_n} \\

		$\delta_n^\partial(s;k)$ &: the average of $|N_n^\partial(i;s)|^k$ over $i \in N_n$ & \eqref{eq:delta_n} \\

		$\Delta_n(s,m;k)$ &: a network statistic used in the definition of $c_n(s,m;k)$ & \eqref{eq:Delta_n} \\

		$H_n(s,m)$ &: a set of $(i,j,k,l) \in N_n^4$ used for CLT and HAC proofs & \eqref{eq:H} \\

		$\Lip(f)$ &: Lipschitz constant of $f$ & Footnote \ref{footnote:Lipschitz_constant} \\

		$\L_v$ &: $\{\L_{v,a}: a \in \N\}$ & \eqref{eq:Lv} \\

		$\L_{v,a}$ &: the set of real Lipschitz bounded functions on $\R^{v \times a}$ for $v,a\in \N$ & \eqref{eq:Lva} \\

		$\N$ &: the set of natural numbers & Above \eqref{eq:distance} \\

		$N_n$ &: the set of sample units as the set of nodes in network $G_n$ & Section \ref{subsec:Network_Topology} \\

		$N_n(i;s)$ &: the set of the nodes that are within distance $s$ from node $i$ & \eqref{eq:nodes_sets} \\

		$N_n^\partial(i;s)$ &: the set of the nodes that are at distance $s$ from node $i$ & \eqref{eq:nodes_sets} \\

		$\omega$ &: the kernel function used for HAC estimation & Above \eqref{eq:HAC1} \\

		$\omega_n$ &: $\omega(s/b_n)$ & Below \eqref{eq:HAC1} \\

		$\Omega_n(s)$ &: a normalized sum of $\E[Y_{n,i}Y_{n,j}^\top \mid \C_n]$ over $i \in N_n$ and $j \in N_n^\partial(i;s)$ & \eqref{eq:Omega_ns} \\

		$\hat \Omega_n(s)$ &: a normalized sum of $\left(Y_{n,i}-\bar Y_n\right)\left(Y_{n,j}-\bar Y_n\right)^{\top}$ over $i \in N_n$ and $j \in N_n^\partial(i;s)$ & \eqref{eq:Omega_hat} \\

		$\tilde \Omega_n(s)$ &: a normalized sum of $Y_{n,i}Y_{n,j}^\top$ over $i \in N_n$ and $j \in N_n^\partial(i;s)$ & \eqref{eq:Omega_tilde} \\

		$\P(a,b;s)$ &: the set of pairs $A,B \subset N_n$ such that $|A|=a,|B|=b$ and $d_n(A,B) \ge s$ & \eqref{eq:P(a,b;s)} \\

		$\Phi$ &: the distribution function of $\mathcal{N}(0,1)$ & Theorem \ref{thm:CLT} \\

		$S_n$ &: $\sum_{i \in N_n} Y_{n,i}$ & Below \eqref{eq:sigma_n2} \\

		$\sigma_n^2$ &: $\Var(S_n \mid \C_n)$ & \eqref{eq:sigma_n2} \\

		$\pi_n$ &: the maximum expected degree in network formation model \eqref{eq:graph} & \eqref{eq:pi_n} \\

		$\psi_{a,b}(f,g)$ &: $\psi$ functional in $\psi$-dependence for real functions $f,g$ & \eqref{eq:psi_dep} \\

		$\theta_{n}$ &: $(\theta_{n,s})_{s \in N_n}$ & Definition \ref{def:psi_dep} \\

		$\theta_{n,s}$ &: the dependence coefficient in $\psi$-dependence & \eqref{eq:psi_dep} \\

		$V_n$ &: $\Var(S_n/\sqrt{n} \mid \C_n)$ & \eqref{eq:trueV} \\

		$\tilde V_n$ &: $\sum_{s\ge 0}\omega_n(s)\tilde\Omega_n(s)$, an estimator of $V_n$ when $\E[Y_{n,i} \mid \C_n] = 0$ & \eqref{eq:HAC1} \\

		$\hat V_n$ &: $\sum_{s\ge 0}\omega_n(s)\hat \Omega_n(s)$, an estimator of $V_n$ when $\E[Y_{n,i} \mid \C_n]$ is unknown & \eqref{eq:HAC2} \\

		$Y_{n,A}$ &: $\{Y_{n,i}\}_{i \in A}$, for $A \subset N_n$ & \eqref{eq:Y_A} \\

		$\|\csdot\|$ &: the Euclidean norm, i.e., $\|a\| = \sqrt{a^{\top}a}$. & Below \eqref{eq:distance} \\

		$\|\csdot\|_\infty$ &: the sup norm, i.e., $\|f\|_\infty = \sup_x |f(x)|$ & Below \eqref{eq:Lva} \\

		$\|\csdot\|_{\C_n,p}$ &: $\norm{Y_{n,i}}_{\C_n,p}=(\E[ \vert Y_{n,i} \vert^p \mid \C_n])^{1/p}$ & Above Assumption \ref{assu:Uniform_L1_Integrability} \\

		$\|\csdot\|_F$ &: Frobenius norm, $\|A\|_F = \sqrt{\text{tr}(A^{\top} A)}$ & Footnote \ref{footnote:Frobenius} \\
	\end{longtable}%
}

\end{bibunit}

\clearpage
\setcounter{page}{1}

\begin{bibunit}[elsart-harv]
\vspace*{5ex minus 1ex}
\begin{center}
	\Large \textsc{Supplemental Note to ``Limit Theorems for Network Dependent Random Variables"}
	\bigskip
\end{center}
\thispagestyle{plain}
\date{\today}

\begin{center}
	\textsc{Denis Kojevnikov}\textsuperscript{\textasteriskcentered}\let\thefootnote\relax\footnotetext{\textsuperscript{\textasteriskcentered}Corresponding author. Department of Econometrics and Operations Research, Tilburg University, The Netherlands. Email: D.Kojevnikov@tilburguniversity.edu.}, \textsc{Vadim Marmer\textsuperscript{\S}, and Kyungchul Song\textsuperscript{\S}\let\thefootnote\relax\footnotetext{\textsuperscript{\S}Vancouver School of Economics, University of British Columbia, Canada.}}
\bigskip
\end{center}

This supplemental note consists of five appendices. Appendix \ref{sec:proof_3.1} provides the proof of Lemma \ref{lemma:network_form}. Appendix \ref{sec:proof_4.1} gives the proofs of Propositions \ref{prop:HAC}--\ref{cor:HAC_partially_observed}. Appendix \ref{sec:proof_aux} presents the proofs of Theorems \ref{thm:cov_ineq1} and \ref{thm:cov_ineq2} in Appendix \ref{subsec:covariance_inequalities} of the main paper. Finally, Appendix \ref{sec:more_simulations} reports additional simulation results.

\appendix
\setcounter{section}{1}

\section{Proof of Lemma \ref{lemma:network_form}}\label{sec:proof_3.1}
The following lemma is a variant of Claim 1 in the proof of \cite{Chung/Lu:01:AAM}. We assume the network formation as in \eqref{eq:graph}.

\begin{lemma}
\label{lemma:prob_bound}
For each $n \ge 2$, $s=0,\ldots,n$, $i = 1,\ldots,n$, and $\lambda>0$,
\begin{equation}
\label{eq:prob_bd}
	\PM\left\{|N_n^\partial(i;s)| \le a_s (\log n) (\pi_n \vee 1)^s \mid \varphi_n\right\} \ge 1 - s \exp \left(- \frac{1}{2}\frac{\lambda^2}{1 + \lambda/(3 \sqrt{\log n})} \right),
\end{equation}
where $a_s$ satisfies the recurrence formula as follows: for $s=1,\ldots,n$,
\begin{equation}
\label{eq:recur}
	a_s = a_{s-1} + \frac{\lambda \sqrt{a_{s-1}}}{\sqrt{(\pi_n \vee 1)^s \log n}},
\end{equation}
starting with $a_0 = 1$.
\end{lemma}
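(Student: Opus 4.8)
The whole argument is carried out conditionally on $\varphi_n$, so I would fix a realization of $\varphi_n$ and set $p_{uv}=\PM\{\varphi_{n,uv}\ge\varepsilon_{uv}\mid\varphi_n\}$. Since $\{\varepsilon_{uv}:u<v\}$ are i.i.d.\ and independent of $\varphi_n$, conditionally on $\varphi_n$ the edge indicators $\{A_{n,uv}:u<v\}$ are mutually independent with $A_{n,uv}\sim\mathrm{Bernoulli}(p_{uv})$, and $\sum_{v\ne u}p_{uv}\le\pi_n$ for every $u$ by \eqref{eq:pi_n}. Following the template of Claim~1 in \cite{Chung/Lu:01:AAM}, I would prove \eqref{eq:prob_bd} by induction on $s$, running a breadth-first exploration of $G_n$ from $i$. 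Write $D_r=N_n^\partial(i;r)$ and let $E_r$ be the event $\{|D_t|\le a_t(\log n)(\pi_n\vee1)^t\text{ for all }0\le t\le r\}$. The base case $s=0$ is immediate since $|D_0|=|\{i\}|=1$.

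For the inductive step, I would condition additionally on $N_n(i;s-1)$, i.e.\ on the edge indicators revealed in the first $s-1$ rounds of the exploration (those incident to $N_n(i;s-2)$ and those among $N_n(i;s-1)$). The indicators $\{A_{n,uv}:u\in D_{s-1},\ v\in N_n\setminus N_n(i;s-1)\}$ have not yet been inspected, so they remain conditionally i.i.d.\ Bernoulli and independent of the exploration history. Since $v\in D_s$ iff $v\notin N_n(i;s-1)$ and $A_{n,uv}=1$ for some $u\in D_{s-1}$, the count $|D_s|=\sum_{v\notin N_n(i;s-1)}\mathbf 1\{v\in D_s\}$ is a sum of independent $\{0,1\}$ variables with conditional mean
\[
	\mu:=\E\big[|D_s|\mid N_n(i;s-1),\varphi_n\big]=\sum_{v\notin N_n(i;s-1)}\Big(1-\prod_{u\in D_{s-1}}(1-p_{uv})\Big)\le|D_{s-1}|(\pi_n\vee1),
\]
using $1-\prod(1-p_{uv})\le\sum p_{uv}$ together with $\sum_{v\ne u}p_{uv}\le\pi_n$. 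On $E_{s-1}$ we get $\mu\le\mu':=a_{s-1}(\log n)(\pi_n\vee1)^s$, and, crucially, $\mu'\ge\log n$ because $a_{s-1}\ge a_0=1$ and $(\pi_n\vee1)^s\ge1$.

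Next I would apply Bernstein's inequality to $|D_s|$ (a sum of independent Bernoullis, hence variance $\le\mu$) with deviation $t=\lambda\sqrt{\mu'}$: since $\mu'+\lambda\sqrt{\mu'}-\mu\ge\lambda\sqrt{\mu'}$ and $\mu\le\mu'$,
\[
	\PM\{|D_s|>\mu'+\lambda\sqrt{\mu'}\mid N_n(i;s-1),\varphi_n\}\le\exp\!\Big(-\tfrac12\,\tfrac{\lambda^2}{1+\lambda/(3\sqrt{\mu'})}\Big)\le\exp\!\Big(-\tfrac12\,\tfrac{\lambda^2}{1+\lambda/(3\sqrt{\log n})}\Big).
\]
By the recurrence \eqref{eq:recur}, $\mu'+\lambda\sqrt{\mu'}=(\log n)(\pi_n\vee1)^s\big(a_{s-1}+\lambda\sqrt{a_{s-1}}/\sqrt{(\pi_n\vee1)^s\log n}\big)=a_s(\log n)(\pi_n\vee1)^s$, so on $E_{s-1}$ this is precisely a bound on $\PM\{|D_s|>a_s(\log n)(\pi_n\vee1)^s\mid N_n(i;s-1),\varphi_n\}$. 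Integrating over realizations of $N_n(i;s-1)$ consistent with $E_{s-1}$ and then adding a union bound over the $s$ rounds gives $\PM(E_s^c\mid\varphi_n)\le\PM(E_{s-1}^c\mid\varphi_n)+\exp(-\tfrac12\lambda^2/(1+\lambda/(3\sqrt{\log n})))$, and iterating yields $\PM(E_s^c\mid\varphi_n)\le s\exp(-\tfrac12\lambda^2/(1+\lambda/(3\sqrt{\log n})))$. Since $E_s\subseteq\{|N_n^\partial(i;s)|\le a_s(\log n)(\pi_n\vee1)^s\}$, \eqref{eq:prob_bd} follows. \textbf{Main obstacle.} The only genuinely delicate point is the exploration bookkeeping: one must argue carefully that $N_n(i;s-1)$ is determined by the edge indicators revealed in the first $s-1$ BFS rounds, so that the ``boundary'' indicators from $D_{s-1}$ to the unexplored vertices are genuinely fresh and the conditional independence needed for Bernstein is legitimate; the rest is the algebra of matching the Bernstein tail to the recurrence defining $a_s$.
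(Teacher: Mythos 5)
Your proposal is correct and follows essentially the same route as the paper: induction on $s$, conditioning on the already-determined neighborhood sets so that the boundary edge indicators remain conditionally independent Bernoullis given $\varphi_n$, a Bernstein bound with deviation $\lambda\sqrt{a_{s-1}(\pi_n\vee 1)^s\log n}$, the observation that this quantity is at least $\sqrt{\log n}$ since $a_{s-1}\ge 1$, and matching the recurrence \eqref{eq:recur}. The only cosmetic difference is that the paper applies Bernstein to the double sum of boundary edge indicators $\sum_{k\in N_n^\partial(i;s)}\sum_{j\notin N_n(i;s)}A_{n,kj}$, which overcounts the next shell, whereas you apply it to the vertex-level indicators $\mathbf 1\{v\in N_n^\partial(i;s)\}$; both yield the same conditional-mean bound $(\pi_n\vee 1)\,|N_n^\partial(i;s-1)|$ and the same tail estimate.
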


\begin{proof}[\textnbf{Proof}]
We use mathematical induction. Since $|N_n^\partial(i;0)| = 1$, the lemma holds for $s = 0$. Suppose that the statement of the lemma holds for $s \ge 0$. Let $\pi_{n,kj} = \E[A_{n,kj}|\varphi_n]$. Note that $N_n^\partial(i;s)$'s are disjoint across $s$'s. Furthermore, once sets $N_n^{\partial}(i;s)$ and $N_n(i;s)$ are determined by $A_{n,ij}, i,j \in N_n$, altering the values of $A_{n,kj}$ for any $k \in N_n^\partial(i;s)$ and $j \in N_n \setminus N_n(i;s)$ does not change the sets $N_n^{\partial}(i;s)$ and $N_n(i;s)$. Since $A_{n,ij}$'s are conditionally independent given $\varphi_n$, this means that $\{A_{n,kj}: k \in N_n^\partial(i;s), j \in N_n \setminus N_n(i;s)\}$ is a set of Bernoulli random variables that are conditionally independent given $(\varphi_n, N_n^\partial(i;s), N_n(i;s))$. Note also that for all $k \in N_n^\partial(i;s), j \in N_n \setminus N_n(i;s)$,
\[
	\E[A_{n,kj}\mid \varphi_n, N_n^\partial(i;s), N_n(i;s)] = \E[A_{n,kj}\mid\varphi_n] = \pi_{n,kj}.
\]
By Bernstein's inequality \citep[e.g., Lemma 2.2.9 of][p.~102]{vanderVaart/Wellner:96:WeakConvg}, we have
\begin{align}
\label{eq:bernstein}
	\begin{aligned}
		&\PM\left\{\sum_{j \in N_n \setminus N_n(i;s)} \sum_{k \in N_n^\partial(i;s)}\left(A_{n,kj} - \pi_{n,kj} \right) > t\mid \varphi_n, N_n^\partial(i;s), N_n(i;s)\right\} \\
		&\qquad\le\exp\left(-\frac{1}{2} \frac{t^2}{\left(\sum_{j \in N_n}\sum_{k \in N_n^\partial(i;s)} \pi_{n,kj}\right) + t/3}\right),
	\end{aligned}
\end{align}
because $A_{n,kj}$'s are Bernoulli random variables that are conditionally independent given $(\varphi_n, N_n^\partial(i;s), N_n(i;s))$, and
\begin{align*}
	\sum_{j \in N_n}\sum_{k \in N_n^\partial(i;s)} \pi_{n,kj} &\ge \sum_{j \in N_n}\sum_{k \in N_n^\partial(i;s)} \pi_{n,kj}(1- \pi_{n,kj})\\
	&\ge \sum_{j \in N_n \setminus N_n(i;s)}\sum_{k \in N_n^\partial(i;s)} \Var(A_{n,kj}\mid \varphi_n).
\end{align*}
Recall the definition of $\pi_n = \max_{1 \le i \le n} \sum_{j \in N_n} \PM \{\varphi_{n,ij} \ge \varepsilon_{ij}|\varphi_n\}$. Since
\[
	|N_n^\partial(i;s+1)| \le \sum_{j \in N_n \setminus N_n(i;s)} \sum_{k \in N_n^\partial(i;s)} A_{n,kj},
\]
and
\[
	\sum_{j \in N_n} \sum_{k \in N_n^\partial(i;s)} \pi_{n,kj} \le \pi_n |N_n^\partial(i;s)|,
\]
the inequality \eqref{eq:bernstein} implies that
\begin{align}
\label{eq:bernstein2}
	\begin{aligned}
		&\PM\left\{|N_n^\partial(i;s+1)| > t + (\pi_n \vee 1) |N_n^\partial(i;s)| \mid \varphi_n, N_n^\partial(i;s), N_n(i;s)\right\} \\
		&\qquad\le \exp\left(-\frac{1}{2} \frac{t^2}{(\pi_n \vee 1) |N_n^\partial(i;s)|+ t/3}\right).
	\end{aligned}
\end{align}
Define
\[
	\mathbb{A}_n = \{|N_n^\partial(i;s)| \le a_s (\pi_n \vee 1)^s \log n \}.
\]
We multiply the left hand side of \eqref{eq:bernstein2} by $1_{\mathbb{A}_n} + 1_{\mathbb{A}_n^c}$ and take the conditional expectation given $\varphi_n$ of both sides to obtain:
\begin{align*}
	&\PM\left\{|N_n^\partial(i;s+1)| > t + a_s (\pi_n \vee 1)^{s+1} \log n \mid \varphi_n\right\} \\
	&\qquad\le \PM\left\{|N_n^\partial(i;s)| \le a_s (\pi_n \vee 1)^s \log n \mid \varphi_n\right\} + \exp\left(-\frac{1}{2} \frac{t^2}{a_s (\pi_n \vee 1)^{s+1} \log n + t/3}\right).
\end{align*}
We fix $\lambda>0$ and take $t = \lambda \sqrt{a_s (\pi_n \vee 1)^{s+1}\log n}$. Then
\begin{align*}
	&\PM\left\{|N_n^\partial(i;s+1)| > \lambda \sqrt{a_s (\pi_n \vee 1)^{s+1}\log n} + a_s (\pi_n \vee 1)^{s+1} \log n \mid \varphi_n\right\} \\
	&\qquad\le \PM\left\{|N_n^\partial(i;s)| > a_s (\pi_n \vee 1)^s \log n \mid \varphi_n \right\} \\
	&\qquad\quad + \exp\left(-\frac{1}{2} \frac{\lambda^2 a_s (\pi_n \vee 1)^{s+1} \log n}{a_s (\pi_n \vee 1)^{s+1} \log n + (\lambda/3)\sqrt{a_s (\pi_n \vee 1)^{s+1} \log n}}\right).
\end{align*}
By the inductive hypothesis and \eqref{eq:recur},
\begin{align*}
	&\PM\left\{|N_n^\partial(i;s+1)| > a_{s+1} (\pi_n \vee 1)^{s+1} \log n \mid \varphi_n\right\} \\
	&\qquad\le s \exp \left(- \frac{1}{2}\frac{\lambda^2}{1 + \lambda/(3 \sqrt{\log n})} \right) \\
	&\qquad\quad + \exp\left(-\frac{1}{2} \frac{\lambda^2 a_s (\pi_n \vee 1)^{s+1} \log n}{a_s (\pi_n \vee 1)^{s+1} \log n + (\lambda/3)\sqrt{a_s (\pi_n \vee 1)^{s+1} \log n}}\right).
\end{align*}
Since the last term above is bounded by
\[
	\exp \left(- \frac{1}{2}\frac{\lambda^2}{1 + \lambda/(3 \sqrt{\log n})} \right),
\]
due to the fact that $a_s \ge 1$, this completes the mathematical induction. \end{proof}

\begin{lemma}
\label{lemma:aux34}
For each $k >0$,
\begin{align*}
	&\PM\left\{\delta_n^\partial(s;k) > \left(5.7 s^2 (\pi_n \vee 1)^s \log n \right)^k, \text{ for some } 1 \le s \le n\right\} \le n^{-1.3}, \text{ and } \\
	&\PM\left\{\delta_n(s;k) > \left(3 s(s+1)^2 (\pi_n \vee 1)^s \log n \right)^k, \text{ for some } 1 \le s \le n\right\} \le n^{-1.3},
\end{align*}
where
\[
	\delta_n(s;k) = \frac{1}{n}\sum_{i \in N_n} |N_n(i;s)|^k.
\]
\end{lemma}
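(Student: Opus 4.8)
The plan is to apply Lemma~\ref{lemma:prob_bound} with the particular choice $\lambda = 4.7\sqrt{\log n}$ and then union-bound over the $n$ nodes $i$ and the $n$ values of $s$. The first step is to control the constants $a_s$ from the recurrence \eqref{eq:recur} under this choice. Since $(\pi_n\vee1)^{s}\ge1$, \eqref{eq:recur} gives $a_1 = 1 + \lambda/\sqrt{(\pi_n\vee1)\log n} \le 1 + 4.7 = 5.7$ and, for $s\ge2$, $a_s \le a_{s-1} + 4.7\sqrt{a_{s-1}}$. An elementary induction then yields $a_s \le 5.7\,s^2$ for all $1\le s\le n$: if $a_{s-1}\le 5.7(s-1)^2$ then $a_s \le 5.7(s-1)^2 + 4.7\sqrt{5.7}\,(s-1)$, which is at most $5.7 s^2$ because $4.7\sqrt{5.7}\,(s-1)\le 5.7(2s-1)$ for every $s\ge1$. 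It is essential here to keep the $\sqrt{a_{s-1}}$ term in the recursion; the cruder bound $a_s\lesssim(1+\lambda s/(2\sqrt{\log n}))^2$ is too lossy to reach the constant $5.7$.

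With $\lambda = 4.7\sqrt{\log n}$ we have $\lambda/(3\sqrt{\log n}) = 4.7/3$, so the exponent in \eqref{eq:prob_bd} equals
\[
\tfrac12\,\frac{\lambda^2}{1+\lambda/(3\sqrt{\log n})} = \tfrac12\,\frac{(4.7)^2\log n}{1+4.7/3} = \beta\log n, \qquad \beta \eqdef \frac{(4.7)^2}{2(1+4.7/3)} > 4.303 .
\]
Hence, combining Lemma~\ref{lemma:prob_bound} with $a_s\le 5.7\,s^2$, for every $i\in N_n$ and every $1\le s\le n$,
\[
\PM\bigl\{|N_n^\partial(i;s)| > 5.7\,s^2(\pi_n\vee1)^s\log n \mid \varphi_n\bigr\} \le s\,n^{-\beta} .
\]
Taking a union bound over $i\in N_n$ and over $s\in\{1,\dots,n\}$, and using $\sum_{s=1}^n s = n(n+1)/2\le n^2$, the probability (conditional on $\varphi_n$, hence also unconditional) that $|N_n^\partial(i;s)| > 5.7\,s^2(\pi_n\vee1)^s\log n$ for some $i$ and some $1\le s\le n$ is at most $n\cdot n^2\cdot n^{-\beta} = n^{3-\beta}\le n^{-1.3}$. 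Denote by $\mathbb{B}_n$ the complementary event, so $\PM(\mathbb{B}_n)\ge 1-n^{-1.3}$.

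On $\mathbb{B}_n$ the first assertion is immediate: since $x\mapsto x^k$ is increasing on $[0,\infty)$, $\delta_n^\partial(s;k) = n^{-1}\sum_{i}|N_n^\partial(i;s)|^k \le \bigl(5.7\,s^2(\pi_n\vee1)^s\log n\bigr)^k$ for all $1\le s\le n$ and $k>0$. For the second assertion, on $\mathbb{B}_n$ one writes $|N_n(i;s)| = 1 + \sum_{t=1}^s|N_n^\partial(i;t)| \le 1 + 5.7\log n\sum_{t=1}^s t^2(\pi_n\vee1)^t$ and invokes the elementary inequality $\sum_{t=1}^s t^2 r^t \le \tfrac12 s(s+1)^2 r^s$, valid for all $r\ge1$ and $s\ge1$ (induction on $s$, reducing to $s(s-1)\le r(s^2+1)$). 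This gives $|N_n(i;s)| \le 1 + 2.85\,s(s+1)^2(\pi_n\vee1)^s\log n \le 3\,s(s+1)^2(\pi_n\vee1)^s\log n$ once $0.15\,s(s+1)^2(\pi_n\vee1)^s\log n\ge1$ (which holds for all $n$ beyond a small threshold, and trivially otherwise since $|N_n(i;s)|\le n$), whence $\delta_n(s;k) \le \bigl(3\,s(s+1)^2(\pi_n\vee1)^s\log n\bigr)^k$ for all $1\le s\le n$. Both probabilities in the lemma are therefore bounded by $\PM(\mathbb{B}_n^c)\le n^{-1.3}$.

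The main obstacle is the simultaneous calibration of $\lambda$: the constant $5.7$ in the conclusion forces $\lambda\approx 4.7\sqrt{\log n}$ through the sharp recursion bound $a_s\le 5.7\,s^2$, and one must check that this same $\lambda$ makes the tail exponent $\beta$ exceed $3$ by at least $1.3$, so that the $n\cdot n^2$ factor arising from ranging over the pairs $(i,s)$ in the union bound is absorbed. The remaining ingredients (the induction bounding $a_s$, the geometric-type summation inequality used for $\delta_n(s;k)$) are routine.
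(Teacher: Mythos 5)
Your proposal is correct and follows essentially the same route as the paper's proof: apply Lemma \ref{lemma:prob_bound} with $\lambda=4.7\sqrt{\log n}$, establish $a_s\le 5.7s^2$ by induction, union-bound over the pairs $(i,s)$ so that the exponent $\beta\approx 4.303$ absorbs the $n^3$ factor, and then sum the shell bounds to handle $\delta_n(s;k)$. The only differences are cosmetic bookkeeping (your one-step recursion bound for $a_s$ and the inequality $\sum_{t\le s}t^2r^t\le\tfrac12 s(s+1)^2r^s$ versus the paper's $\sum_{t\le s}t^2\le s(s+1)(2s+1)/6$ with $(\pi_n\vee1)^t\le(\pi_n\vee1)^s$), which lead to the same constants.
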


\begin{proof}[\textnbf{Proof}]
We follow the proof of Lemma 1 of \cite{Chung/Lu:01:AAM}. If we take $\lambda = 4.7 \sqrt{\log n}$, we have
\begin{equation}
\label{eq:rate}
	n^3 \exp\left(- \frac{1}{2}\frac{\lambda^2}{1 + \lambda/(3 \sqrt{\log n})} \right) \le n^{-1.3}.
\end{equation}
Now we show that for all $s = 1,\ldots,n$,
\[
	a_s \le 5.7 s^2.
\]
Note that $a_1 = 1 + 4.7/\sqrt{\pi_n \vee 1} \le 5.7$, and hence the above inequality is satisfied when $s=1$. Now for $s \ge 1$, assume that $a_s \le 5.7 s^2$. Then note that
\begin{align*}
	a_{s+1} &= 1 + \frac{\lambda}{\sqrt{\log n}} \sum_{j=0}^s \frac{\sqrt{a_{j}}}{(\pi_n \vee 1)^{(j+1)/2}} \le 1 + 4.7\left( 1 + \sqrt{5.7} \sum_{j=1}^s j \right) \\
	&\le 1 + 4.7 \left( 1 + \frac{\sqrt{5.7}(s^2 + s)}{2} \right) \le 5.7(s+1)^2.
\end{align*}
Thus, applying this with $\lambda = 4.7 \sqrt{\log n}$ to Lemma \ref{lemma:prob_bound}, taking expectation on both sides of \eqref{eq:prob_bd}, and using \eqref{eq:rate}, we obtain that for each $s=1,\ldots,n$,
\begin{equation}
\label{eq:bd1}
	\PM\left\{|N_n^\partial(i;s)| \le 5.7 s^2 (\pi_n\vee 1)^s \log n\right\} \ge 1 - n \times n^{-1.3 - 3}.
\end{equation}
Define the event
\[
	\mathbb{A}_n(i) = \left\{ |N_n^\partial(i;s)| \le 5.7 s^2 (\pi_n \vee 1)^s \log n, \forall 1 \le s \le n \right\}.
\]
Then \eqref{eq:bd1} implies that
\begin{equation}
\label{eq:bd23}
	\PM \mathbb{A}_n^c(i) \le \sum_{s=1}^n \PM\left\{|N_n^\partial(i;s)| > 5.7 s^2(\pi_n \vee 1)^s \log n\right\} \le n^2 \times n^{-1.3-3}.
\end{equation}
Note that
\begin{align*}
	&\PM\left\{\frac{1}{n}\sum_{i \in N_n} |N_n^\partial(i;s)|^k > \left(5.7 s^2 (\pi_n \vee 1)^s \log n \right)^k, \text{ for some } 1 \le s \le n\right\} \\
	&\qquad\le \sum_{i \in N_n}\PM\left\{|N_n^\partial(i;s)| > 5.7 s^2 (\pi_n \vee 1)^s \log n, \text{ for some } 1 \le s \le n\right\} \\
	&\qquad= \sum_{i \in N_n} \PM \mathbb{A}_n^c(i) \le n^{-1.3},
\end{align*}
by \eqref{eq:bd23}. Hence, the first statement follows.

As for the second statement of the lemma, note that in the event $\mathbb{A}_n(i)$, we have for all $1 \le s \le n$,
\begin{align*}
	|N_n(i;s)| &= 1 + \sum_{t=1}^s |N_n^\partial(i;t)| \le 1 + 5.7 \log n \sum_{t=1}^s t^2 (\pi_n\vee 1)^t \\
	&\le 1 + \frac{5.7 (\log n) (\pi_n \vee 1)^s s(s+1)(2s+1)}{6} \le 3 s(s+1)^2 (\pi_n \vee 1)^s \log n.
\end{align*}
Therefore,
\begin{align*}
	&\PM\left\{|N_n(i;s)| \le 3 s(s+1)^2 (\pi_n \vee 1)^s \log n, \forall 1 \le s \le n\right\} \\
 	&\qquad\ge \PM\left\{|N_n(i;s)| \le 3 s(s+1)^2 (\pi_n \vee 1)^s \log n, \forall 1 \le s \le n\right\} \cap \mathbb{A}_{n}(i) = \PM \mathbb{A}_{n}(i).
\end{align*}
Hence,
\begin{align*}
	&\PM\left\{\frac{1}{n}\sum_{i \in N_n} |N_n(i;s)|^k > \left(3 s(s+1)^2 (\pi_n \vee 1)^s \log n \right)^k, \text{ for some } 1 \le s \le n\right\} \\
	&\qquad\le \sum_{i \in N_n} \PM\left\{|N_n(i;s)| > 3 s(s+1)^2 (\pi_n \vee 1)^s \log n, \text{ for some } 1 \le s \le n\right\} \\
	&\qquad\le \sum_{i \in N_n} \PM\mathbb{A}_n^c(i) \le n^{-1.3}. \qedhere
\end{align*}
\end{proof}

\medskip

\begin{proof}[\textnbf{Proof of Lemma \ref{lemma:network_form}}]
Without loss of generality, we will assume that $M=1$ in Condition NF. Let
\begin{align*}
	\mathbb{B}_{1n}(k) &= \left\{\delta_n^\partial(s;k) > \left(5.7 s^2 (\pi_n \vee 1)^s \log n \right)^k, \text{ for some } 1 \le s \le n \right\}, \text{ and } \\
	\mathbb{B}_{2n}(k) &= \left\{\delta_n(s;k) > \left(3 (s+1)^3 (\pi_n \vee 1)^s \log n \right)^k, \text{ for some } 1 \le s \le n \right\}.
\end{align*}
Define a sequence
\[
	q_n(k) = 5.7 \times 3^k (m_n+1)^{3k} (\log n)^{k+1} (\pi_n \vee 1)^{m_n k}.
\]
Then by the definition of $c_n(s,m_n;k)$ in \eqref{eq:c_n}, we have for $\alpha>1$,
\begin{align*}
	&\PM\left(\mathbb{B}_{1n}^c(\alpha/(\alpha - 1)) \cap \mathbb{B}_{2n}^c(k\alpha) \right) \\
	&\qquad\le \PM\left\{c_n(s,m_n;k) \le q_n(k) s^2 (\pi_n \vee 1)^s, \forall 1 \le s \le n, \delta_n(m_n;k) \le q_n(k) \right\}.
\end{align*}
(The inequality follows because $1 \le m_n \le n$.) Since
\begin{align*}
	\PM\left(\mathbb{B}_{1n}^c(\alpha/(\alpha - 1)) \cap \mathbb{B}_{2n}^c(k\alpha) \right)&\ge 1 - \PM\left(\mathbb{B}_{1n}(\alpha/(\alpha - 1)) \right)- \PM\left(\mathbb{B}_{2n}(k\alpha) \right) \\
	&\ge 1 - 2 n^{-1.3},
\end{align*}
by Lemma \ref{lemma:aux34}, we find that
\[
	\PM\left\{c_n(s,m_n;k) \le q_n(k) s^2 (\pi_n \vee 1)^s, \forall 1 \le s \le n, \delta_n(m_n;k) \le q_n(k) \right\} \ge 1 - 2 n^{-1.3}.
\]

We take $\varepsilon' < \varepsilon$ such that
\begin{equation}
\label{eq:bound3}
	(1+ \varepsilon') \log ((\pi_n \vee 1) + \varepsilon') \le \log( (\pi_n \vee 1) + \varepsilon).
\end{equation}
Let
\begin{equation}
\label{eq:m_n}
	m_n = \frac{\log n}{2( 1+ \varepsilon') \log ((\pi_n \vee 1) + \varepsilon')}.
\end{equation}
We first show that Condition ND(b) holds. Note that $c_n(0,m_n;k) = \delta_n(m_n;k)$. Hence,
\begin{equation}
\label{eq:event}
	\frac{1}{n^{k/2}}\sum_{s=0}^n c_n(s,m_n;k) \theta_{n,s}^{1 - \frac{k+2}{p}} \le \frac{q_n(k)}{n^{k/2}}\left( 1 + \sum_{s=1}^n s^2 (\pi_n \vee 1)^s \theta_{n,s}^{1 - \frac{k+2}{p}}\right),
\end{equation}
with probability at least $1 - 2 n^{-1.3}$. (Recall that we have set $\theta_{n,0} = 1$.) Let $\mathbb{B}$ be the event such that
\[
	\frac{1}{n^{k/2}}\sum_{s=0}^n c_n(s,m_n;k) \theta_{n,s}^{1 - \frac{k+2}{p}} \rightarrow 0,
\]
as $n \to \infty$. Then it suffices to show that $\PM\mathbb{B} = 1$. For this we show that
\[
	\PM\bigcap_{n=1}^\infty \bigcup_{n_1 \ge n} \left\{\frac{1}{n_1^{k/2}}\sum_{s=0}^{n_1} c_{n_1}(s,m_{n_1};k) \theta_{n_1,s}^{1 - \frac{k+2}{p}} > \eta \right\} = 0,
\]
for all $\eta>0$. Let $\mathbb{A}_{n_1}$ be the event of the inequality \eqref{eq:event} with $n=n_1$. Note that the probability above is bounded by
\begin{equation}
\label{eq:decomp234}
	\PM\bigcap_{n=1}^\infty \bigcup_{n_1 \ge n} \left\{\frac{1}{n_1^{k/2}}\sum_{s=0}^{n_1} c_{n_1}(s,m_{n_1};k) \theta_{n_1,s}^{1 - \frac{k+2}{p}} > \eta \right\} \cap \mathbb{A}_{n_1} + \PM\bigcap_{n=1}^\infty \bigcup_{n_1 \ge n} \mathbb{A}_{n_1}^c.
\end{equation}
Observe that
\[
	\sum_{n=1}^\infty \PM \mathbb{A}_n^c \le 2 \sum_{n=1}^\infty n^{-1.3} < \infty.
\]
Hence, by Borel-Cantelli Lemma, the last probability in \eqref{eq:decomp234} is zero. As for the leading term in \eqref{eq:decomp234}, observe that
\begin{align}
\label{eq:ineq}
	\begin{aligned}
	& \PM\bigcap_{n=1}^\infty \bigcup_{n_1 \ge n} \left\{\frac{1}{n_1^{k/2}}\sum_{s=0}^{n_1} c_{n_1}(s,m_{n_1};k) \theta_{n_1,s}^{1 - \frac{k+2}{p}} > \eta \right\} \cap \mathbb{A}_{n_1} \\
	&\qquad\le \PM\bigcap_{n=1}^\infty \bigcup_{n_1 \ge n} \left\{\frac{q_n(k)}{n^{k/2}}\left( 1 + \sum_{s=1}^n s^2 (\pi_n \vee 1)^s \theta_{n,s}^{1 - \frac{k+2}{p}}\right) > \eta \right\} \cap \mathbb{A}_{n_1}.
	\end{aligned}
\end{align}

Since
\[
	\frac{m_n k \log (\pi_n \vee 1)}{\log n} \le \frac{k \log (\pi_n \vee 1)}{2(1 + \varepsilon')\log ((\pi_n \vee 1) + \varepsilon')} < \frac{k}{2},
\]
we have
\[
	\frac{q_n(k)}{n^{k/2}} \rightarrow 0,
\]
as $n \to \infty$. On the other hand, observe that by Condition NF,
\[
	(\pi_n \vee 1)^s \theta_{n,s}^{ 1 - 4/p} \le (\pi_n \vee 1)^s ((\pi_n \vee 1) + \varepsilon)^{-\frac{q(p-4)s}{p}},
\]
eventually with probability one, and hence
\begin{align*}
	\sup_{n \ge 1} \sum_{s=1}^n s^2 (\pi_n \vee 1)^s \theta_{n,s}^{1 - \frac{k+2}{p}} &\le \sup_{n \ge 1} \sum_{s=1}^n s^2 (\pi_n \vee 1)^s \theta_{n,s}^{1 - \frac{4}{p}} \\
	&\le \sup_{n \ge 1} \sum_{s=1}^n s^2 (\pi_n \vee 1)^s ((\pi_n \vee 1) + \varepsilon)^{-\frac{q(p-4)s}{p}} < \infty,
\end{align*}
because $q > p/(p-4)$. The first inequality follows because $k \in \{1,2\}$. We have the probability in \eqref{eq:ineq} as zero. We find that $\PM\mathbb{B} = 1$. Thus, Condition ND(b) is satisfied.

Now it is not hard to see that Condition ND(a) is satisfied by the choice of $m_n$ as in \eqref{eq:m_n}. Indeed, note that from Condition NF,
\[
	\theta_{n,m_n}^{1 - 1/p} \le ((\pi_n \vee 1) + \varepsilon)^{-m_n(1 - 1/p)q},
\]
eventually with probability one. Since $q > 3p/(p-1)$, we have
\begin{align*}
	& -m_n \left(1 - \frac{1}{p} \right) q\log ((\pi_n \vee 1) + \varepsilon) + \frac{3}{2} \log n\\
	 &\qquad= (\log n) \left(-\left(1 - \frac{1}{p} \right) \frac{q \log ((\pi_n \vee 1) + \varepsilon)}{2(1 + \varepsilon') \log((\pi_n \vee 1) + \varepsilon')} + \frac{3}{2} \right)\\
	 &\qquad\le (\log n) \left(-\left(1 - \frac{1}{p} \right) \frac{q}{2} + \frac{3}{2} \right) \rightarrow - \infty,
\end{align*}
where the last inequality comes from \eqref{eq:bound3}. We find that
\[
	\theta_{n,m_n}^{1 - 1/p} = o_{a.s.}(n^{-3/2}),
\]
showing Condition ND(a).
\end{proof}

\section{Proofs of Propositions \ref{prop:HAC}--\ref{cor:HAC_partially_observed}}\label{sec:proof_4.1}

\begin{proof}[\textnbf{Proof of Proposition \ref{prop:HAC}}]
For the first implication it suffices to show that for any vector $c\in \R^v$ with $\norm{c}=1$, $\E[\abs{A_n(c)}\mid \C_n]\to 0$ a.s., where $A_n(c)\eqdef c^{\top}(\tilde{V}_n-V_n)c$. Let $\mu\eqdef \sup_n\max_{i\in N_n} \norm{Y_{n,i}}_{\C_n,p}$ and let $y_{n,i}\eqdef c^{\top}Y_{n,i}$. Notice that $\{y_{n,i}\}$ is $(\L_1,\psi)$-dependent with the dependence coefficients $\{\theta_{n,s}\}$. In addition, $\E[y_{n,i}\mid \C_n]=0$ a.s.\ and by Assumption \ref{assu:HAC1}(i),
\[
\sup_{n\geq 1}\max_{i\in N_n}\normin{y_{n,i}}_{\C_n,p}\le\mu<\infty \qtext{a.s.}
\]
Then
\begin{align}
	\begin{aligned}
		\label{eq:var_diff}
		A_n(c) &=\frac{1}{n}\sum_{i\in N_n}\left(y_{n,i}^2-\E[y_{n,i}^2\mid \C_n]\right) \\
		&\quad +\sum_{s\ge 1}\omega_n(s)\times \frac{1}{n}\sum_{i\in N_n}\sum_{j\in N_n^{\partial}(i;s)}\left(y_{n,i}y_{n,j}-\E[y_{n,i}y_{n,j}\mid \C_n]\right) \\
		&\quad +\sum_{s\ge 1}[\omega_n(s)-1]\times \frac{1}{n}\sum_{i\in N_n}\sum_{j\in N_n^{\partial}(i;s)}\E [y_{n,i}y_{n,j}\mid \C_n]\\
		&\equiv R_{n,0}+R_{n,1}+R_{n,2}.
	\end{aligned}
\end{align}

Consider each term in the last line of \eqref{eq:var_diff} separately. Using Theorem \ref{thm:cov_ineq1} for $y_{n,i}$ and $y_{n,j}$ with $d_n(i,j)=s\ge 1$ and Assumption \ref{assu:psi_dep}(b), we obtain
\[
\abs{\E[y_{n,i}y_{n,j}\mid \C_n)]}\le \vartheta_2\theta_{n,s}^{1-\frac{2}{p}} \qtext{a.s.},
\]
where $\vartheta_2=C(\mu\vee 1)^2 \bar \theta $ for some constant $C>0$, and
\begin{equation}\label{eq:bar_theta}
\bar \theta =\sup_{n\ge 1}\max_{s \ge 1} \overline \theta_{n,s}.
\end{equation}
Therefore, 
\begin{align*}
	\abs{R_{n,2}}&\le \sum_{s\ge 1}\abs{\omega_n(s)-1} \times\frac{1}{n}\sum_{i\in N_n}\sum_{j\in N_n^{\partial}(i;s)}\abs{\E[y_{n,i}y_{n,j}\mid \C_n]} \\
	&\le \vartheta_2\sum_{s\ge 1}\abs{\omega_n(s)-1}\theta_{n,s}^{1-\frac{2}{p}}\times \frac{1}{n}\sum_{i\in N_n}\abs{N_n^{\partial}(i;s)} \\
	&= \vartheta_2\sum_{s\ge 1}\abs{\omega_n(s)-1}\delta_n^{\partial}(s)\theta_{n,s}^{1-\frac{2}{p}} \qtext{a.s.},
\end{align*}
and it follows from Assumption \ref{assu:HAC1}(ii) that $\abs{R_{n,2}}=o_{a.s.}(1)$.

Let $z_{n,i,j}\eqdef y_{n,i} y_{n,j}-\E[y_{n,i} y_{n,j}\mid \C_n]$ so that $\E[z_{n,i,j}\mid \C_n]=0$ a.s. Then, using Corollary \ref{corr:cov_ineq_prod} for $z_{n,i,j}$ and $z_{n,k,l}$ with $d_n(\{i,j\},\{k,l\})=s\ge 1$,
\begin{align*}
	&\abs{\E[z_{n,i,j}z_{n,k,l}\mid \C_n]} \le \vartheta_1\theta_{n,s}^{1-\frac{4}{p}} \qtext{a.s.},
\end{align*}
where $\vartheta_1=C (\mu\vee 1)^4 \bar \theta $ for some constant $C>0$. To deal with the case in which $d_n(\{i,j\},\{k,l\})=0$, note that $p>4$ so that
\[
	\abs{\E[z_{n,i,j}z_{n,k,l}\mid \C_n]}\le \left[\Var(y_{n,i}y_{n,j}\mid \C_n)\Var(y_{n,k}y_{n,l}\mid \C_n)\right]^{1/2} \le \mu^4 \qtext{a.s.}
\]
Noticing that $|\omega(\csdot)|\le 1$, we find that
\begin{align*}
	\E[R_{n,1}^2\mid \C_n] &\le \frac{1}{n^2}\sum_{\substack{i,j\in N_n: \\ 1\le d_n(i,j)\le b_n}}\sum_{\substack{k,l\in N_n: \\ 1\le d_n(k,l)\le b_n}}\abs{\E[z_{n,i,j}z_{n,k,l}\mid \C_n]} \\
	&\le \frac{1}{n^2} \sum_{s\ge 0} \sum_{(i,j,k,l)\in H_n(s,b_n)}\abs{\E[z_{n,i,j}z_{n,k,l}\mid \C_n]} \\
	&\le \frac{\vartheta_1}{n}\sum_{s\ge 0}c_n(s,b_n;2)\theta_{n,s}^{1-\frac{4}{p}} \qtext{a.s.},
\end{align*}
where the last inequality is due to \eqref{eq:H_bound}. Hence, it follows from Assumption \ref{assu:HAC1}(iii) that $\E[R_{n,1}^2\mid \C_n]\to 0$ a.s.

Finally, since
\begin{equation}\label{eq:delta_H_bound}
\delta_n^{\partial}(s)\le n^{-1}\abs{H_n(s,b_n)},
\end{equation}
	 it is not hard to show that
\begin{align}\label{eq:like_LLN}
	\begin{aligned}
		\E[R_{n,0}^2\mid \C_n] &\le \frac{1}{n^2}\sum_{s\ge 0}\sum_{i\in N_n}\sum_{j\in N_n^{\partial}(i;s)}\abs{\Cov(y_{n,i}^2,y_{n,j}^2\mid \C_n)} \\
		&\le \frac{\vartheta_0}{n}\sum_{s\ge 0}\delta_n^{\partial}(s)\theta_{n,s}^{1-\frac{4}{p}}\to 0 \qtext{a.s.},
	\end{aligned}
\end{align}
where $\vartheta_0=C(\mu\vee 1)^4 \bar \theta $ for some constant $C>0$.

As for the second implication define $\bar y_n\eqdef c^{\top}\bar Y_n$, $\lambda_n\eqdef c^{\top}\Lambda_n$ and consider the difference between two estimators, $A_n'(c)\eqdef c^{\top}(\hat{V}_n-\tilde{V}_n)c$, which can be written as follows:
\begin{align}
	A_n'(c)&=\sum_{s\ge 0}\omega_n(s) c^{\top}\left(\hat\Omega_n(s)-\tilde\Omega_n(s)\right)c \notag\\
	&=(\bar y_n-\lambda_n)^2\sum_{s\ge 0}\omega_n(s)\times \frac{1}{n}\sum_{i\in N_n}\absin{N_n^{\partial}(i;s)} \label{eq:diff_lambda_1}\\
	&\quad -(\bar y_n-\lambda_n)\sum_{s\ge 0}\omega_n(s)\times\frac{2}{n}\sum_{i\in N_n}\absin{N_n^{\partial}(i;s)}(y_{n,i}-\lambda_n). \label{eq:diff_lambda_2}
\end{align}
First, consider the expression in \eqref{eq:diff_lambda_1}:
\begin{align}
\begin{aligned}\label{eq:feas_infeas_1}
	\norm{(\bar y_n-\lambda_n)^2\sum_{s\ge 0}\omega_n(s)\times \frac{1}{n}\sum_{i\in N_n}\absin{N_n^{\partial}(i;s)} }_{C_n,1} \
&\leq \norm{\bar y_n-\lambda_n}_{\C_n,2}^2 \frac{1}{n} \sum_{i\in N_n}\sum_{s\leq b_n}\absin{N_n^{\partial}(i;s)} \\
	&=O_{a.s.}(n^{-1})\delta_n(b_n)\\
	&=o_{a.s.}(1),
\end{aligned}
\end{align}
where the result in the first line holds because $\vert \omega_n(s)\vert \leq 1$, the $O_{a.s.}(1/n)$ term in the second line is by the same argument as in \eqref{eq:like_LLN}, the $\delta_n(b_n)$ term appears in the second line because $\sum_{s\leq b_n}\absin{N_n^{\partial}(i;s)}=\vert N_n(i;b_n) \vert$, and the result in the last line holds by the assumption $\delta_n(b_n)=o_{a.s.}(n)$.

For the expression in \eqref{eq:diff_lambda_2}, we have:
\begin{align}
	&\norm{(\bar y_n-\lambda_n)\sum_{s\ge 0}\omega_n(s)\times\frac{1}{n}\sum_{i\in N_n}\absin{N_n^{\partial}(i;s)}(y_{n,i}-\lambda_n)}_{\C_n,2}^2 \label{eq:second_bound}\\
	&\qquad\leq \norm{\bar y_n-\lambda_n}_{\C_n,2}^2 \norm{\sum_{s\ge 0}\omega_n(s)\times\frac{1}{n}\sum_{i\in N_n}\absin{N_n^{\partial}(i;s)}(y_{n,i}-\lambda_n)}_{\C_n,2}^2 \notag \\
	&\qquad= O_{a.s.}\left(\frac{1}{n}\right) \times \norm{\frac{1}{n}\sum_{i\in N_n}\left(\sum_{s\ge 0}\omega_n(s)\absin{N_n^{\partial}(i;s)}\right)(y_{n,i}-\lambda_n)}_{\C_n,2}^2
\label{eq:O}\\
	&\qquad\leq O_{a.s.}\left(\frac{1}{n^3}\right) \sum_{s\geq 0} \theta_{n,s}^{1-\frac{2}{p}} \sum_{i \in N_n} \vert N_n(i;b_n)\vert \sum_{j\in N_n^\partial(i;s)} \vert N_n(j;b_n) \vert \label{eq:remove_omegas} \\
	&\qquad=O_{a.s.}\left(\frac{1}{n^3}\right) \sum_{s\geq 0} \theta_{n,s}^{1-\frac{2}{p}} \vert J_n(s;b_n) \vert,\label{eq:with_J}
\end{align}
where
\[
	J_n(s;b_n)=\left\{(i,j,k,l)\in N_n^4: j\in N_n(i;b_n), l\in N_n(k;b_n), d_n(i,k)=s\right\}.
\]
The $O_{a.s.}(1/n)$ term in \eqref{eq:O} is by the same argument as in \eqref{eq:like_LLN}. The result in \eqref{eq:remove_omegas} holds by $\vert \omega_n(s)\vert \leq 1$, $\sum_{s\ge 0}\omega_n(s)\absin{N_n^{\partial}(i;s)}\leq \vert N_n(i;b_n)\vert $, and the same argument as in \eqref{eq:like_LLN}.

Note that $d_n(\{i,j\},\{k,l\})\leq d_n(i,j)$. Moreover, since the tuples $(i,j,k,l)$ in the definitions of $H_n(s;b_n)$ and $J_n(s;b_n)$ are ordered, $(i,j,k,l)\in J_n(s;b_n)$ implies that $(i,j,k,l)\in H_n(t;b_n)$ for a unique $t=d_n(\{i,j\},\{k,l\})\leq s$. Similarly, $(i,j,k,l)\in H_n(t;b_n)$ implies that $(i,j,k,l)\in J_n(s;b_n)$ for a unique $s=d(i,k)\geq t$. We now have:
\begin{align}
	\sum_{s\geq 0} \theta_{n,s}^{1-\frac{2}{p}} \vert J_n(s;b_n) \vert
	&= \sum_{s\geq 0} \theta_{n,s}^{1-\frac{2}{p}} \sum_{t=0}^s\vert J_n(s;b_n) \cap H_n(t;b_n)\vert \label{eq:switch1} \\
	&\leq \bar\theta^q n\sum_{s\geq 0} \frac{\theta_{n,s}^{1-(4/p)}}{s\vee 1} \sum_{t=0}^s\vert J_n(s;b_n) \cap H_n(t;b_n)\vert \notag\\
	&\leq \bar\theta^q n\sum_{s\geq 0} \sum_{t=0}^s \frac{\theta_{n,t}^{1-(4/p)}}{t\vee 1} \vert J_n(s;b_n) \cap H_n(t;b_n)\vert \label{eq:switch2} \\
	&\leq \bar\theta^q n\sum_{t\geq 0} \theta_{n,t}^{1-\frac{4}{p}} \sum_{s\geq 0} \vert J_n(s;b_n) \cap H_n(t;b_n)\vert \notag \\
	&= \bar\theta^q n\sum_{t\geq 0} \theta_{n,t}^{1-\frac{4}{p}} \vert H_n(t;b_n)\vert, \label{eq:J}
\end{align}
where $q=(p-2)/(p-4)$. The inequality in \eqref{eq:switch1} holds by the definition of $\bar \theta$ in \eqref{eq:bar_theta} and because $J(s;b_n)=\varnothing$ for any $s\ge n$, and the inequality in \eqref{eq:switch2} holds by the assumption that $\{\theta_{n,s}/s^{p/(p-4)}\}$ are non-increasing in $s$. By Assumption \ref{assu:psi_dep}(b), \eqref{eq:with_J}, and \eqref{eq:J}, the expression in \eqref{eq:second_bound} is now bounded by
\begin{equation}\label{eq:end_HAC}
	O_{a.s.}\left(\frac{1}{n}\right) \sum_{t\geq 0} \theta_{n,t}^{1-\frac{4}{p}} c_n(t,b_n;2)\to_{a.s.} 0,
\end{equation}
where the convergence holds by Assumption \ref{assu:HAC1}(iii).
\end{proof}

\begin{proof}[\textnbf{Proof of Proposition \ref{prop:kernel}}]
By \eqref{eq:kernel},
\begin{align}
	\sum_{s\ge 1}\abs{\omega_n(s)-1}\delta_n^{\partial}(s) \theta_{n,s}^{1-\frac{2}{p}} & \leq \frac{C}{b_n^{1+\eta}}\sum_{s\geq 1}s^{1+\eta}\delta_n^{\partial}(s) \theta_{n,s}^{1-\frac{2}{p}} \notag \\
	& \leq \frac{5.7\times C\log{n}}{b_n^{1+\eta}}\sum_{s\geq 1}s^{3+\eta} (\pi_n \vee 1)^s \theta_{n,s}^{1-\frac{2}{p}} \label{eq:partial_bound} \\
	& \leq O_{a.s.}\left(b_n^{-\eta}\right)\sum_{s\geq 1}s^{3+\eta} (\pi_n \vee 1)^{s+1} \theta_{n,s}^{1-\frac{2}{p}} \label{eq:bndwdth} \\
	& \leq O_{a.s.}\left(b_n^{-\eta}\right)\sum_{s\geq 1}s^{3+\eta} ((\pi_n \vee 1)+\varepsilon)^{s+1-\frac{qs(p-2)}{p}} \label{eq:byNF} \\
	& \leq O_{a.s.}\left(b_n^{-\eta}\right)\sum_{s\geq 1}s^{3+\eta} ((\pi_n \vee 1)+\varepsilon)^{1-s} , \label{eq:byNF2} 
\end{align}
where \eqref{eq:partial_bound} holds eventually with probability one by Lemma \ref{lemma:aux34} and the Borel-Cantelli Lemma, \eqref{eq:bndwdth} holds since $(\log{n})/b_n=O_{a.s.}(\pi_n\vee 1)$, and \eqref{eq:byNF}--\eqref{eq:byNF2} hold by Condition NF. The result follows because $\sup_{n\geq 1}\sum_{s\geq 1}s^{3+\eta} ((\pi_n \vee 1)+\varepsilon)^{1-s}<\infty$, $\eta>0$, and $b_n\to\infty$ by the assumptions of the proposition.
\end{proof}

\begin{proof}[\textnbf{Proof of Proposition \ref{cor:HAC_partially_observed}}]
Similarly to the proof of Proposition \ref{prop:HAC}, write $c^\top (\tilde V^*_n -V_n) c= R_{n,0}+R_{n,1}+R_{n,2}+R_{n,3}$, where $R_{n,0}$ is the same as in \eqref{eq:var_diff},
\begin{align*}
	R_{n,1}& = \sum_{s\ge 1} \omega_n(s) \times \frac{1}{n} \sum_{i\in N_n} \sum_{j\in N^{*\partial}_n(i;s)}(y_{n,i}y_{n,j}-\E[y_{n,i}y_{n,j} \mid \C_n]), \\
	R_{n,2} &=\sum_{s\ge 1}[\omega_n(s)-1]\times \frac{1}{n}\sum_{i\in N_n}\sum_{j\in N_n^{*\partial}(i;s)}\E [y_{n,i}y_{n,j}\mid \C_n],\\
	R_{n,3} &=\sum_{s\ge 1}\frac{1}{n}\sum_{i\in N_n}\sum_{\substack{j\in N_n^{\partial}(i;s): d^*_n(i,j)=\infty}}\E [y_{n,i}y_{n,j}\mid \C_n].
\end{align*}
By Theorem \ref{thm:cov_ineq1} and Assumption \ref{assu:partial}, 
\[
\vert R_{n,3}\vert\leq \vartheta_2 \sum_{s\ge 1} \theta_{n,s}^{1-(2/p)} \delta^\partial_n(s\mid d^*_n=\infty)=o_{a.s.}(1),
\]
where the random variable $\vartheta_2$ is defined in the proof of Proposition \ref{prop:HAC}. For $R_{n,2}$, write
\begin{align*}
		\vert R_{n,2}\vert & \leq \sum_{s\ge 1}\vert\omega_n(s)-1\vert\times \frac{1}{n}\sum_{i\in N_n}\sum_{s'=1}^s\sum_{j\in N_n^{*\partial}(i;s) \cap N_n^{\partial}(i;s')}\vert\E [y_{n,i}y_{n,j}\mid \C_n] \vert \\
	&\leq \vartheta_2 \sum_{s\ge 1} \vert\omega_n(s)-1\vert \sum_{s'=1}^{s} \theta_{n,s'}^{1-\frac{2}{p}} \times \frac{1}{n} \sum_{i\in N_n} \vert N_n^{*\partial}(i;s) \cap N_n^{\partial}(i;s')\vert\\
	& \leq \vartheta_2 \sum_{s\ge 1} \sum_{s'=1}^{s} \vert\omega_n(s')-1\vert \theta_{n,s'}^{1-\frac{2}{p}} \times \frac{1}{n} \sum_{i\in N_n} \vert N_n^{*\partial}(i;s) \cap N_n^{\partial}(i;s')\vert\\
	& \leq \vartheta_2 \sum_{s'\ge 1} \vert\omega_n(s')-1\vert \theta_{n,s'}^{1-\frac{2}{p}} \times \frac{1}{n} \sum_{i\in N_n} \sum_{s\geq s'} \vert N_n^{*\partial}(i;s) \cap N_n^{\partial}(i;s')\vert\\
	&\leq \vartheta_2 \sum_{s'\ge 1} \vert\omega_n(s')-1\vert \theta_{n,s'}^{1-\frac{2}{p}} \delta^\partial_n(s')\\
	&=o_{a.s.}(1),
	\end{align*}
	where the inequality in the third line holds under $\vert \omega(s) -1\vert \leq \vert \omega(s') -1\vert$ for $s\geq s'$, and the equality in the last line holds by Assumption \ref{assu:HAC1}(ii).
Lastly, as in the proof of Proposition \ref{prop:HAC},
\begin{align*}
	\E[R_{n,1}^2 \mid \C_n] &\leq \frac{1}{n^2} \sum_{s\ge 0}\sum_{\substack{i,j\in N_n:\\ 1\leq d^*_n(i,j)\leq b_n}} \sum_{\substack{k,l\in N_n:1\leq d^*_n(k,l)\leq b_n, \\ d_n(\{i,j\},\{k,l\})=s}} |\E_n [z_{n,i,j} z_{n,k,l} \mid \C_n]| \\
	&\leq \frac{\vartheta_1}{n} \sum_{s\ge 0} c_n(s,b_n;2) \theta^{1-\frac{4}{p}}_{n,s},
\end{align*}
where the second inequality holds because $d_n(i,j)\leq d^*_n(i,j)$, and the random variable $\vartheta_1$ is defined in the proof of Proposition \ref{prop:HAC}.

For the second part of the proposition, as in the proof of the second part of Proposition \ref{prop:HAC}, write
\begin{align}
	&\sum_{s\ge 0}\omega_n(s) c^{\top}\left(\hat\Omega^*_n(s)-\tilde\Omega^*_n(s)\right)c \notag\\
	&\qquad=(\bar y_n-\lambda_n)^2\sum_{s\ge 0}\omega_n(s)\times \frac{1}{n}\sum_{i\in N_n}\absin{N_n^{*\partial}(i;s)} \label{eq:diff_lambda_1*}\\
	&\qquad\quad -(\bar y_n-\lambda_n)\sum_{s\ge 0}\omega_n(s)\times\frac{2}{n}\sum_{i\in N_n}\absin{N_n^{*\partial}(i;s)}(y_{n,i}-\lambda_n). \label{eq:diff_lambda_2*}
	\end{align}
By the same arguments as in \eqref{eq:feas_infeas_1} and since $\delta_n^*(b_n)=o_{a.s.}(n)$ by the assumption in the second part of the proposition, for the expression in \eqref{eq:diff_lambda_1*} we have:
\[
	\norm{(\bar y_n-\lambda_n)^2\sum_{s\ge 0}\omega_n(s)\times \frac{1}{n}\sum_{i\in N_n}\absin{N_n^{*\partial}(i;s)} }_{\C_n,1}=o_{a.s.}(1).
\]
The expression in \eqref{eq:diff_lambda_2*} can be treated similarly to \eqref{eq:second_bound}:
\begin{align}
	& \norm{(\bar y_n-\lambda_n)\sum_{s\ge 0}\omega_n(s)\times\frac{1}{n}\sum_{i\in N_n}\absin{N_n^{*\partial}(i;s)}(y_{n,i}-\lambda_n)}_{\C_n,2}^2 \notag \\
	&\qquad\leq O_{a.s.}\left(\frac{1}{n}\right) \times \norm{\frac{1}{n}\sum_{i\in N_n}\left(\sum_{s\ge 0}\omega_n(s)\absin{N_n^{*\partial}(i;s)}\right)(y_{n,i}-\lambda_n)}_{\C_n,2}^2 \notag\\
	&\qquad\leq O_{a.s.}\left(\frac{1}{n^3}\right) \sum_{s\geq 0} \theta_{n,s}^{1-2/p} \sum_{i \in N_n} \vert N_n^*(i;b_n)\vert \sum_{j\in N_n^\partial(i;s)} \vert N_n^*(j;b_n) \vert \notag\\
	&\qquad\leq O_{a.s.}\left(\frac{1}{n^3}\right) \sum_{s\geq 0} \theta_{n,s}^{1-2/p} \sum_{i \in N_n} \vert N_n(i;b_n)\vert \sum_{j\in N_n^\partial(i;s)} \vert N_n(j;b_n) \vert \label{eq:N*_N}\\
	&\qquad= O_{a.s.}\left(\frac{1}{n^3}\right) \sum_{s\geq 0} \theta_{n,s}^{1-2/p} \vert J_n(s;b_n) \vert,\notag
\end{align}
where the result in \eqref{eq:N*_N} holds because $d^*_n(i,j)\geq d_n(i,j)$ and, therefore, $N^*_n(i;s)\subset N_n(i;s)$. The rest of the proof is similar to that of the second part of Proposition \ref{prop:HAC} (see equations \eqref{eq:with_J}--\eqref{eq:end_HAC}).
\end{proof}

\section{Proofs of Auxiliary Results}\label{sec:proof_aux}
\begin{proof}[\textnbf{Proof of Lemma A.1}] Suppose w.l.o.g.\ that $\E[f(Y_{n,A},Z_1)\mid \C_n]=0$ and $\E[g(Y_{n,B},Z_2)\mid \C_n]=0$ a.s. By Lemma 1.3 in \cite{DaPrato:14:StochasticEq} we can approximate $Z_j$ by a sequence of simple functions $\{Z_{j,m}\}$ s.t.\ $\rho_j(Z_{j,m},Z_j)\searrow 0$ pointwise, and for each $m\ge 1$, $Z_{j,m}=\sum_{k=1}^m z_{j,k}\ind_{A_{j,k}}$, where $z_{j,k} \in \Z_j$, $A_{j,k}\in \C_n$ and $A_{j,k}\cap A_{j,l}=\varnothing$ for $k\ne l$. Then, letting $B_{k,l}\eqdef A_{1,k}\cap A_{2,l}$,
\begin{align*}
	\abs{\E[f(Y_{n,A},Z_{1,m})g(Y_{n,B},Z_{2,m})\mid \C_n]}&\le \sum_{k,l=1}^m \abs{\E[f(Y_{n,A},z_{1,k})g(Y_{n,B},z_{2,l})\mid \C_n]}\ind_{B_{k,l}} \\
	&\le \sum_{k,l=1}^m \psi_{a,b}(f^{z_{1,k}},g^{z_{2,l}})\ind_{B_{k,l}}\theta_{n,s} \\
	&=F(Z_{1,m},Z_{2,m})\theta_{n,s} \qtext{a.s.}
\end{align*}
The second inequality above is due to \eqref{eq:psi_dep_gen}. Consequently, the result follows by the conditional dominated convergence theorem.
\end{proof}

\begin{proof}[\textnbf{Proof of Theorem \ref{thm:cov_ineq1}}]
Fix $\kappa,\lambda\ge 1$ and let $\Xi\eqdef \{(\mu_{\xi,p},\mu_{\zeta,q})\in (0,\infty)^2\}$. Next we define $\xi'\eqdef \mu_{\xi,p}^{-1}\xi\ind_{\Xi}$,
\begin{alignat*}{2}
	&\xi_{\kappa}\eqdef (\varphi_{\kappa}\circ \mu_{\xi,p}^{-1}f)(Y_{n,A})\ind_{\Xi}, \quad &\xi_{\kappa}^*\eqdef \xi_{\kappa}-\E[\xi_{\kappa}\mid \C_n], \\
	&\hat{\xi}_{\kappa}\eqdef \xi'-\xi_{\kappa}, &\hat{\xi}_{\kappa}^*\eqdef \hat{\xi}_{\kappa}-\E[\hat{\xi}_{\kappa}\mid \C_n],
\end{alignat*}
and, similarly, $\zeta',\zeta_{\lambda},\zeta_{\lambda}^*,\hat{\zeta}_{\lambda}$, and $\hat{\zeta}^*_{\lambda}$, where we use $g$, $\mu_{\zeta,q}$, and $\lambda$ instead of $f$, $\mu_{\xi,p}$, and $\kappa$. First,
\begin{align*}
	\absin{\Cov(\xi',\zeta'\mid \C_n)}
	&= \absin{\E[(\xi_{\kappa}^*+\hat{\xi}_{\kappa}^*)(\zeta_{\lambda}^*+\hat{\zeta}_{\lambda}^*)\mid \C_n]} \\
	&\le \absin{\E[\xi_{\kappa}^*\zeta_{\lambda}^*\mid \C_n]}+\absin{\E([\xi_{\kappa}^*\hat{\zeta}_{\lambda}^*\mid \C_n]} \\
	&\quad+ \absin{\E[\hat{\xi}_{\kappa}^*\zeta_{\lambda}^*\mid \C_n]}+\absin{\E[\hat{\xi}_{\kappa}^*\hat{\zeta}_{\lambda}^*\mid \C_n]} \qtext{a.s.}
\end{align*}
Consider each term in the last inequality separately. By Lemma \ref{lemma:psi_dep} and Assumption \ref{assu:psi_dep_gen} we find that\footnote{
	Note that for $x\ge 0$ and $z\ne 0$, $\varphi_{x}\circ z^{-1}f=z^{-1}(\varphi_{xz}\circ f)$.
}
\begin{align*}
	\absin{\E[\xi_{\kappa}^*\zeta_{\lambda}^*\mid \C_n]}&\le \psi_{a,b}(\varphi_{\kappa}\circ \mu_{\xi,p}^{-1}f,\varphi_{\lambda}\circ \mu_{\zeta,q}^{-1}g)\theta_{n,s} \\
	&\le \frac{\kappa\lambda}{\mu_{\xi,p}\mu_{\zeta,q}}\overline{\psi}_{a,b}(\mu_{\xi,p},\mu_{\zeta,q})\theta_{n,s} \qtext{a.s.\ on \ $\Xi$}.
\intertext{As for the other terms, noticing that $\absin{\xi_{\kappa}^*}\le 2\kappa$ a.s., we have}
	\absin{\E[\xi_{\kappa}^*\hat{\zeta}_{\lambda}^*\mid \C_n]}&=\absin{\Cov(\xi_{\kappa}^*,\hat{\zeta}_{\lambda}^*\mid \C_n)}=\absin{\Cov(\xi_{\kappa}^*,\hat{\zeta}_{\lambda}\mid \C_n)} \\
	&\le \E[\absin{\xi_{\kappa}^*}\absin{\hat{\zeta}_{\lambda}}\mid \C_n]\le 2\kappa\E[\absin{\hat{\zeta}_{\lambda}}\mid \C_n] \\
	&\le 4\kappa\lambda^{1-q} \qtext{a.s.\ on \ $\Xi$}
	\intertext{because $\norm{\zeta'}_{C_n,q}=\ind_{\Xi}$ a.s.\ and}
	\E[\absin{\hat{\zeta}_{\lambda}}\mid \C_n]&= \E[\absin{\zeta'-\zeta_{\lambda}}\ind\{\zeta'>\lambda\}\mid \C_n] \\
	&\le \left(\E[\absin{\zeta'-\zeta_{\lambda}}^q\mid \C_n]\right)^{1/q}\left(\PM(\zeta'>\lambda\mid \C_n)\right)^{1-1/q} \\
	&\le 2\norm{\zeta'}_{\C_n,q}(\lambda^{-q}\E[\absin{\zeta'}^q\mid \C_n])^{1-1/q} \\
	&=2\lambda^{1-q} \qtext{a.s.\ on \ $\Xi$}.
\intertext{Similarly,}
	\absin{\E[\hat{\xi}_{\kappa}^* \zeta_{\lambda}^*\mid \C_n]}&\le 4\kappa^{1-p}\lambda \qtext{a.s.\ on \ $\Xi$}.
\intertext{Finally,}
	\absin{\E[\hat{\xi}_{\kappa}^* \hat{\zeta}_{\lambda}^*\mid \C_n]}&= \absin{\Cov(\hat{\xi}_{\kappa}^*, \hat{\zeta}_{\lambda}^*\mid \C_n)}=\absin{\Cov(\hat{\xi}_{\kappa},\hat{\zeta}_{\lambda}\mid \C_n)} \\
	&\le \absin{\E[\hat{\xi}_{\kappa} \hat{\zeta}_{\lambda}\mid \C_n]}+\E[\absin{\hat{\xi}_{\kappa}}\mid \C_n]\E[\absin{\hat{\zeta}_{\lambda}}\mid \C_n] \\
	&\le \absin{\E[\hat{\xi}_{\kappa} \hat{\zeta}_{\lambda}\mid \C_n]}+4\kappa^{1-p}\lambda^{1-q} \qtext{a.s.\ on \ $\Xi$},
\intertext{and for $p',q'$ s.t.\ $1/p'+1/q'=1-1/p-1/q$ we find that}
	\absin{\E[\hat{\xi}_{\kappa} \hat{\zeta}_{\lambda}\mid \C_n]}&\le \E[\absin{\hat{\xi}_{\kappa} \hat{\zeta}_{\lambda}}\mid \C_n] \\
	&\le \left(\E[\absin{\xi'-\xi_{\kappa}}^p\mid \C_n]\right)^{1/p}\left(\PM(\xi'>\kappa\mid \C_n)\right)^{1/p'} \\
	&\quad\times \left(\E[\absin{\zeta'-\zeta_{\lambda}}^q\mid \C_n]\right)^{1/q}\left(\PM(\zeta'>\lambda\mid \C_n)\right)^{1/q'} \\
	&\le 4\kappa^{-p/p'}\lambda^{-q/q'} \qtext{a.s.\ on \ $\Xi$}.
\end{align*}
Combining these inequalities and multiplying by $\mu_{\xi,p}\mu_{\zeta,q}$, we get
\begin{align}
	\label{eq:cov_bound1}
	\begin{aligned}
		\abs{\Cov(\xi,\zeta\mid \C_n)}&\le \overline{\psi}_{a,b}(\mu_{\xi,p}\mu_{\zeta,q}) \kappa\lambda\theta_{n,s}+4\mu_{\xi,p}\mu_{\zeta,q} \\
		&\quad\times \left(\kappa\lambda^{1-q}+\kappa^{1-p}\lambda+\kappa^{-p/p'}\lambda^{-q/q'}+\kappa^{1-p}\lambda^{1-q}\right) \qtext{a.s.\ on \ $\Xi$}.
	\end{aligned}
\end{align}

Since \eqref{eq:cov_bound1} holds for all $\kappa,\lambda\ge 1$ a.s.\ on $\Xi$, it also holds for random $\kappa$ and $\lambda$ a.s.\ on $\Xi'=\Xi\cap \{(\kappa,\lambda)\in [1,\infty)^2\}$. Thus, setting $\kappa=\underline{\theta}_{n,s}^{-1/p}$ and $\lambda=\underline{\theta}_{n,s}^{-1/q}$ we get \eqref{eq:cov_ineq1} on $\Xi'$. As for the set $\Xi\cap\Xi'^c$, note that $\Cov(\xi,\zeta\mid \C_n)=0$ a.s.\ on $\{\theta_{n,s}=0\}$. Similarly, $\Cov(\xi,\zeta\mid \C_n)=0$ a.s.\ on $\{\mu_{\xi,p}=0\}\cup \{\mu_{\zeta,q}=0\}$, and $\{\mu_{\xi,p}=\infty\}$ and $\{\mu_{\zeta,q}=\infty\}$ are null sets.
\end{proof}

\begin{proof}[\textnbf{Proof of Theorem \ref{thm:cov_ineq2}}] We reuse the notation and bounds established in the proof of Theorem \ref{thm:cov_ineq1}. In addition, let
\begin{alignat*}{2}
	&\xi_{\dbl\kappa}\eqdef \mu_{\xi,p}^{-1}f_{(\kappa\mu_{\xi,p},\kappa\gamma_1)}(Y_{n,A})\ind_{\Xi}, \quad &\xi_{\dbl\kappa}^*\eqdef \xi_{\dbl\kappa}-\E[\xi_{\dbl\kappa}\mid \C_n], \\
	&\hat{\xi}_{\dbl\kappa}\eqdef \xi_{\kappa}-\xi_{\dbl\kappa}, &\hat{\xi}_{\dbl\kappa}^*\eqdef \hat{\xi}_{\dbl\kappa}-\E[\hat{\xi}_{\dbl\kappa}\mid \C_n],
\end{alignat*}
and, similarly, $\zeta_{\dbl\lambda},\zeta_{\dbl\lambda}^*,\hat{\zeta}_{\dbl\lambda}$, and $\hat{\zeta}^*_{\dbl\lambda}$, where $f$, $\mu_{\xi,p}$, and $\gamma_1$ are replaced by $g$, $\mu_{\zeta,q}$, and $\gamma_2$, respectively. Then
\begin{align*}
	\absin{\E[\xi_{\kappa}^*\zeta_{\lambda}^*\mid \C_n]}&\le \absin{\E[\xi_{\dbl\kappa}^*\zeta_{\dbl\lambda}^*\mid \C_n]}+\absin{\E[\xi_{\dbl\kappa}^*\hat{\zeta}_{\dbl\lambda}^*\mid \C_n]} \\
	&\quad+ \absin{\E[\hat{\xi}_{\dbl\kappa}^*\zeta_{\dbl\lambda}^*\mid \C_n]}+\absin{\E[\hat{\xi}_{\dbl\kappa}^*\hat{\zeta}_{\dbl\lambda}^*\mid \C_n]} \qtext{a.s.}
\end{align*}

Let $\tilde{\Xi}\eqdef\Xi\cap \{(\gamma_1,\gamma_2)\in (0,\infty)^2\}$. By Lemma \ref{lemma:psi_dep} and Assumption \ref{assu:psi_dep_gen} we find that
\begin{align*}
	\absin{\E[\xi_{\dbl\kappa}^*\zeta_{\dbl\lambda}^*\mid \C_n]}&\le \frac{\kappa\lambda}{\mu_{\xi,p}\mu_{\zeta,q}}\widetilde{\psi}_{a,b}(\mu_{\xi,p},\mu_{\zeta,q},\gamma_1,\gamma_2)\theta_{n,s} \qtext{a.s.\ on \ $\tilde{\Xi}$}.
\end{align*}
Second, noticing that $\{\absin{\hat{\zeta}_{\dbl\lambda}}>0\}\subseteq \bigcup_{i\in B}\bigcup_{1\le k\le v}\{\absin{[Y_{n,i}]_k}>h_2(\lambda\gamma_2)\}$ ($\because \zeta_{\lambda}\ne \zeta_{\lambda\lambda}$ only if $Y_{n,B}\ne \varphi_{h_2(\lambda\gamma_2)}(Y_{n,B})$) and using the conditional Markov inequality,
\begin{align*}
	\absin{\E[\xi_{\dbl\kappa}^*\hat{\zeta}_{\dbl\lambda}^*\mid \C_n]}&= \absin{\E[\xi_{\dbl\kappa}^*\hat{\zeta}_{\dbl\lambda}\mid \C_n]}\le 2\kappa\E[\absin{\hat{\zeta}_{\dbl\lambda}}\mid \C_n] \\
	&\le 4\kappa\lambda\sum_{i\in B}\sum_{1\le k\le v}\PM\left(\absin{[Y_{n,i}]_k}>h_2(\lambda\gamma_2)\mid \C_n\right) \\
	&\le 4bv\cdot\kappa\lambda^{1-q} \qtext{a.s.\ on \ $\tilde{\Xi}$}.
\end{align*}
Similarly,
\begin{align*}
	\absin{\E[\hat{\xi}_{\dbl\kappa}^*\zeta_{\dbl\lambda}^*\mid \C_n]}&\le 4av\cdot\kappa^{1-p}\lambda \qtext{a.s.\ on \ $\tilde{\Xi}$},
\end{align*}
and for $p',q'$ s.t.\ $1/p'+1/q'=1-1/p-1/q$,
\begin{align*}
	\absin{\E[\hat{\xi}_{\dbl\kappa}^*\hat{\zeta}_{\dbl\lambda}^*\mid \C_n]}\le 4abv^2\left(\kappa^{-p/p'}\lambda^{-q/q'}+ \kappa^{1-p}\lambda^{1-q}\right) \qtext{a.s.\ on \ $\tilde{\Xi}$}.
\end{align*}
Finally, the result follows by modifying the inequality \eqref{eq:cov_bound1} established in the proof of Theorem \ref{thm:cov_ineq1} and choosing $\kappa=\underline{\theta}_{n,s}^{-1/p}$ and $\lambda=\underline{\theta}_{n,s}^{-1/q}$.
\end{proof}
\section{Additional Simulation Results}\label{sec:more_simulations}
In this section, we report additional simulation results. Table \ref{tab:coverage_1_5_1_7} reports the simulated coverage probabilities for the 95\% HAC-based confidence intervals with the constant for the bandwidth selection rule in equation \eqref{eq:hac_const} set to $1.7$, $1.8$, and $1.9$. Table \ref{tab:coverage_1_8_2_2} reports the same results for the constant set to $2.0$, $2.1$, and $2.2$. For both tables, $\varepsilon$ in \eqref{eq:hac_const} is set to $0.05$.

The results show similar patterns to those reported in the main text: while in the majority of the cases (and across all the considered bandwidth constants) the simulated coverage is close to the nominal $0.95$,  the performance of the HAC-based confidence intervals deteriorates for larger values of the denseness parameter $\lambda$ and the dependence parameter $\gamma$. The coverage improves with the sample size for all considered values of the constant in the bandwidth selection rule. The worst results are observed in the cases of $\lambda=5$ and $\gamma=0.5$. For example, in the case of the $\text{constant}=1.7$, the coverage probability is only $0.800$ for $n=500$, but it improves to $0.870$ for $n=5000$.

Figure \ref{fig:power} reports the simulated rejection probabilities for the two-sided HAC-based $t$-test of the null hypothesis  of a zero mean. We report the results for $\lambda=3$ and the bandwidth constant of $2.0$, and the figure shows that the probability of Type II error increases with the dependence parameter $\gamma$.

\begin{landscape}
\renewcommand{\arraystretch}{1.3}
\setlength{\tabcolsep}{5pt}
\begin{table}[htbp]
  \centering
  \caption{Simulated coverage probabilities of the 95\% HAC-based confidence intervals for different values of the denseness parameter $\lambda$, sample size $n$, the dependence parameter $\gamma$, and the bandwidth  $\text{constant}=1.7, 1.8,1.9$}
\small
    \resizebox{\columnwidth}{!}{\begin{tabular}{rcccrrrrrrrrrrrrrrrrrrrrr}
       \toprule                 &       &       &       & \multicolumn{6}{c}{$\gamma$}                     &       & \multicolumn{6}{c}{$\gamma$}                     &       & \multicolumn{6}{c}{$\gamma$}                     &  \\
\cmidrule{5-10}\cmidrule{12-17}\cmidrule{19-24}          & $\lambda$ & $n$     &       & \multicolumn{1}{c}{0.0} & \multicolumn{1}{c}{0.1} & \multicolumn{1}{c}{0.2} & \multicolumn{1}{c}{0.3} & \multicolumn{1}{c}{0.4} & \multicolumn{1}{c}{0.5} &       & \multicolumn{1}{c}{0.0} & \multicolumn{1}{c}{0.1} & \multicolumn{1}{c}{0.2} & \multicolumn{1}{c}{0.3} & \multicolumn{1}{c}{0.4} & \multicolumn{1}{c}{0.5} &       & \multicolumn{1}{c}{0.0} & \multicolumn{1}{c}{0.1} & \multicolumn{1}{c}{0.2} & \multicolumn{1}{c}{0.3} & \multicolumn{1}{c}{0.4} & \multicolumn{1}{c}{0.5} &  \\
\midrule   &       &       &       &       &       &       &       &       &       &       &       &       &       &       &       &       &       &       &       &       &       &       &       &  \\
     \noalign{\vskip-10pt}        &       &       &       & \multicolumn{6}{c}{\underline{$\text{constant}=1.7$}}              &       & \multicolumn{6}{c}{\underline{$\text{constant}=1.8$}}              &       & \multicolumn{6}{c}{\underline{$\text{constant}=1.9$}}            &  \\
   \noalign{\vskip-10pt}        &       &       &       &       &       &       &       &       &       &       &       &       &       &       &       &       &       &       &       &       &       &       &       &  \\
             & 1     & 500   &       & 0.948 & 0.944 & 0.948 & 0.944 & 0.947 & 0.944 &       & 0.948 & 0.944 & 0.948 & 0.944 & 0.947 & 0.944 &       & 0.948 & 0.944 & 0.948 & 0.944 & 0.947 & 0.944 &  \\
          & 1     & 1000  &       & 0.946 & 0.949 & 0.947 & 0.947 & 0.949 & 0.944 &       & 0.946 & 0.949 & 0.947 & 0.947 & 0.949 & 0.944 &       & 0.946 & 0.949 & 0.947 & 0.947 & 0.949 & 0.944 &  \\
          & 1     & 5000  &       & 0.949 & 0.948 & 0.952 & 0.951 & 0.950 & 0.947 &       & 0.949 & 0.948 & 0.952 & 0.951 & 0.950 & 0.947 &       & 0.949 & 0.948 & 0.952 & 0.951 & 0.950 & 0.947 &  \\
          &       &       &       &       &       &       &       &       &       &       &       &       &       &       &       &       &       &       &       &       &       &       &       &  \\
          & 2     & 500   &       & 0.939 & 0.934 & 0.934 & 0.932 & 0.925 & 0.910 &       & 0.938 & 0.933 & 0.933 & 0.931 & 0.925 & 0.911 &       & 0.937 & 0.932 & 0.934 & 0.931 & 0.924 & 0.912 &  \\
          & 2     & 1000  &       & 0.944 & 0.943 & 0.941 & 0.940 & 0.933 & 0.918 &       & 0.944 & 0.943 & 0.940 & 0.940 & 0.933 & 0.920 &       & 0.943 & 0.943 & 0.940 & 0.938 & 0.933 & 0.921 &  \\
          & 2     & 5000  &       & 0.947 & 0.946 & 0.946 & 0.946 & 0.943 & 0.935 &       & 0.947 & 0.946 & 0.946 & 0.946 & 0.944 & 0.937 &       & 0.947 & 0.947 & 0.946 & 0.946 & 0.944 & 0.938 &  \\
          &       &       &       &       &       &       &       &       &       &       &       &       &       &       &       &       &       &       &       &       &       &       &       &  \\
          & 3     & 500   &       & 0.940 & 0.931 & 0.932 & 0.921 & 0.903 & 0.865 &       & 0.937 & 0.930 & 0.930 & 0.920 & 0.903 & 0.867 &       & 0.937 & 0.928 & 0.928 & 0.920 & 0.904 & 0.868 &  \\
          & 3     & 1000  &       & 0.940 & 0.941 & 0.939 & 0.925 & 0.913 & 0.882 &       & 0.939 & 0.940 & 0.939 & 0.925 & 0.914 & 0.884 &       & 0.938 & 0.939 & 0.937 & 0.924 & 0.915 & 0.887 &  \\
          & 3     & 5000  &       & 0.946 & 0.949 & 0.944 & 0.936 & 0.930 & 0.911 &       & 0.946 & 0.948 & 0.944 & 0.936 & 0.930 & 0.914 &       & 0.945 & 0.947 & 0.944 & 0.937 & 0.931 & 0.916 &  \\
          &       &       &       &       &       &       &       &       &       &       &       &       &       &       &       &       &       &       &       &       &       &       &       &  \\
          & 4     & 500   &       & 0.936 & 0.928 & 0.923 & 0.908 & 0.884 & 0.826 &       & 0.933 & 0.925 & 0.921 & 0.907 & 0.884 & 0.829 &       & 0.932 & 0.923 & 0.920 & 0.907 & 0.885 & 0.831 &  \\
          & 4     & 1000  &       & 0.941 & 0.935 & 0.930 & 0.925 & 0.898 & 0.853 &       & 0.940 & 0.935 & 0.930 & 0.925 & 0.899 & 0.856 &       & 0.938 & 0.934 & 0.929 & 0.924 & 0.900 & 0.858 &  \\
          & 4     & 5000  &       & 0.944 & 0.945 & 0.942 & 0.936 & 0.917 & 0.888 &       & 0.944 & 0.945 & 0.942 & 0.937 & 0.919 & 0.892 &       & 0.944 & 0.944 & 0.942 & 0.937 & 0.921 & 0.894 &  \\
          &       &       &       &       &       &       &       &       &       &       &       &       &       &       &       &       &       &       &       &       &       &       &       &  \\
          & 5     & 500   &       & 0.933 & 0.927 & 0.917 & 0.900 & 0.864 & 0.800 &       & 0.932 & 0.926 & 0.916 & 0.900 & 0.866 & 0.805 &       & 0.930 & 0.923 & 0.915 & 0.898 & 0.865 & 0.808 &  \\
          & 5     & 1000  &       & 0.942 & 0.935 & 0.933 & 0.917 & 0.884 & 0.828 &       & 0.940 & 0.932 & 0.931 & 0.915 & 0.886 & 0.834 &       & 0.937 & 0.930 & 0.931 & 0.915 & 0.887 & 0.839 &  \\
          & 5     & 5000  &       & 0.950 & 0.944 & 0.942 & 0.930 & 0.914 & 0.870 &       & 0.950 & 0.944 & 0.942 & 0.931 & 0.916 & 0.875 &       & 0.949 & 0.943 & 0.942 & 0.930 & 0.919 & 0.881 &  \\
          &       &       &       &       &       &       &       &       &       &       &       &       &       &       &       &       &       &       &       &       &       &       &       &  \\
         \bottomrule
    \end{tabular}}%
  \label{tab:coverage_1_5_1_7}%
\end{table}%
\end{landscape}

\begin{landscape}
\renewcommand{\arraystretch}{1.3}
\setlength{\tabcolsep}{5pt}
\begin{table}[htbp]
  \centering
  \caption{Simulated coverage probabilities of the 95\% HAC-based confidence intervals for different values of the denseness parameter $\lambda$, sample size $n$, the dependence parameter $\gamma$, and the bandwidth  $\text{constant}=2.0, 2.1,2.2$}
\small
    \resizebox{\columnwidth}{!}{\begin{tabular}{rrrrrrrrrrrrrrrrrrrrrrr}
    \toprule
          &       &       & \multicolumn{6}{c}{$\gamma$}                     &       & \multicolumn{6}{c}{$\gamma$}                     &       & \multicolumn{6}{c}{$\gamma$} \\
\cmidrule{4-9}\cmidrule{11-16}\cmidrule{18-23}    \multicolumn{1}{c}{$\lambda$} & \multicolumn{1}{c}{$n$} &       & \multicolumn{1}{c}{0.0} & \multicolumn{1}{c}{0.1} & \multicolumn{1}{c}{0.2} & \multicolumn{1}{c}{0.3} & \multicolumn{1}{c}{0.4} & \multicolumn{1}{c}{0.5} &       & \multicolumn{1}{c}{0.0} & \multicolumn{1}{c}{0.1} & \multicolumn{1}{c}{0.2} & \multicolumn{1}{c}{0.3} & \multicolumn{1}{c}{0.4} & \multicolumn{1}{c}{0.5} &       & \multicolumn{1}{c}{0.0} & \multicolumn{1}{c}{0.1} & \multicolumn{1}{c}{0.2} & \multicolumn{1}{c}{0.3} & \multicolumn{1}{c}{0.4} & \multicolumn{1}{c}{0.5} \\
    \midrule
          &       &       &       &       &       &       &       &       &       &       &       &       &       &       &       &       &       &       &       &       &       &  \\
      \noalign{\vskip-10pt}       &       &       & \multicolumn{6}{c}{\underline{$\text{constant}=2.0$}}              &       & \multicolumn{6}{c}{\underline{$\text{constant}=2.1$} }             &       & \multicolumn{6}{c}{\underline{$\text{constant}=2.2$}} \\
    \noalign{\vskip-10pt}         &       &       &       &       &       &       &       &       &       &       &       &       &       &       &       &       &       &       &       &       &       &  \\
   1     & 500   &       & 0.948 & 0.944 & 0.948 & 0.944 & 0.947 & 0.944 &       & 0.948 & 0.944 & 0.948 & 0.944 & 0.947 & 0.944 &       & 0.948 & 0.944 & 0.948 & 0.944 & 0.947 & 0.944 \\
    1     & 1000  &       & 0.946 & 0.949 & 0.947 & 0.947 & 0.949 & 0.944 &       & 0.946 & 0.949 & 0.947 & 0.947 & 0.949 & 0.944 &       & 0.946 & 0.949 & 0.947 & 0.947 & 0.949 & 0.944 \\
    1     & 5000  &       & 0.949 & 0.948 & 0.952 & 0.951 & 0.950 & 0.947 &       & 0.949 & 0.948 & 0.952 & 0.951 & 0.950 & 0.947 &       & 0.949 & 0.948 & 0.952 & 0.951 & 0.950 & 0.947 \\
          &       &       &       &       &       &       &       &       &       &       &       &       &       &       &       &       &       &       &       &       &       &  \\
    2     & 500   &       & 0.936 & 0.931 & 0.933 & 0.931 & 0.924 & 0.912 &       & 0.935 & 0.931 & 0.932 & 0.930 & 0.924 & 0.912 &       & 0.935 & 0.930 & 0.932 & 0.930 & 0.923 & 0.912 \\
    2     & 1000  &       & 0.943 & 0.942 & 0.940 & 0.938 & 0.933 & 0.922 &       & 0.943 & 0.942 & 0.939 & 0.938 & 0.933 & 0.922 &       & 0.942 & 0.941 & 0.939 & 0.938 & 0.933 & 0.922 \\
    2     & 5000  &       & 0.947 & 0.947 & 0.945 & 0.946 & 0.944 & 0.938 &       & 0.947 & 0.947 & 0.945 & 0.946 & 0.945 & 0.939 &       & 0.947 & 0.946 & 0.945 & 0.947 & 0.945 & 0.939 \\
          &       &       &       &       &       &       &       &       &       &       &       &       &       &       &       &       &       &       &       &       &       &  \\
    3     & 500   &       & 0.936 & 0.926 & 0.926 & 0.918 & 0.903 & 0.869 &       & 0.933 & 0.923 & 0.924 & 0.917 & 0.902 & 0.870 &       & 0.931 & 0.921 & 0.922 & 0.916 & 0.900 & 0.870 \\
    3     & 1000  &       & 0.938 & 0.938 & 0.936 & 0.923 & 0.915 & 0.889 &       & 0.937 & 0.935 & 0.935 & 0.922 & 0.915 & 0.890 &       & 0.936 & 0.933 & 0.934 & 0.921 & 0.914 & 0.891 \\
    3     & 5000  &       & 0.944 & 0.947 & 0.943 & 0.937 & 0.932 & 0.919 &       & 0.943 & 0.946 & 0.942 & 0.937 & 0.932 & 0.920 &       & 0.943 & 0.946 & 0.942 & 0.936 & 0.932 & 0.921 \\
          &       &       &       &       &       &       &       &       &       &       &       &       &       &       &       &       &       &       &       &       &       &  \\
    4     & 500   &       & 0.929 & 0.921 & 0.918 & 0.905 & 0.885 & 0.833 &       & 0.926 & 0.918 & 0.915 & 0.902 & 0.884 & 0.834 &       & 0.922 & 0.915 & 0.912 & 0.901 & 0.882 & 0.836 \\
    4     & 1000  &       & 0.936 & 0.931 & 0.927 & 0.923 & 0.900 & 0.860 &       & 0.934 & 0.929 & 0.925 & 0.923 & 0.900 & 0.863 &       & 0.932 & 0.928 & 0.923 & 0.921 & 0.900 & 0.864 \\
    4     & 5000  &       & 0.943 & 0.943 & 0.941 & 0.937 & 0.921 & 0.898 &       & 0.942 & 0.943 & 0.941 & 0.937 & 0.922 & 0.900 &       & 0.941 & 0.942 & 0.940 & 0.937 & 0.924 & 0.903 \\
          &       &       &       &       &       &       &       &       &       &       &       &       &       &       &       &       &       &       &       &       &       &  \\
    5     & 500   &       & 0.928 & 0.919 & 0.910 & 0.894 & 0.864 & 0.810 &       & 0.924 & 0.915 & 0.908 & 0.892 & 0.864 & 0.811 &       & 0.922 & 0.913 & 0.906 & 0.891 & 0.862 & 0.813 \\
    5     & 1000  &       & 0.935 & 0.929 & 0.930 & 0.913 & 0.889 & 0.842 &       & 0.935 & 0.928 & 0.929 & 0.913 & 0.889 & 0.845 &       & 0.932 & 0.927 & 0.927 & 0.912 & 0.889 & 0.846 \\
    5     & 5000  &       & 0.949 & 0.942 & 0.942 & 0.931 & 0.920 & 0.885 &       & 0.948 & 0.942 & 0.941 & 0.931 & 0.921 & 0.888 &       & 0.948 & 0.941 & 0.941 & 0.931 & 0.922 & 0.890 \\
          &       &       &       &       &       &       &       &       &       &       &       &       &       &       &       &       &       &       &       &       &       &  \\
       \bottomrule
    \end{tabular}}%
  \label{tab:coverage_1_8_2_2}%
\end{table}%

\end{landscape}

\begin{figure}[t]
	\includegraphics{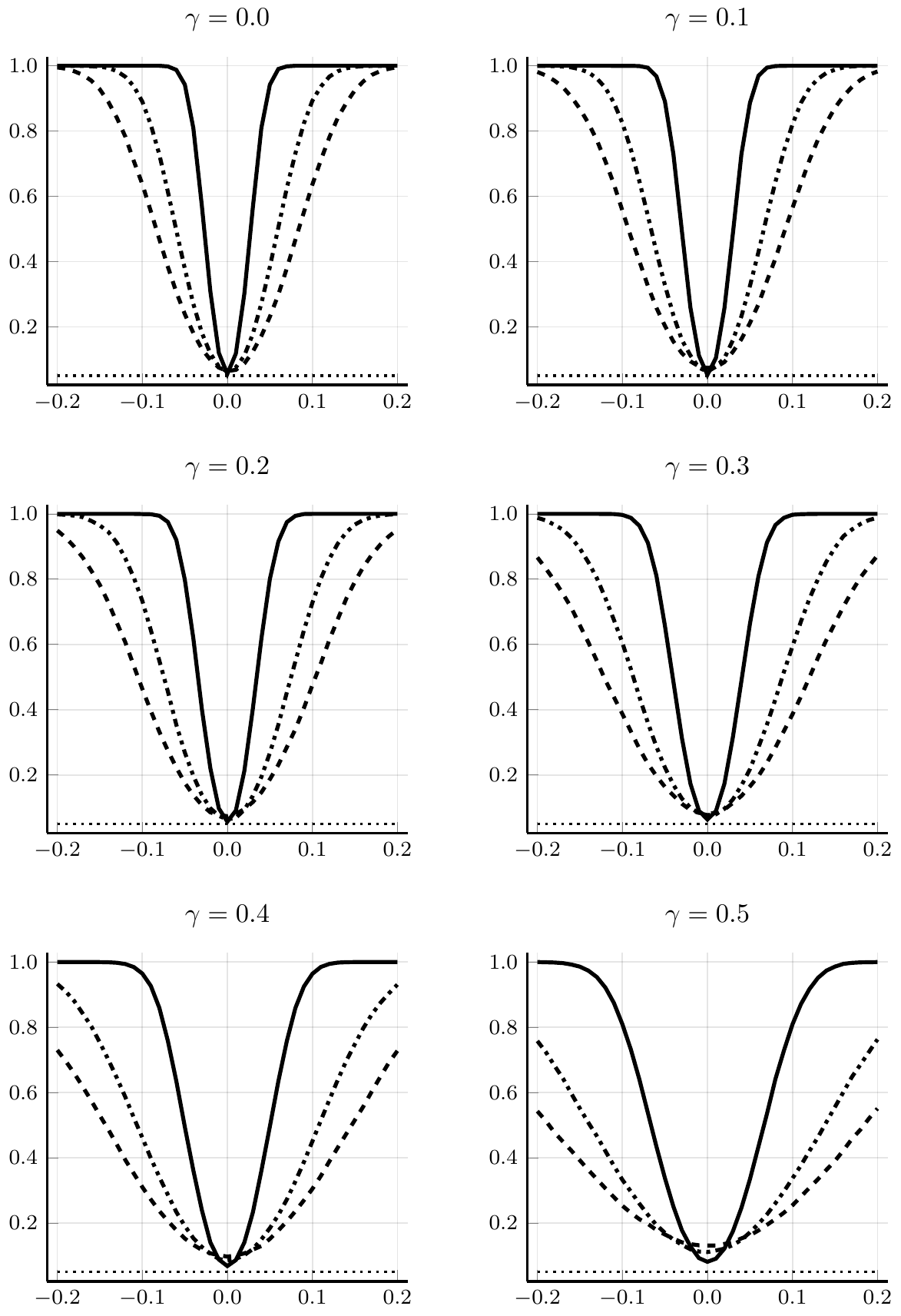}
	\caption{Simulated rejection probabilities for the 5\% two-sided HAC-based $t$-test of the null hypothesis of a zero mean for different values of the dependence parameter $\gamma$ and sample sizes $n=500$ (\textit{dashed line}), $n=1000$ (\textit{dot-dashed line}), and $n=5000$ (\textit{solid line}). The denseness parameter $\lambda=3$, and the bandwidth constant is set to $2.0$.\label{fig:power}}
\end{figure}

\clearpage
\putbib[network_dep]

\end{bibunit}

\end{document}